\documentclass[11pt]{article}

% Below are some necessary packages for your course.
\usepackage{amsfonts,latexsym,amsthm,amssymb,amsmath,amscd,euscript,tikz,mathtools}
\usepackage{framed}
\usepackage[margin=1in]{geometry}
\usepackage{color}
\usepackage[colorlinks=true,citecolor=blue,linkcolor=blue]{hyperref}
\usepackage{enumitem}
\usepackage{siunitx}
\usepackage{textcomp}
\usepackage{physics}
\usepackage{mathrsfs}
\usepackage{dsfont}
\usepackage[ruled,vlined,linesnumbered]{algorithm2e}
\usepackage{titlesec}
\usepackage{tikz-cd}
\usepackage{thmtools}
\usepackage{thm-restate}
\usepackage{cleveref}
\usepackage{graphicx}
\usetikzlibrary{positioning,chains,fit,shapes,calc}

\allowdisplaybreaks[1]

% \titleformat{\section}
% {\large\bfseries}{Problem \thesection}{1em}{}
% \titleformat{\subsection}[runin]
% {\bfseries}{\thesubsection}{1em}{}

% Below are the theorem, definition, example, lemma, etc. body types.

\newtheorem{theorem}{Theorem}
\newtheorem{proposition}[theorem]{Proposition}
\newtheorem{lemma}[theorem]{Lemma}
\newtheorem{claim}[theorem]{Claim}

\theoremstyle{definition}
\newtheorem{definition}[theorem]{Definition}
\newtheorem{example}[theorem]{Example}
\newtheorem{remark}[theorem]{Remark}

\theoremstyle{remark}
% \newtheorem*{remark}{Remark}

% You can define new commands to make your life easier.
\newcommand{\cA}{\mathcal{A}}\newcommand{\cB}{\mathcal{B}}
\newcommand{\cC}{\mathcal{C}}

\newcommand{\bE}{\mathbb{E}}\newcommand{\bF}{\mathbb{F}}

\newcommand{\bN}{\mathbb{N}}

\newcommand{\bR}{\mathbb{R}}

\newcommand{\bZ}{\mathbb{Z}}

\newcommand{\1}{\mathds{1}}

\newcommand{\poly}{\operatorname{poly}}

\newcommand{\Maj}{\operatorname{Majority}}
\newcommand{\llog}{\eta}

% We can even define a new command for \newcommand!
\newcommand{\nc}{\newcommand}

% Make enumerate use lettering
% \renewcommand{\theenumi}{\alph{enumi}}

% If you want a new function, use operatorname to define that function (don't use \text)
\nc{\on}{\operatorname}
\nc{\Spec}{\on{Spec}}
\nc{\Aut}{\textit{Aut}}
\nc{\id}{\textit{id}}
\nc{\chr}{\on{char}}
\nc{\im}{\on{im}}
\nc{\Hom}{\on{Hom}}
\nc{\lcm}{\on{lcm}}
\nc{\dual}[1]{\prescript{t}{}{#1}}
\nc{\transpose}[1]{{#1}^{\intercal}}
\nc{\Sym}{\on{Sym}}
\nc{\End}{\on{End}}
\nc{\stab}{\on{stab}}
\nc{\Li}{\on{Li}}
\nc{\spn}{\on{span}}
\nc{\sgn}{\on{sgn}}
\nc{\supp}{\on{supp}}
\nc{\Unif}{\on{Unif}}

% Math operators

% Use \numberthis to number one equation in an align

\title{
  Decoding Quasi-Cyclic Quantum LDPC Codes\thanks{Research supported in part by a ONR grant N00014-24-1-2491, a UC Noyce initiative award, and a Simons Investigator award. L.~Golowich is supported by a National Science Foundation Graduate Research Fellowship under Grant No.~DGE 2146752.} % \\ *DRAFT*
  % Efficient Quantum Error-Correction of Cycle-Product Codes \\ via Classical Noisy-Syndrome Decoding
}
% TODO: remove for double blind
\author{Louis Golowich \\
  UC Berkeley \\
  \href{mailto:lgolowich@berkeley.edu}{\texttt{lgolowich@berkeley.edu}}
  \and
  Venkatesan Guruswami \\
  UC Berkeley \\
  \href{mailto:venkatg@berkeley.edu}{\texttt{venkatg@berkeley.edu}}
}
% \date{}

\parskip=0.5ex
\begin{document}

\pagenumbering{gobble}

\maketitle
\thispagestyle{empty}

\begin{abstract}
  Quantum low-density parity-check (qLDPC) codes are an important component in the quest for quantum fault tolerance. Dramatic recent progress on qLDPC codes has led to constructions which are asymptotically good, and which admit linear-time decoders to correct errors affecting a constant fraction of codeword qubits~\cite{panteleev_asymptotically_2022,leverrier_decoding_2023,dinur_good_2023,gu_efficient_2023}. These constructions, while theoretically explicit, rely on inner codes with strong properties only shown to exist by probabilistic arguments, resulting in lengths that are too large to be practically relevant. In practice, the surface/toric codes, which are the product of two repetition codes, are still often the qLDPC codes of choice.

  A construction preceding~\cite{panteleev_asymptotically_2022} based on the lifted product of an expander-based classical LDPC code with a repetition code~\cite{panteleev_quantum_2022} achieved a near-linear distance (of $\Omega(N/\log N)$ where $N$ is the number of codeword qubits), and avoids the need for such intractable inner codes. Our main result is an efficient decoding algorithm for these codes that corrects $\Theta(N/\log N)$ adversarial errors. En route, we give such an algorithm for the hypergraph product version these codes, which have weaker $\Theta(\sqrt{N})$ distance (but are simpler). Our decoding algorithms leverage the fact that the codes we consider are \textit{quasi-cyclic}, meaning that they respect a cyclic group symmetry.

  Since the repetition code is not based on expanders, previous approaches to decoding expander-based qLDPC codes, which typically worked by greedily flipping code bits to reduce some potential function, do not apply in our setting. Instead, we reduce our decoding problem (in a black-box manner) to that of decoding classical expander-based LDPC codes under noisy parity-check syndromes. For completeness, we also include a treatment of such classical noisy-syndrome decoding that is sufficient for our application to the quantum setting.
\end{abstract}

\newpage

% \enlargethispage{1cm}
\tableofcontents

\newpage

\pagenumbering{arabic}

\section{Introduction}
\label{sec:intro}
Quantum low-density parity-check (qLDPC) codes provide a particularly promising avenue for achieving low-overhead quantum error correction. Alongside code design, one of the most important challenges associated with qLDPC codes is the design of efficient decoding algorithms, which are needed to correct errors more quickly than they occur in quantum devices. In this paper, we present new efficient algorithms to decode qLDPC codes arising from a product of a classical LDPC code and a repetition code against a number of adversarial errors growing linearly in the code distance. As described below, the problem of decoding such quantum codes arise in various settings of interest, and even permits a purely classical interpretation. To the best of our knowledge, our results provide the first known polynomial-time decoding algorithms for these codes.

A qLDPC code is a quantum code for which errors on code states can be corrected by first performing some constant-weight quantum stabilizer (i.e.~parity-check) measurements, and then running a classical decoding procedure that processes these measurement outcomes to identify an appropriate correction that will revert the effect of the error. The stabilizer measurements are typically efficient\footnote{Assuming an architecture permitting arbitrary qubit connectivity.}, at least in theory, as they by definition each only involve a constant number of qubits. Therefore a key challenge in decoding qLDPC codes is to define a classical decoding algorithm, which receives as input the stabilizer measurement outcomes, called the \textit{syndrome}. This algorithm must output a Pauli correction that can be applied to revert the effect of the physical error on the code qubits.

We typically want to design efficient decoding algorithms for qLDPC codes with large \textit{distance} and \textit{dimension} relative to the \textit{length}, or number of physical qubits. The distance of a code measures the maximum number of corruptions from which the original code state can be recovered information-theoretically (but possibly inefficiently). In particular, codes of distance $D$ permit information-theoretic error-correction from adversarial errors acting on up to $(D-1)/2$ physical code qubits. Meanwhile, the dimension of a code equals the number of logical qubits that can be encoded in the code.

For over two decades, there were no known qLDPC codes of length $N$ and distance above $\tilde{O}(\sqrt{N})$, which up to poly-logarithmic factors is the distance achieved by Kitaev's Toric code \cite{kitaev_fault-tolerant_2003}. However, a recent breakthrough line of work achieved the first qLDPC codes of distance $N^{1/2+\Omega(1)}$ \cite{hastings_fiber_2021,breuckmann_balanced_2021,hastings_quantum_2023}, culminating first in qLDPC codes of nearly linear distance $D=\Theta(N/\log N)$ with dimension $K=\Theta(\log N)$ \cite{panteleev_quantum_2022}, and then finally of linear distance $D=\Theta(N)$ and dimension $K=\Theta(N)$ (i.e.~\textit{asymptotically good}) \cite{panteleev_asymptotically_2022}. The construction of \cite{panteleev_asymptotically_2022}, along with some spinoff constructions \cite{leverrier_quantum_2022-1,dinur_good_2023}, remain the only known linear-distance qLDPC codes. Meanwhile, the codes of \cite{panteleev_quantum_2022} remain the only other known construction of qLDPC codes with almost linear distance $D\geq N^{1-o(1)}$.

The codes of \cite{panteleev_asymptotically_2022,leverrier_quantum_2022-1,dinur_good_2023}, though asymptotically good in theory, are constructed using inner codes for which the only known constructions are randomized and prohibitively large in practice. In contrast, the nearly-linear-distance codes of \cite{panteleev_quantum_2022} are constructed as a product of an expander-based classical LDPC code with a repetition code, and thereby avoid the need for such intractable inner code properties. These codes of \cite{panteleev_quantum_2022} are said to be \textit{quasi-cyclic}, as they respect the action of a large cyclic group, in part due to the cyclic symmetry of the repetition code.

The asymptotically good codes of \cite{panteleev_asymptotically_2022,leverrier_quantum_2022-1,dinur_good_2023} have been shown to have linear-time decoders \cite{gu_efficient_2023,leverrier_decoding_2023,dinur_good_2023}, which can be parallelized to run in logarithmic time \cite{leverrier_decoding_2023}. However, to the best of our knowledge, there were no known polynomial-time decoding algorithms for the nearly-linear distance quasi-cyclic codes of \cite{panteleev_quantum_2022}. Such a decoder is our principal contribution:

\begin{theorem}[Informal statement of Theorem~\ref{thm:lpdec} with Proposition~\ref{prop:classtan}]
  \label{thm:lpdecinf}
  There exists an explicit family of qLDPC codes $\cC$ of length $N\rightarrow\infty$, dimension $K=\Theta(\log N)$, and distance $D=\Theta(N/\log N)$ obtained by instantiating the construction of \cite{panteleev_quantum_2022}, such that the following holds. Fix any constants $0<\epsilon\leq 1/2$ and $\epsilon'>0$. Then there is a $O(N^{2+O(\epsilon+\epsilon')})$-time randomized decoding algorithm for $\cC$ that successfully decodes against $\Theta(\epsilon D)$ adversarial errors with probability $\geq 1-2^{-N^{\epsilon'}}$.
\end{theorem}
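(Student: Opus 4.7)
The strategy is to exploit the CSS structure of the lifted product code together with the simple form of the repetition code factor to reduce the quantum decoding problem to $\ell = \Theta(\log N)$ classical decoding problems for an expander Tanner code with noisy syndromes, and then to invoke the noisy-syndrome decoder of Proposition~\ref{prop:classtan} on each.

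\textit{Setup and column-wise reduction.} Instantiate $\mathcal{C}$ as the lifted product of a quasi-cyclic expander-based LDPC code with check matrix $H_1 \in \mathbb{F}_2^{m_1 \times n_1}$ and the length-$\ell$ repetition code with check $H_2 \in \mathbb{F}_2^{(\ell-1) \times \ell}$, choosing $\ell = \Theta(\log n_1)$ so that $N = \Theta(n_1 \log n_1)$, $K = \Theta(\log N)$, and $D = \Theta(N/\log N)$. The qubits partition as $[n_1] \times [\ell] \sqcup [m_1] \times [\ell - 1]$, and an error $(E_L, E_R)$ on the two blocks produces the $X$-syndrome matrix
\[
S \;=\; H_1 E_L + E_R H_2 \;\in\; \mathbb{F}_2^{m_1 \times \ell}.
\]
By CSS it suffices to describe the $Z$-error decoder. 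The $j$-th column of $S$ satisfies
\[
S^{(j)} \;=\; H_1 E_L^{(j)} \;+\; \bigl(E_R^{(j-1)} + E_R^{(j)}\bigr)
\]
with $E_R^{(0)} := E_R^{(\ell)} := 0$, which exhibits recovery of the $j$-th column of $E_L$ as a classical Tanner-code decoding problem with syndrome noise equal to the adjacent columns of $E_R$.

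\textit{Invoking the classical decoder and recovering $E_R$.} Apply Proposition~\ref{prop:classtan} to each of the $\ell$ columns, recovering $E_L^{(j)}$ modulo $\ker H_1$ (a difference which will be absorbed into stabilizer freedom). The total per-column weights sum to $|E_L| + 2|E_R| = O(\epsilon n_1)$, so the per-column combined error plus syndrome noise is small on average; a union bound over the $\ell$ columns turns the per-column failure probability of Proposition~\ref{prop:classtan} into the claimed $2^{-N^{\epsilon'}}$ overall bound. Once $E_L$ is known, I solve $E_R H_2 = S - H_1 E_L$ row-by-row; since $H_2$ has full row rank $\ell - 1$, right-multiplication by $H_2$ is injective on each row, determining $E_R$ uniquely in $O(N m_1)$ time. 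The resulting $(E_L, E_R)$ agrees with the true error modulo $Z$-stabilizers, so the corresponding Pauli correction succeeds.

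\textit{Main obstacle.} The principal subtlety is controlling per-column error weight when $\epsilon$ approaches $1/2$: an adversary could concentrate most of the $\Theta(\epsilon n_1)$ errors into a single column and saturate that column's tolerance threshold. Here the quasi-cyclic symmetry of $\mathcal{C}$ is essential---pre-applying a uniformly random element of the cyclic automorphism group, which is an isomorphism of the code, permutes errors across columns before decoding, and a standard concentration argument keeps each column's combined weight close to the mean $O(\epsilon n_1 / \ell)$ with the required exponential tail. The per-column randomized cost of Proposition~\ref{prop:classtan}, whose exponent depends on $\epsilon$ and $\epsilon'$ through the tolerated fraction of syndrome noise and the expansion parameters, summed over the $\ell$ columns and combined with the linear-algebra steps, then yields the $O(N^{2 + O(\epsilon + \epsilon')})$ runtime.
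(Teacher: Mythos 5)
Your reduction decodes the wrong object. The column-wise identity $S^{(j)} = H_1 E_L^{(j)} + (E_R^{(j-1)}+E_R^{(j)})$ is the structure of the \emph{hypergraph} product $\cA\otimes_{\bF_2}\cB$, not of the lifted product $\cA\otimes_{R_\ell}\cB$ that Theorem~\ref{thm:lpdecinf} is about. In the lifted product the classical factor is a chain complex over $R_\ell$ with only $\Theta(\log\ell)$ generators per side, and its boundary map has polynomial entries in $X$: each Tanner-code check couples bits across many cyclic ``columns,'' so the syndrome $s=(1+X)x+\partial^{\cA}y$ does not split into $\ell$ independent classical decoding instances. Your parameter choice reflects the same confusion: taking a length-$n_1$ classical code and a length-$\ell=\Theta(\log n_1)$ repetition code in a product gives (by Proposition~\ref{prop:kunneth} and Proposition~\ref{prop:hgpdistance}) dimension $\Theta(N/\log N)$ and distance $O(\log N)$, essentially swapping the claimed $K=\Theta(\log N)$ and $D=\Theta(N/\log N)$; the near-linear distance of \cite{panteleev_quantum_2022} is only achieved by the $R_\ell$-lifted product with $\ell$ as the \emph{large} parameter, which is exactly the regime where your decomposition is unavailable.

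Even granting the hypergraph-product setting, the core step fails, and it is precisely the obstacle the paper's argument is built to overcome. The noisy-syndrome decoder of Proposition~\ref{prop:classtan} applied to column $j$ returns an estimate within $\gamma\,|E_R^{(j-1)}+E_R^{(j)}|$ of $E_L^{(j)}$ in \emph{arbitrary} positions; this residue is not an element of $\ker H_1$ and is not absorbable into stabilizers (stabilizer freedom only shifts the $E_L$ block by column differences $(I\otimes\partial^{\cB})z$ coupled with a matching shift of $E_R$). Summing over columns, the total error in your estimate is $\Theta(|E_R|)$, which can be as large as $|E_L|$ itself, so nothing has been gained --- this is why the paper decodes \emph{prefix sums} of the syndrome (and, for the lifted product, iteratively doubles the prefix length over $\Theta(\log\ell)$ rounds with a majority-vote step to fill the periodic gaps), rather than individual columns. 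Finally, your probabilistic accounting has no basis: the decoder of Proposition~\ref{prop:classtan} is deterministic and linear-time with no dependence on $\epsilon,\epsilon'$, so there is no per-column failure probability to union-bound; a single random cyclic shift merely translates a concentrated adversarial error pattern and does not equidistribute it over columns (in the paper the shift serves only to zero out the wrap-around term $x_{j-1}$); and the bounds $O(N^{2+O(\epsilon+\epsilon')})$ and $1-2^{-N^{\epsilon'}}$ in the paper arise from amplifying, by $\poly(\ell)$ independent repetitions, a weak decoder whose $\Theta(\log\ell)$ iterations each succeed with probability $1-\epsilon$ --- a mechanism entirely absent from your proposal.
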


In words, Theorem~\ref{thm:lpdecinf} provides a close-to-quadratic time randomized algorithm that decodes the codes of \cite{panteleev_quantum_2022} from a linear number of errors with respect to the code distance, with just an exponentially small probability of a decoding failure. At a very high level, our algorithm works by making appropriate black-box calls to a classical LDPC code decoder that can handle errors on both code bits and parity-check bits.

We emphasize that Theorem~\ref{thm:lpdecinf} provides a decoding algorithm that corrects against arbitrary adversarial errors, not simply randomized errors. The randomization and exponentially small failure probability in Theorem~\ref{thm:lpdecinf} instead arise from a randomized component of the decoding algorithm. As shown in Section~\ref{sec:lpdec}, we depress the failure probability to be exponentially small using standard amplification techniques; we can similarly detect decoding failures when they do occur.

\subsection{Overview of QLDPC Code Constructions}
\label{sec:introconstruct}
The nearly linear-distance quasi-cyclic qLDPC codes of \cite{panteleev_quantum_2022} in Theorem~\ref{thm:lpdecinf} are constructed as the \textit{lifted product} (see Section~\ref{sec:lpdef}) of a classical Tanner code (on an expander graph) with a repetition code. Lifted products provide a general means for constructing a quantum code from two classical codes that respect a group symmetry.

The asymptotically good qLDPC codes of \cite{panteleev_asymptotically_2022,leverrier_quantum_2022-1,dinur_good_2023} are also constructed using a lifted product, but of two classical Tanner codes on expander graphs (see also \cite{tillich_quantum_2014}). Meanwhile, the Toric code \cite{kitaev_fault-tolerant_2003} is constructed as a product of two repetition codes. We remark that the Toric code, along with the closely related surface code \cite{bravyi_quantum_1998}, has dimension $K=\Theta(1)$, distance $\Theta(\sqrt N)$, was essentially the state-of-the-art qLDPC code for many years, and remains integral to practical quantum error correction efforts.

To summarize, many qLDPC codes of interest in the literature are constructed either as a product of two expander-based classical LDPC codes, or as a product of two repetition codes. The quasi-cyclic codes of \cite{panteleev_quantum_2022} interpolate between these two constructions by taking a product of an expander-based classical LDPC code with a repetition code.

Our work initiates the study of efficiently decoding such products of an expander-based code with a repetition code. Our techniques are applicable beyond the decoder in Theorem~\ref{thm:lpdecinf} for the codes of \cite{panteleev_quantum_2022}. For instance, we show that a similar, but less involved, decoding algorithm applies to another class of quasi-cyclic codes, obtained as a related (but different) product, namely the hypergraph product, of an expander-based classical LDPC code with a repetition code (see Section~\ref{sec:hgpdec}). The reader is referred to Section~\ref{sec:introcomparison} for comparisons to prior decoders for related codes.

Such hypergraph products of expander-based codes with repetition codes arise naturally in various settings. For instance, similar such codes are used to teleport logical qubits between a product of two expander-based codes and a surface code; see for instance \cite{xu_constant-overhead_2023}.

Perhaps more fundamentally, the hypergraph product of some code $\cA$ with a repetition code $\cB$ can be viewed as a description of how the code $\cA$ handles errors occuring on different code components at different points in time. At a high level, here the code $\cA$ describes the available storage space, while the repetition code $\cB$ serves as the time axis. This viewpoint is often used in quantum error correction to design decoding algorithms, such as for the surface code, that are robust against syndrome errors (see e.g.~\cite{dennis_topological_2002}). Appendix~\ref{sec:spacetime} applies this viewpoint to give a fully classical interpretation of our decoding algorithm for quantum hypergraph product codes.

\subsection{Overview of Decoding Techniques}
\label{sec:introdecoverview}
Our decoding algorithm for the codes $\cC$ in Theorem~\ref{thm:lpdecinf} leverages the group symmetry that $\cC$ respects due to its quasi-cyclic nature. If we arrange the bits of a length-$\ell$ repetition code in a circle, then the code is preserved under the cyclic group action of $\bZ/\ell\bZ$, where $i\in\bZ/\ell\bZ$ simply shifts all bits by $i$ positions clockwise. As $\cC$ is obtained as a product of an expander-based code (also respecting a cyclic group action) with a length-$\ell$ repetition code, $\cC$ inherits this action of $\bZ/\ell\bZ$. Therefore the code space of $\cC$ is preserved under permutations of the qubits by actions of the group $\bZ/\ell\bZ$, so $\cC$ is indeed a quasi-cyclic quantum code.

At a high level, our decoder in Theorem~\ref{thm:lpdecinf} repeatedly adds together different cyclic shifts of the received syndrome, and then passes this sum through a classical Tanner code decoder to gain information about the original error. Our algorithm can use, in a black-box manner, any such classical decoder that corrects against errors on the code bits in the presence of errors on the parity-check syndrome bits. The main idea is that we isolate the effects of certain errors by summing together different cyclic shifts of the syndrome; we can then identify, and ultimately correct, these errors. \cite{panteleev_quantum_2022} used a related idea to prove the distance bound in these codes.

However, our actual algorithm is rather delicate, and requires an iterative aspect that is absent from the distance proof of \cite{panteleev_quantum_2022}. Specifically, our algorithm consists of $\Theta(\log\ell)$ iterations, where the $\tau$th iteration adds together $2^\tau$ cyclic shifts of the syndrome, and uses a classical Tanner code decoder to perform some analysis with a constant probability of failure. With probability $2^{-\Theta(\log\ell)}=1/\poly(\ell)$, all iterations succeed, and we decode successfully. We then amplify the success probability by repeating this procedure $\poly(\ell)$ times.

\subsection{Comparison to Prior Decoders}
\label{sec:introcomparison}
Our techniques described above differ from those used in decoders of \cite{gu_efficient_2023,leverrier_decoding_2023,dinur_good_2023} for asymptotically good qLDPC codes \cite{panteleev_asymptotically_2022,leverrier_quantum_2022-1,dinur_good_2023}. At a high level, given a syndrome, these decoders recover the error by greedily flipping code bits to reduce some potential function related to the syndrome weight. Such greedy ``flip''-style decoders are often applied for expander-based classical and quantum LDPC codes (e.g.~\cite{sipser_expander_1996,zemor_expander_2001,leverrier_quantum_2015,fawzi_efficient_2018}). However, beacuse the quasi-cyclic codes of \cite{panteleev_quantum_2022} that we consider in Theorem~\ref{thm:lpdecinf} are constructed with products involving repetition codes, which are not based on expanders, the resulting codes $\cC$ do not seem to be ``sufficiently expanding'' to permit such a greedy flip-style decoder.

All previously known decoders \cite{gu_efficient_2023,leverrier_decoding_2023,dinur_good_2023} for linear (or almost linear) distance qLDPC codes \cite{panteleev_asymptotically_2022,leverrier_quantum_2022-1,dinur_good_2023} require the code to be constructed from a constant-sized pair of ``inner codes'' satisfying a property called \textit{(two-way) product-expansion} (also known as \textit{robust testability}; see e.g.~\cite{kalachev_two-sided_2023,dinur_good_2023}). Such constant-sized objects have been shown to exist via probabilistic arguments. However, the constants are prohibitively large to the point where no specific instances of such two-way product-expanding codes have been constructed, to the best of our knowledge.

In contrast, the codes of \cite{panteleev_quantum_2022} do not need product-expanding inner codes, and hence may be more relevant for practical implementations. Furthermore, as mentioned in Section~\ref{sec:introdecoverview}, our decoding algorithm in Theorem~\ref{thm:lpdecinf} is based on the black-box application of classical LDPC code decoders in the quantum setting, which may be helpful for implementation purposes. In contrast, the flip-style decoders of \cite{gu_efficient_2023,leverrier_decoding_2023,dinur_good_2023} are more inherently quantum, as they do not naturally reduce to decoding a good classical code.

For hypergraph products of arbitrary classical codes, the ReShape decoder of \cite{quintavalle_reshape_2022} can also correct a linear number adversarial errors with respect to the code distance, and also is based on a black-box application of classical decoders. While we restrict attention to products of expander-based codes with repetition codes, we are able to decode lifted products of nearly linear distance, whereas \cite{quintavalle_reshape_2022} only consider hypergraph products (which have distance at most $O(\sqrt{N})$ for codes of length $N$). Furthermore, for hypergraph products, we present a decoding algorithm that runs in almost linear time with respect to the block length (see Section~\ref{sec:hgpdecinf}), whereas ReShape requires quadratic running time \cite{quintavalle_reshape_2022}.

One downside of the codes of \cite{panteleev_quantum_2022} that we consider is their logarithmic dimension $K=\Theta(\log N)$, which is much worse than the optimal $K=\Theta(N)$. We can boost the dimension at the cost of degrading the distance by simply breaking the message into equal-sized blocks, and encoding each block into a smaller instance of the codes of \cite{panteleev_quantum_2022}. Indeed, for any $0<\alpha<1$, we can obtain length-$N$ qLDPC codes of dimension $K=\Theta(N^\alpha\log N)$ and distance $D=\Theta(N^{1-\alpha}/\log N)$ by simply using $N^\alpha$ copies of the length-$N^{1-\alpha}$ code of \cite{panteleev_quantum_2022}. Our decoder in Theorem~\ref{thm:lpdecinf} by definition still applies to such codes. \cite{panteleev_quantum_2022} showed how to improve these parameters to $K=\Theta(N^\alpha\log N)$ and $D=\Theta(N^{1-\alpha/2}/\log N)$ using an additional product construction, though we have not analyzed the decoding problem for these codes. It is an interesting question whether the dimension of such constructions could be further increased, while preserving the distance and efficient decodability, and without requiring a stronger ingredient such as product-expansion.

\subsection{Open Problems}
Our work raises a number of open problems, such as those listed below:
\begin{itemize}
\itemsep=0ex
\item Can our decoding algorithms be adapted to correct a number of \textit{random} errors growing linearly in the block length?
\item Can our decoding algorithms be made to accomodate syndrome measurement errors? Note that our algorithms use as an ingredient classical codes robust to syndrome errors, which is distinct from considering errors on the syndrome of the quantum code.
\item Can our decoding algorithms be parallelized? We see no impediment to parallelization, but for simplicity in the presentation we did not pursue this direction.
\item Can our techniques be used to improve decoder performance in practical implementations? Our black-box use of classical LDPC decoders may be useful in this regard.
\item Can our techniques be extended to more general classes of codes, or to codes with improved paramters? See for instance the discussion in Section~\ref{sec:introcomparison}.
\end{itemize}

\subsection{Roadmap}
The remainder of this paper is organized as follows. Section~\ref{sec:prelim} provides necessary preliminary notions on classical and quantum codes. For completeness Section~\ref{sec:prelim} is fairly detailed, but readers familiar with quantum LDPC codes and the language of chain complexes may skip most of this background and head directly to Section~\ref{sec:tecoverview}, which provides a technical overview of our decoding algorithms. Section~\ref{sec:classtan} describes the classical LDPC codes that we use in a black-box manner for the quantum codes we consider. Section~\ref{sec:hgpdec} presents our decoder for the hypergraph product of an expander-based classical LDPC code with a repetition code. This hypergraph product decoder provides a helpful introduction to some of the basic techniques involved in the proof of our main result (Theorem~\ref{thm:lpdecinf}) on decoding lifted products of such classical codes. We prove Theorem~\ref{thm:lpdecinf} in Section~\ref{sec:lpdec}.

Appendix~\ref{sec:spacetime} provides a purely classical interpretation of our hypergraph product decoder. This perspective may be particularly helpful for those more familiar with classical error correction.

Appendix~\ref{sec:noisysyn} provides the formal construction of the expander-based classical LDPC codes we describe in Section~\ref{sec:classtan}, and proves the necessary properties. In particular, Appendix~\ref{sec:noisysyn} provides a self-contained presentation of efficient decoding of classical LDPC codes, along with their ``transpose'' codes, under errors on both the code bits and syndrome bits.

\section{Preliminaries}
\label{sec:prelim}
This section presents preliminary notions and relevant prior results pertaining to classical and quantum codes. For simplicity, in this paper we restrct attention to codes over the binary alphabet $\bF_2$.

\subsection{Basic Notation}
For an integer $n\geq 0$, we let $[n]=\{0,\dots,n-1\}$ denote the set of nonnegative integers less than $n$.

\subsection{Chain Complexes}
It will be most convenient for us to phrase both classical and quantum codes in the (standard) language of chain complexes.

\begin{definition}
  \label{def:chaincomplex}
  A \textbf{$t$-term chain complex} (over $\bF_2$) $\cC_*$ consists of an $\bF_2$-vector space $C_*=\bigoplus_{i\in[t]}C_i$ where each $C_i$ is an $\bF_2$-vector space, along with a linear \textbf{boundary map} $\partial:C_*\rightarrow C_*$ satisfying $\partial^2=0$ and $\partial(C_i)\subseteq C_{i-1}$. Letting $\partial_i=\partial|_{C_i}$, we summarize the components of the chain complex by writing
  \begin{equation*}
    \cC_* = \left(C_{t-1} \xrightarrow{\partial_{t-1}} C_{t-2} \xrightarrow{\partial_{t-2}} \cdots \xrightarrow{\partial_1} C_0\right).
  \end{equation*}

  In this paper we asume all $C_i$ are finite-dimensional. Then $C_*$ has an associated dual chain complex~$\cC^*=(C^*,\delta)$, called the \textbf{cochain complex}, given by
  \begin{equation*}
    \cC^* = \left(C^{t-1} \xleftarrow{\delta_{t-2}} C^{t-2} \xleftarrow{\delta_{t-3}} \cdots \xleftarrow{\delta_0} C^0\right),
  \end{equation*}
  where each $C^i=C_i$, so that $C^*=C_*$,\footnote{In general $C^i$ is the space of linear functions from $C_i$ to $\bF_2$, which may not equal $C_i$ if $C_i$ is infinite-dimensional. However, in this paper we always take all $C_i$ to be finite-dimensional, so $C^i\cong C_i$ and therefore $C^*\cong C_*$.} and $\delta_i=\partial_{i+1}^\top$, so that $\delta=\partial^\top$. The boundary map $\delta$ of $\cC^*$ is called the \textbf{coboundary map} of $\cC_*$.
  % An \textbf{$\ell$-term chain complex} $\cC_*$ is a sequence
  % \begin{equation*}
  %   C_{\ell-1} \xrightarrow{\partial_{\ell-1}} C_{\ell-2} \xrightarrow{\partial_{\ell-2}} \cdots \xrightarrow{\partial_1} C_0
  % \end{equation*}
  % of $\bF_2$-vector spaces $C_0,\dots,C_{\ell-1}$ and boundary maps $\partial_1,\dots,\partial_{\ell-1}$, such that it holds for all $i$ that
  % \begin{equation*}
  %   \partial_{i-1}\partial_i = 0.
  % \end{equation*}
  % In this paper we asume all $C_i$ are finite-dimensional. Then $C_*$ has an associated dual chain complex~$\cC^*$, called the \textbf{cochain complex}, given by
  % \begin{equation*}
  %   C^{\ell-1} \xleftarrow{\delta_{\ell-2}} C^{\ell-2} \xleftarrow{\delta_{\ell-3}} \cdots \xleftarrow{\delta_0} C^0,
  % \end{equation*}
  % where each $C^i=C_i$ and $\delta_i=\partial_{i+1}^\top$.

  For each $i\in[t]$, we furthermore let
  \begin{align*}
    Z_i(\cC) &= \ker\partial_i \hspace{1em}\text{ be the space of \textbf{$i$-cycles}},\\
    B_i(\cC) &= \im\partial_{i+1} \hspace{1em}\text{ be the space of \textbf{$i$-boundaries}}, \\
    H_i(\cC) &= Z_i(\cC)/B_i(\cC) \hspace{1em}\text{ be the \textbf{$i$-homology group}},
  \end{align*}
  where we let $\partial_i=0$ for all $i\notin\{1,\dots,t-1\}$. The \textbf{$i$-cocycles} $Z^i(\cC)$, \textbf{$i$-coboundaries} $B^i(\cC)$, and \textbf{$i$-cohomology group} $H^i(\cC)$ are defined analogously for the cochain complex.

  Sometimes for clarity we will denote the boundary map of $\cC_*$ by $\partial^{\cC}=\partial$, and similarly denote the coboundary map $\delta^{\cC}=\delta$.

  We say that the chain complex $\cC_*$ is \textbf{based} if each $C_i$ has some fixed basis. \end{definition}

For the purpose of this paper, we assume all of our chain complexes are based, and often write ``chain complex'' to implicitly mean ``based chain complex.'' The based condition gives a well-defined notion of Hamming weight:

\begin{definition}
  Let $C$ be a finite-dimension $\bF_2$-vector space with a fixed basis, so that every $c\in C$ can be expressed as $c=(c_0,\dots,c_{n-1})$ where $n=\dim C$ and each $c_i\in\bF_2$. Then the \textbf{Hamming weight} of $c$ is
  \begin{equation*}
    |c| = |\{i\in[n]:c_i\neq 0\}|.
  \end{equation*}
  We extend this definition to sets $S\subseteq C$ (where $S$ may not be a linear subspace) by defining
  \begin{equation*}
    |S| = \min_{c\in S}|c|.
  \end{equation*}
  % We also define the \textbf{normalized Hamming weight} of $c\in C$ by $\|c\|=|c|/d$, and similarly for $S\subseteq C$ we let $\|S\|=|S|/d$. Therefore the normalized Hamming weight always lies in $[0,1]$.
\end{definition}

\subsection{Classical and Quantum Codes as Chain Complexes}
In this section, we present standard definitions of classical and quantum codes. However, we treat these codes using the language of chain complexes, which is standard in the quantum coding literature, but perhaps not in the classical literature.

\begin{definition}
  A \textbf{classical linear code} (or simply classical code) of \textbf{block length} $n$ is a subspace $C\subseteq\bF_2^n$. The \textbf{dimension} $k$ of $C$ is its dimension as an $\bF_2$-vector space, that is, $k=\dim_{\bF_2}C$. The \textbf{distance} $d$ of $C$ is the minimum Hamming weight of a nonzero vector in $C$, that is, $d=\min_{c\in C\setminus\{0\}}|c|$. If $C$ has block length $n$, dimension $k$, and distance $\geq d$, we say that $C$ is an $[n,k,d]$ code. The \textbf{dual} of $C$ is the $(n-k)$-dimensional code $C^\perp=\{c'\in\bF_2^n:c'\cdot c=0\;\forall c\in C\}$.
\end{definition}

It will often be helpful to treat classical codes in the language of chain complexes, as described below.

\begin{definition}
  Every 2-term chain complex $\cC_*=(C_1\xrightarrow{\partial_1}C_0)$ has an \textbf{associated classical code} $\ker\partial_1\subseteq C_1$. The operator $\partial_1:C_1\rightarrow C_0$ is called a \textbf{parity-check matrix} of the code $\ker\partial_1$.
\end{definition}

We now define quantum CSS codes:

\begin{definition}
  A \textbf{quantum CSS code} (or simply quantum code) of \textbf{block length} $n$ is a pair $C=(C_X,C_Z)$ of subspaces $C_X,C_Z\subseteq\bF_2^n$ such that $C_Z^\perp\subseteq C_X$ (and therefore also $C_X^\perp\subseteq C_Z$). The \textbf{dimension} $k$ of $C$ is given by
  \begin{equation*}
    k = \dim(C_Z)-\dim(C_X^\perp) = \dim(C_X)-\dim(C_Z^\perp),
  \end{equation*}
  and the \textbf{distance} $d$ of $C$ is given by
  \begin{equation*}
    d = \min_{c\in (C_Z\setminus C_X^\perp)\cup(C_X\setminus C_Z^\perp)}|c|.
  \end{equation*}
  If $C$ has block length $n$, dimension $k$, and distance $\geq d$, we say that $C$ is an $[[n,k,d]]$ code.
\end{definition}

Just as a classical code can be obtained from a 2-term chain complex, a quantum code can similarly be obtained from a 3-term chain complex:

\begin{definition}
  Every 3-term chain complex $\cC_*=(C_2\xrightarrow{\partial_2}C_1\xrightarrow{\partial_1}C_0)$ has an \textbf{associated quantum code} $C=(C_X=\ker\delta_1,C_Z=\ker\partial_1)$, where we recall that $\delta_1=\partial_2^\top$. The operators $\delta_1$ and $\partial_1$ are called \textbf{X} and \textbf{Z parity-check matrices}, respectively.
\end{definition}

For a 3-term chain complex $\cC_*$, the condition that $\partial_1\partial_2=0$ is equivalent to the condition that the associated CSS code $(C_X,C_Z)$ satisfies $C_X^\perp\subseteq C_Z$. Thus chain complexes indeed are a natural language for describing CSS codes.

By definition, the quantum code associated to a 3-term chain complex $\cC_*$ has dimension
\begin{equation*}
  k = \dim H_1(\cC) = \dim H^1(\cC),
\end{equation*}
and has distance
\begin{equation*}
  d = \min_{c\in(Z_1(\cC)\setminus B_1(\cC))\cup(Z^1(\cC)\setminus B^1(\cC))}|c|.
\end{equation*}

When clear from context, we will sometimes refer to a 2-term chain complex and its associated classical code interchangeably, and we will similarly refer to a 3-term chain complex and its associated quantum code interchangeably.

In this paper we are specifically interested in LDPC codes, which have sparse parity-check matrices, as defined below.

\begin{definition}
  The \textbf{locality} of a chain complex $\cC_*$ is the maximum Hamming weight of any row or column of any boundary map $\partial_i$ in the complex.

  In particular, we say that a family of complexes with locality $\leq w$ is \textbf{LDPC (low-density parity-check) with locality $w$}. If $w=O(1)$ is a constant, we simply say the family is LDPC.
\end{definition}

Thus a family of 2-term chain complexes with sparse boundary maps corresponds to a family of classical LDPC codes, while a family of 3-term chain complexes with sparse boundary maps corresponds to a family of quantum LDPC codes.

\subsection{Expansion and Decoding of Classical Codes}
\label{sec:expdecclass}
We now define some relevant notions of expansion and decoding for classical codes, or rather, for 2-term chain complexes. From this point on the reader may think of most (classical as well as quantum) codes we discuss as being LDPC unless explicitly stated otherwise, though we will only formally impose the LDPC condition when necessary or useful.

To begin, we define expansion for classical codes.

\begin{definition}
  \label{def:ccexp}
  Let $\cC_*=(C_1\xrightarrow{\partial_1}C_0)$ be a 2-term chain complex with $N_i=\dim C_i$. Then $\cC$ has \textbf{$(\alpha,\beta)$-expansion} if for every $c\in C_1$ such that $0<|c|\leq\alpha N_1$, it holds that $|\partial_1c|\geq\beta|c|$.
\end{definition}

\begin{remark}
  Perhaps more descriptive term than \textit{expansion} would be \textit{small-set expansion}. There is also a generalization to chain complexes with more terms called \textit{small-set (co)boundary expansion}, which turns out to be important in applications of quantum LDPC codes \cite{hopkins_explicit_2022-1,anshu_nlts_2023}. However, we will not need this notion of small-set (co)boundary expansion for our work.
  % Interestingly, it is unclear if the nearly-linear distance quantum LDPC codes of \cite{panteleev_quantum_2022} that we consider satisfy this notion of small-set (co)boundary expansion. In contrast, the linear-distance quantum LDPC codes of \cite{panteleev_asymptotically_2021,leverrier_quantum_2022-1,dinur_good_2023} do satisfy this notion.
\end{remark}

Known constructions of classical LDPC codes (e.g.~\cite{sipser_expander_1996}) have $(\Omega(1),\Omega(1))$-expansion for both the associated chain complex $\cC_*$ and its cochain complex $\cC^*$ (see Lemma~\ref{lem:ctexp} below).

Many of these expanding classical codes have the additional property that they can be approximately decoded from noisy syndromes, as defined below.

\begin{definition}
  Let $\cC_*=(C_1\xrightarrow{\partial_1}C_0)$ be a 2-term chain complex with $N_i=\dim C_i$. We say that (the code associated to) $\cC_*$ is \textbf{$(e_0,e_1,\gamma)$-noisy-syndrome decodable in time $T$} if there exists a decoding algorithm $D:C_0\rightarrow C_1$ (specifying a possibly nonlinear function) that runs in time $T$ such that for every code error $c_1\in C_1$ of weight $|c_1|\leq e_1$ and every syndrome error $c_0\in C_0$ of weight $|c_0|\leq e_0$, it holds that
  \begin{equation*}
    |D(c_0+\partial_1c_1)-c_1|\leq\gamma|c_0|.
  \end{equation*}
\end{definition}

In words, noisy-syndrome decodability says that given code error $c_1$ and syndrome error $c_0$, a decoder receiving the noisy syndrome $c_0+\partial c_1$ is able to approximately recover the code error, up to a loss that is at most proportional to the weight of the syndrome error $c_0$. Note that this loss is bounded independently of the size of the code error $c_1$, as long as $|c_1|\leq e_1$. For instance, in the special case of a noiseless syndrome $c_0=0$, whenever $|c_1|\leq e_1$ the decoder exactly recovers $c_1$.

This notion of noisy-syndrome decodability is really a property of the chain complex $\cC_*$ rather than of the code $\ker\partial_1$. Indeed, the definition is meaningful even when $\ker\partial_1=\{0\}$, and we will use the definition for such chain complexes.

\begin{remark}
  \label{remark:noisysyn}
  Many asymptotically good $[N,\Theta(N),\Theta(N)]$ classical LDPC codes in the literature that are based on expander graphs (e.g.~those in \cite{sipser_expander_1996}) are $(\Theta(N),\Theta(N),\Theta(1))$-noisy-syndrome decodable in $O(N)$ time.\footnote{An exception is given by LDPC codes based in \textit{unique-neighbor expanders}, which often have no known efficient decoders.} Such a result on noisy-syndrome decoding is for instance shown in \cite{spielman_linear-time_1996}. However, for our purposes, we will need $\cC$ for which both the chain complex $\cC_*$ and its cochain complex $\cC^*$ are noisy-syndrome decodable in linear time. We could not find such a result written in the literature, so we prove the existence of classical LDPC codes $\cC$ with such noisy-syndrome decoders in Proposition~\ref{prop:classtan}.

  Decoding from noisy syndromes in the quantum setting is often referred to as \textit{single-shot decoding} \cite{bombin_single-shot_2015} (see also e.g.~\cite{campbell_theory_2019-1,fawzi_constant_2020,kubica_single-shot_2022,gu_single-shot_2023}). In this paper, we use classical noisy-syndrome decoding to construct decoders for quantum codes, but we do not consider noisy syndromes of the quantum codes themselves. It is an interesting question to determine if our techniques extend to this setting.
\end{remark}

% The quantum analogue of this noisy-syndrome decoding is precisely \textit{single-shot decoding}, which is well-studied and of practical importance for dealing with measurement errors during quantum error correction (e.g.~\cite{campbell_theory_2019-1,fawzi_constant_2020,kubica_single-shot_2022,gu_single-shot_2023}). We use the term ``noisy-syndrome decoding'' to avoid confusion with the quantum case, as we do not consider noisy quantum syndromes in this paper.

% In Section~\ref{sec:classtan} below, we show that classical Tanner codes yield 2-term expanding chain complexes that are noisy-syndrome decodable. Such complexes can be used in hypergraphs products (see Section~\ref{sec:hgpdef}) or lifted products (see Section~\ref{sec:lpdef}) to construct quantum codes, of which the decoding problem is the main focus of this paper.

\subsection{Decoding of Quantum Codes}
We now define the decoding problem for quantum codes. Unlike the classical treatment in Section~\ref{sec:expdecclass}, for quantum decoding we assume the decoder has access to an exact noise-free error syndrome. It is an interesting direction for future work to extend our results to allow for noisy syndromes in the quantum codes.

% In words, for a length-$N$ quantum CSS code $(C_X=\ker H_X,C_Z=\ker H_Z)$, given $X$ and $Z$ errors $c_X\in\bF_2^N$ and $c_Z\in\bF_2^N$ respectively, the decoder receives $X$ and $Z$ syndromes $s_X=H_Xc_X$ and $s_Z=H_Zc_Z$. The decoder then must recover some estimates $\tilde{c}_X\in\bF_2^N$ and $\tilde{c}_Z\in\bF_2^N$ of the $X$ and $Z$ errors respectively such that $\tilde{c}_X-c_X\in C_Z^\perp$, and $\tilde{c}_Z-c_Z\in C_X^\perp$. The reason that the decoder's $X$ (resp.~$Z$) error estimate can differ from the true error by an element of $C_Z^\perp$ (resp.~$C_X^\perp$) is that such dual code elements correspond to stabilizers of the subspace of Hilbert space associated to this quantum CSS code; applying such stabilizers preserves every code state, and thus effectively introduces no errors.

% Below, we translate this notion of decoding into the language of chain complexes.

\begin{definition}
  \label{def:decoder}
  Let $\cC_*=(C_2\xrightarrow{\partial_2}C_1\xrightarrow{\partial_1}C_0)$ be a 3-term chain complex. A \textbf{decoder against $e$ (adversarial) errors} for the quantum code associated to $\cC_*$ is a pair of algorithms $D_0:C_0\rightarrow C_1$ and $D^2:C^2\rightarrow C^1$ (specifying possibly nonlinear functions) with the following properties:
  \begin{enumerate}
  \item For every $c\in C_1$ with $|c|\leq e$, it holds that $D_0(\partial_1c)\in c+B_1(\cC)$.
  \item For every $c\in C^1$ with $|c|\leq e$, it holds that $D^2(\delta_1c)\in c+B^1(\cC)$.
  \end{enumerate}
  If $D_0,D^2$ are randomized algorithms and the above statements hold (for every $c$) with probability $\geq 1-\epsilon$ over the decoder's randomness, we say that the decoder is \textbf{randomized} with \textbf{failure probability $\epsilon$}.
  
  The \textbf{running time} of the decoder for $\cC_*$ is simply the sum of the running times of $D_0$ and $D^2$.
\end{definition}

In words, Definition~\ref{def:decoder} says that a decoder for a CSS code simply consists of an $X$ and a $Z$ decoder, each of which is given access to a sufficiently low-weight $X$ or $Z$ error syndrome respectively, and must recover the original error up to (physically irrelevant) added stabilizers.

We are interested in finding decoders that decode against a close-to-optimal number of errors. The optimal number of adversarial errors that can be corrected is one less than half the code's distance, so we typically want decoders protecting against a linear number of errors with respect to the distance.

% TODO: may want to add discussion/lemma on why randomized decoders must simply find some syndrome-compatible error of weight less than half the distance, and hence test if a decoding is valid and thereby amplify success probability of randomized decoders

\subsection{Hypergraph Product}
\label{sec:hgpdef}
This section describes how the hypergraph (i.e.~homological) product of two 2-term chain complexes representing classical codes yields a 3-term chain complex representing a quantum code. This construction and its distance analysis was given in \cite{tillich_quantum_2014}.

\begin{definition}
  \label{def:hgpprod}
  Let $\cA_*=(A_1\xrightarrow{\partial^{\cA}}A_0)$ and $\cB_*=(B_1\xrightarrow{\partial^{\cB}}B_0)$ be 2-term chain complexes. Then the \textbf{hypergraph product} $(\cA\otimes_{\bF_2}\cB)_*$ is the 3-term chain complex
  \begin{equation*}
    A_1\otimes B_1 \xrightarrow{\partial^{\cA}\otimes I+I\otimes\partial^{\cB}} A_0\otimes B_1\oplus A_1\otimes B_0 \xrightarrow{I\otimes\partial^{\cB}+\partial^{\cA}\otimes I} A_0\otimes B_0.
  \end{equation*}
  All tensor products above are taken over $\bF_2$, so we often omit the $\bF_2$ subscript and for instance write $\cA\otimes\cB$.
\end{definition}

The following formula for the dimension of a hypergraph product code was shown in \cite{tillich_quantum_2014}, and also follows from the well-known K\"{u}nneth formula (see e.g.~\cite{hatcher_algebraic_2001}).

\begin{proposition}[K\"{u}nneth formula]
  \label{prop:kunneth}
  For 2-term chain complexes $\cA_*$ and $\cB_*$, it holds that
  \begin{equation*}
    \dim H_1(\cA\otimes\cB) = \dim(H_1(\cA))\cdot\dim(H_0(\cB))+\dim(H_0(\cA))\cdot\dim(H_1(\cB)).
  \end{equation*}
\end{proposition}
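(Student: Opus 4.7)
The plan is to prove the Künneth formula directly by splitting each of the two 2-term complexes into a homological piece and a contractible piece, and then showing that only the all-homological piece contributes to $H_1$ of the product. Since everything is over the field $\bF_2$, every subspace has a complement, which makes the splitting step purely a matter of picking bases.

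Concretely, I first fix complements $A_1 = Z_1(\cA) \oplus W_A$ and $A_0 = B_0(\cA) \oplus U_A$ with $U_A \cong H_0(\cA)$. The restriction $\partial^\cA|_{W_A} : W_A \to B_0(\cA)$ is then an isomorphism. This presents $\cA_*$ as a direct sum of subcomplexes
\[
\cA_* \;=\; \cH^{\cA}_* \;\oplus\; \cT^{\cA}_*, \qquad
\cH^{\cA}_* = \bigl(H_1(\cA) \xrightarrow{0} U_A\bigr), \qquad
\cT^{\cA}_* = \bigl(W_A \xrightarrow{\cong} B_0(\cA)\bigr),
\]
and likewise $\cB_* = \cH^{\cB}_* \oplus \cT^{\cB}_*$. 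Bilinearity of $\otimes_{\bF_2}$ in each slot (which is visible on the formula in Definition~\ref{def:hgpprod}) gives a direct sum decomposition of the hypergraph product into four 3-term subcomplexes, and since homology commutes with finite direct sums,
\[
H_1(\cA \otimes \cB) \;=\; H_1(\cH^{\cA} \otimes \cH^{\cB}) \oplus H_1(\cH^{\cA} \otimes \cT^{\cB}) \oplus H_1(\cT^{\cA} \otimes \cH^{\cB}) \oplus H_1(\cT^{\cA} \otimes \cT^{\cB}).
\]

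For $\cH^{\cA} \otimes \cH^{\cB}$, both input differentials are zero, hence so are $\partial_1, \partial_2$ in the product, and the degree-$1$ space is exactly $(H_1(\cA) \otimes H_0(\cB)) \oplus (H_0(\cA) \otimes H_1(\cB))$; this matches the claimed right-hand side. The heart of the proof is then to show that each of the remaining three pieces is acyclic in degree $1$. This is where I expect the main (small) obstacle: one has to verify that tensoring a contractible $2$-term complex with another $2$-term complex gives a $3$-term complex that is acyclic in the middle. For $\cH^{\cA} \otimes \cT^{\cB}$, since $\partial^{\cH^{\cA}} = 0$ and $\partial^{\cT^{\cB}}$ is an isomorphism, the boundary maps reduce to $I \otimes \partial^{\cT^{\cB}}$, and a short calculation shows $\ker d_1 = \{0\} \oplus (H_1(\cA) \otimes B_0(\cB)) = \mathrm{im}\, d_2$. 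The case $\cT^{\cA} \otimes \cH^{\cB}$ is symmetric, and $\cT^{\cA} \otimes \cT^{\cB}$ is handled by identifying each $\mathrm{im}\,\partial$ with its preimage via the isomorphism; after this identification the three-term complex becomes $V \xrightarrow{\Delta} V \oplus V \xrightarrow{+} V$ (with $V = W_A \otimes W_B$), for which $\ker(+) = \mathrm{im}(\Delta)$ is immediate.

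Adding the four contributions gives exactly $\dim H_1(\cA) \cdot \dim H_0(\cB) + \dim H_0(\cA) \cdot \dim H_1(\cB)$, as desired. The proof is entirely elementary and uses only that $\bF_2$ is a field (to guarantee the splittings) and finite-dimensionality of all the spaces involved (to handle direct sums and dimension counting).
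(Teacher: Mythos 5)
Your argument is correct. The paper itself gives no proof of Proposition~\ref{prop:kunneth}: it simply cites the dimension computation of Tillich--Z\'emor and the standard K\"unneth theorem over a field (e.g.\ Hatcher). What you have written is, in effect, the textbook proof of that K\"unneth theorem specialized to 2-term complexes over $\bF_2$: split each complex as (zero-differential piece with spaces $Z_1\cong H_1$ and $U\cong H_0$) $\oplus$ (contractible piece $W\xrightarrow{\cong}B_0$), use that the hypergraph product distributes over direct sums of complexes and that homology is additive, and then check the four summands. Your verifications are sound: in $\cH^{\cA}\otimes\cH^{\cB}$ all differentials vanish so $H_1$ is the whole middle term $(Z_1(\cA)\otimes U_B)\oplus(U_A\otimes Z_1(\cB))$; in the mixed pieces $d_1$ restricted to the first summand of the middle term is $I\otimes g$ (resp.\ $f\otimes I$) with $g$ (resp.\ $f$) an isomorphism, which forces $\ker d_1=\im d_2$; and in $\cT^{\cA}\otimes\cT^{\cB}$ the identification via $f\otimes I$, $I\otimes g$, $f\otimes g$ reduces the complex to $V\xrightarrow{\Delta}V\oplus V\xrightarrow{+}V$, which is exact in the middle. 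The only caveat worth noting is implicit bookkeeping: one should state that $B_1(\cA)=0$ for a 2-term complex (so $H_1(\cA)=Z_1(\cA)$) and that the product's boundary maps genuinely preserve the four-fold decomposition, but both are immediate. Compared with the paper's citation, your route costs a page of elementary linear algebra but buys a self-contained proof and an explicit identification of a subspace of the middle term realizing $H_1(\cA\otimes\cB)$, which the black-box citation does not provide.
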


\cite{tillich_quantum_2014} also showed the following bound on the distance of a hypergraph product code.

\begin{proposition}[\cite{tillich_quantum_2014}]
  \label{prop:hgpdistance}
  For 2-term chain complexes $\cA_*$ and $\cB_*$, the quantum code associated to the 3-term complex $\cA\otimes\cB$ has distance
  \begin{align*}
    d(\cA\otimes\cB) &\geq \min\{d(\cA_*),d(\cB_*),d(\cA^*),d(\cB^*)\},
  \end{align*}
  where for a 2-term chain complex $\cC_*$, we let $d(\cC_*)$ denote the distance of the associated classical code.
\end{proposition}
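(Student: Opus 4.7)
The plan is to bound the two halves $Z_1(\cA\otimes\cB)\setminus B_1(\cA\otimes\cB)$ and $Z^1(\cA\otimes\cB)\setminus B^1(\cA\otimes\cB)$ of the distance-defining set separately, following Tillich-Z\'emor~\cite{tillich_quantum_2014}. First I would exploit the symmetry of the hypergraph product under dualization: transposing every boundary map yields $(\cA\otimes\cB)^*\cong\cA^*\otimes\cB^*$ as 3-term based chain complexes, so bounding the minimum weight of $Z^1\setminus B^1$ in $\cA\otimes\cB$ reduces to bounding the minimum weight of $Z_1\setminus B_1$ in the dual product $\cA^*\otimes\cB^*$. It thus suffices to show that every $c\in Z_1(\cA\otimes\cB)\setminus B_1(\cA\otimes\cB)$ has Hamming weight at least $\min\{d(\cA_*),d(\cB_*)\}$; applying this bound to $\cA^*\otimes\cB^*$ then supplies the remaining $d(\cA^*)$ and $d(\cB^*)$ terms in the minimum.

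Next, decompose $c=(c_A,c_B)$ with $c_A\in A_0\otimes B_1$ and $c_B\in A_1\otimes B_0$, and use the chosen bases to identify $c_A,c_B$ with matrices $M_A$ (of size $\dim A_0\times\dim B_1$) and $M_B$ (of size $\dim A_1\times\dim B_0$). The cycle condition $\partial_1 c=0$ then translates into the matrix identity
\begin{equation*}
M_A(\partial^{\cB})^\top + \partial^{\cA} M_B = 0,
\end{equation*}
while modifying $c$ by a boundary $\partial_2 x\in B_1$ (with $x$ represented by a matrix $X$ of size $\dim A_1\times\dim B_1$) transforms $(M_A,M_B)\mapsto(M_A+\partial^{\cA}X,\,M_B+X(\partial^{\cB})^\top)$. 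Equivalently, each column of $M_A$ may be independently shifted by elements of $\im\partial^{\cA}\subseteq A_0$ and each row of $M_B$ by elements of $\im\delta^{\cB}\subseteq B_0$, but the two families of shifts are coupled through the common parameter $X$.

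The core of the argument is to exhibit a classical codeword inside $(M_A,M_B)$ whose Hamming weight lower-bounds $|c|$. Using the K\"unneth formula (Proposition~\ref{prop:kunneth}), the class $[c]\in H_1(\cA\otimes\cB)$ decomposes along $H_0(\cA)\otimes H_1(\cB)\oplus H_1(\cA)\otimes H_0(\cB)$, and since $c\notin B_1$ at least one component is nonzero. If the $H_0(\cA)\otimes H_1(\cB)$ component is nonzero, I would choose a representative of the coset $c+B_1$ whose $M_B$ has been reduced so that no column lies in $\im\delta^{\cB}$, and then use the cycle identity to show that projecting $M_A$ columnwise onto $A_0/\im\partial^{\cA}$ yields a nonzero matrix whose rows (lifted back to $B_1$) lie in $\ker\partial^{\cB}$; such a row is a nonzero codeword of $\cB_*$ of Hamming weight at least $d(\cB_*)$, and forces $|M_A|\geq d(\cB_*)$. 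The symmetric argument, applied when the $H_1(\cA)\otimes H_0(\cB)$ component is nonzero, produces a codeword of $\cA_*$ of weight at least $d(\cA_*)$ appearing as a column of $M_B$.

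The main obstacle is justifying the ``choose the right representative'' step in detail. The column operations on $M_A$ that allow one to clear $\im\partial^{\cA}$-parts of columns are coupled, through the parameter $X$, to specific row modifications of $M_B$ — so one cannot separately minimize $M_A$ column-by-column and $M_B$ row-by-row. One must instead argue that both matrices can simultaneously be put into a canonical form exhibiting the K\"unneth codeword explicitly, without the coupling reintroducing weight; equivalently, that whenever the K\"unneth component is nonzero the corresponding classical codeword cannot be eliminated by any boundary modification. Carefully managing this coupling between the column and row reductions, mediated by $\partial_2$, is the delicate technical step of the proof.
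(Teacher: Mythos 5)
Your outline follows the standard Tillich--Z\'emor argument for this bound (the paper itself gives no proof and simply cites \cite{tillich_quantum_2014}): dualize via $(\cA\otimes\cB)^*\cong\cA^*\otimes\cB^*$ so that it suffices to bound $Z_1\setminus B_1$ by $\min\{d(\cA_*),d(\cB_*)\}$, write the cycle condition as $M_A(\partial^{\cB})^\top+\partial^{\cA}M_B=0$ and boundaries as $(M_A,M_B)\mapsto(M_A+\partial^{\cA}X,\,M_B+X(\partial^{\cB})^\top)$, split on which K\"unneth component of $[c]$ is nonzero, and extract a nonzero codeword of $\cB_*$ (resp.\ $\cA_*$) from the columnwise projection of $M_A$ modulo $\im\partial^{\cA}$ (resp.\ the rowwise projection of $M_B$ modulo $\im\partial^{\cB}$). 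That is the right skeleton.

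However, the proof as you wrote it is incomplete exactly where you flag ``the delicate technical step,'' and the obstacle you describe there is illusory: no choice of representative, no preliminary reduction of $M_B$, and no simultaneous canonical form are needed, so the coupling through $X$ never has to be managed. The quantity your own Case-1 argument uses, the matrix $\bar{M}_A$ obtained by applying the quotient map $A_0\to H_0(\cA)=A_0/\im\partial^{\cA}$ to each column of $M_A$, is \emph{invariant} under adding any boundary, since that changes each column of $M_A$ only by an element of $\im\partial^{\cA}$ (and the coupled change $X(\partial^{\cB})^\top$ to $M_B$ is irrelevant to $\bar{M}_A$). The cycle identity projects to $\bar{M}_A(\partial^{\cB})^\top=0$, i.e.\ $\bar{M}_A\in H_0(\cA)\otimes\ker\partial^{\cB}=H_0(\cA)\otimes H_1(\cB)$, and $[c]\mapsto\bar{M}_A$ realizes precisely that K\"unneth component; so in your first case $\bar{M}_A\neq 0$ for \emph{every} representative. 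Writing $\bar{M}_A=\sum_i h_i\otimes b_i$ with the $h_i$ linearly independent in $H_0(\cA)$ and each $b_i\in\ker\partial^{\cB}\setminus\{0\}$, the support of $b_1$ gives at least $d(\cB_*)$ nonzero columns of $\bar{M}_A$, hence at least $d(\cB_*)$ nonzero columns of $M_A$, hence $|c|\geq|M_A|\geq d(\cB_*)$; note this last step must go through counting nonzero columns, since the Hamming weight of $\bar{M}_A$ in quotient coordinates does not bound $|M_A|$ entrywise, so ``such a row forces $|M_A|\geq d(\cB_*)$'' needs this one extra line. The symmetric invariant (rows of $M_B$ reduced modulo $\im\partial^{\cB}$, landing in $\ker\partial^{\cA}\otimes H_0(\cB)$) handles the other case, and your duality step supplies $d(\cA^*),d(\cB^*)$. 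Two small corrections to your setup: rows of $M_B$ are shifted by elements of $\im\partial^{\cB}\subseteq B_0$, not $\im\delta^{\cB}$ (which lies in $B_1$), and the step ``reduce $M_B$ so that no column lies in $\im\delta^{\cB}$'' is both type-mismatched and unnecessary. In short, the missing idea is the boundary-invariance of the projected matrices, which closes the argument immediately; the canonical-form machinery you propose is not needed.
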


\cite{tillich_quantum_2014} instantiated the hypergraph product construction by choosing $\cA_*$ and $\cB^*$ to be 2-term chain complexes associated to asymptotically good $[N,\Theta(N),\Theta(N)]$ classical LDPC codes, such as lossless expander codes or Tanner codes (see e.g.~\cite{sipser_expander_1996}). In this case, the hypergraph product $\cC=\cA\otimes\cB$ corresponds to a $[[\Theta(N^2),\Theta(N^2),\Theta(N)]]$ quantum LDPC code by Proposition~\ref{prop:hgpdistance}. Such products $\cC$ permit linear time ``flip''-style decoders \cite{leverrier_quantum_2015,fawzi_efficient_2018,fawzi_constant_2020}, which greedily flip code bits in an appropriate manner to reduce the syndrome weight.

\subsection{Lifted Product over Cyclic Groups}
\label{sec:lpdef}
This section describes the lifted product, which is a generalized version of the hypergraph product involving a group symmetry. Such generalized products were used in a recent line of work \cite{hastings_fiber_2021,panteleev_quantum_2022,breuckmann_balanced_2021} culminating in the construction of asymptotically good qLDPC codes \cite{panteleev_asymptotically_2022,leverrier_quantum_2022-1,dinur_good_2023}.

Our presentation of lifted products is similar to \cite{panteleev_quantum_2022,panteleev_asymptotically_2022}. In this paper, we focus on the codes of \cite{panteleev_quantum_2022}, which use lifted products over cyclic groups. Hence we will restrict attention in our presentation of the lifted product to such cyclic groups. We first define the relevant group algebra over these groups.

\begin{definition}
  \label{def:groupalg}
  Let $R_\ell=\bF_2[X]/(X^\ell-1)\cong\bF_2[\bZ/\ell\bZ]$ denote the \textbf{group algebra over $\bF_2$ of the cyclic group $\bZ/\ell\bZ$ of order $\ell$}. For an element $f(X)\in R_\ell$, we define the \textbf{conjugate} element $f(X)^*=f(X^{\ell-1})$.
\end{definition}

We emphasize that in this paper, $X$ denotes an indeterminate variable in polynomials over $\bF_2$, \textit{not} a Pauli operator.

Because we restrict attention to the cyclic group, which is abelian, the associated group algebra $R_\ell$ is commutative. Lifted products can also be defined for non-abelian groups, as was used in \cite{breuckmann_balanced_2021,panteleev_asymptotically_2022}. However, restricting to the abelian case will slightly simplify our presentation.

\begin{definition}
  A \textbf{$t$-term chain complex over $R_\ell$} is a $t$-term chain complex (over $\bF_2$) $\cC_*$ with the additional property that each $C_i$ is a free $R_\ell$-module with a fixed basis, and the boundary map $\partial:C_*\rightarrow C_*$ is an $R$-module homomorphism.
\end{definition}

In this paper we restrict attention to finite-dimensional complexes, so for each $i\in[t]$ we have $C_i\cong R_\ell^{n_i}$, and the $i$th boundary map $\partial_i:R_\ell^{n_i}\rightarrow R_\ell^{n_{i-1}}$ may be expressed as an $n_{i-1}\times n_i$ matrix of elements of $R_\ell$.

As $R_\ell=\bF_2[X]/(X^\ell-1)$ has the natural basis $\{1,X,X^2,\dots,X^{\ell-1}\}$, each $R_\ell$-module $C_i\cong R_\ell^{n_i}$ with a given $R_\ell$-basis then has an induced $\bF_2$-basis. Thus a based chain complex over $R_\ell$ is always a based chain complex over $\bF_2$. Note that we use this basis over $\bF_2$ to compute Hamming weights, locality, cochain complexes, and so forth with respect to the $\bF_2$-structure. For instance, we compute the Hamming weight of some $c\in C_i$ by counting the number of nonzero coefficients in the decomposition of $c$ into the $\bF_2$-basis, not in the $R_\ell$-basis.

Similarly, cochain complexes are also still defined as in Definition~\ref{def:chaincomplex} using the $\bF_2$-structure, even when we have an $R_\ell$-structure. In particular, consider a chain complex $\cC_*$ over $R_\ell$ with a boundary map $\partial_i:R_\ell^{n_i}\rightarrow R_\ell^{n_{i-1}}$ given by a matrix $\partial_i=H=(H_{jk})_{j\in[n_{i-1}],k\in[n_i]}\in R_\ell^{n_{i-1}\times n_i}$. Then the associated cochain complex $\cC^*$ has coboundary map $\delta_{i-1}:R_\ell^{n_{i-1}}\rightarrow R_\ell^{n_i}$ given by the conjugate transpose matrix of $\partial_i$, that is, $\delta_i=\partial_i^\dagger=(H_{kj}^*)_{k\in[n_i],j\in[n_{i-1}]}$, using the notion of conjugate in Definition~\ref{def:groupalg}.

We are now ready to define a lifted product.

\begin{definition}
  Let $\cA_*=(A_1\xrightarrow{\partial^{\cA}}A_0)$ and $\cB_*=(B_1\xrightarrow{\partial^{\cB}}B_0)$ be 2-term chain complexes over $R_\ell\cong\bF_2[\bZ/\ell\bZ]$. Then the \textbf{$\bZ/\ell\bZ$-lifted product $(\cA\otimes_{R_\ell}\cB)_*$} (or simply ``$\ell$-lifted product'' or ``lifted product'' for short) is the 3-term chain complex
  \begin{equation*}
    A_1\otimes_{R_\ell} B_1 \xrightarrow{\partial^{\cA}\otimes_{R_\ell} I+I\otimes_{R_\ell}\partial^{\cB}} A_0\otimes_{R_\ell} B_1\oplus A_1\otimes_{R_\ell} B_0 \xrightarrow{I\otimes_{R_\ell}\partial^{\cB}+\partial^{\cA}\otimes_{R_\ell} I} A_0\otimes_{R_\ell} B_0.
  \end{equation*}
\end{definition}

Assuming $\cA_*$ and $\cB_*$ have specified bases, then the lifted product $\cA\otimes_{R_\ell}\cB$ has an induced basis. This statement follows from the general fact that if $A$ and $B$ are free $R_\ell$-modules with specified $R_\ell$-bases, then $A\otimes_{R_\ell}B$ is a free $R_\ell$-module with basis elements $a\otimes_{R_\ell}b$ for all $R_\ell$-basis elements $a\in A$ and $b\in B$.

We remark that $\otimes_{\bF_2}$ and $\otimes_{R_\ell}$ are in general non-equivalent operators. When no subscript is specified we assume $\otimes$ refers to $\otimes_{\bF_2}$, unless it is clear from context or explicitly specified that we are referring to $\otimes_{R_\ell}$.

The main lifted products we consider are those described in the following result of \cite{panteleev_quantum_2022}. To state this result, we need to translate repetition codes to the language of chain complexes. For $\ell\in\bN$, we define the \textbf{chain complex $\cB_*=(B_1\xrightarrow{\partial^{\cB}}B_0)$ of a length-$\ell$ repetition code} to be the chain complex over $R_\ell$ given by $B_0=B_1=R_\ell$ and $\partial^{\cB}=1+X$. Equivalently, as a chain complex over $\bF_2$, we have $B_0=B_1=\bF_2^\ell$ and $\partial^{\cB}\1_i=\1_i+\1_{i+1\pmod{\ell}}$ for $i\in[\ell]$. Note that indeed $\cB_*$ and its cochain complex $\cB^*$ both have associated code $\ker\partial^{\cB}=\ker\delta^{\cB}$ equal to the length-$\ell$ repetition code.

\begin{theorem}[\cite{panteleev_quantum_2022}]%; Informal statement of Theorem~\ref{thm:pkdis}]
  \label{thm:pkdisinf}
  \label{thm:pkdis}
  Let $\cA^{(\ell)},\ell\in\bN$ denote a family of explicit $(\Omega(1),\Omega(1))$-expanding\footnote{Meaning that both $\cA^{(\ell)}_*$ and $\cA^{(\ell)}_*$ are $(\Omega(1),\Omega(1))$-expanding in the sense of Definition~\ref{def:ccexp}.} 2-term chain complexes over $R_\ell$ of constant locality such that $n_0,n_1,n_1-n_0=\Theta(\log\ell)$ for $n_i:=\dim_{R_\ell}A_i^{(\ell)}$, as for instance given in Proposition~\ref{prop:classtan}. Let $\cB^{(\ell)}=(R_\ell\xrightarrow{1+X}R_\ell)$ be the repetition code complex. Then the 3-term chain complexes $\cC^{(\ell)}:=\cA^{(\ell)}\otimes_{R_\ell}\cB^{(\ell)}$ form an explicit family of $[[\Theta(\ell\log\ell),\Theta(\log\ell),\Theta(\ell))]]$ quantum LDPC codes.
\end{theorem}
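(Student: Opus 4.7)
The plan is to prove the theorem in three steps---block length and LDPC property (immediate), dimension (a homological computation), and distance (the main challenge). Writing $H := \partial^{\cA^{(\ell)}} \in R_\ell^{n_0 \times n_1}$ and $P := 1+X \in R_\ell$, the lifted product unpacks to
\begin{equation*}
  \cC^{(\ell)}:\quad R_\ell^{n_1} \xrightarrow{\binom{H}{P}} R_\ell^{n_0} \oplus R_\ell^{n_1} \xrightarrow{(P,\, H)} R_\ell^{n_0}.
\end{equation*}
The middle term has $\bF_2$-dimension $(n_0+n_1)\ell = \Theta(\ell\log\ell)$, giving the block length, and the LDPC property is inherited from the constant $\bF_2$-locality of $H$ together with the weight-$2$ operator $P$.

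For the dimension, the Euler characteristic $\chi(\cC^{(\ell)}) = n_1\ell - (n_0+n_1)\ell + n_0\ell = 0$ gives $K = \dim H_1 = \dim H_0 + \dim H_2$. Let $\sigma : R_\ell \to \bF_2$ denote the augmentation $X \mapsto 1$ and $N := 1 + X + \cdots + X^{\ell-1}$, so that $\ker P = \bF_2 \cdot N$ and $\mathrm{im}\, P = \ker \sigma$. Setting $\bar H := \sigma(H) \in \bF_2^{n_0 \times n_1}$, a direct computation yields $H_2 \cong \ker \bar H$ (via $w \mapsto Nw$) and $H_0 \cong \mathrm{coker}\, \bar H$. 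Hence $K = n_0 + n_1 - 2\,\mathrm{rank}(\bar H) = n_1 - n_0 = \Theta(\log\ell)$, once one arranges via Proposition~\ref{prop:classtan} that $\bar H$ has full row rank $n_0$.

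For distance, the upper bound $D = O(\ell)$ follows from an explicit short logical (detailed in \cite{panteleev_quantum_2022}); the key structural observation enabling this is that $(1+X)$ annihilates both $H_0$ and $H_2$, so the $R_\ell$-module $H_1$ has trivial $X$-action, and every nontrivial class admits a cyclically invariant representative whose nonzero components equal $N$, of weight $\ell$. For the lower bound $D = \Omega(\ell)$, I would consider a nontrivial $Z$-logical $(a_0, a_1) \in Z_1 \setminus B_1$, so $P a_0 + H a_1 = 0$. Applying $\sigma$ componentwise kills $P$ and gives $\bar H \bar a_1 = 0$. Then I would case-split: (i) if $\bar a_1 \neq 0$, combine $\bar a_1 \in \ker \bar H$ with the $\bF_2$-expansion of $\cA^{(\ell)}$ applied to the cyclic shifts $\{X^i a_1\}_{i \in \bZ/\ell\bZ}$ (averaged suitably) to force $|a_0|+|a_1| = \Omega(\ell)$; (ii) if $\bar a_1 = 0$, then $a_1 = P a_1'$ for some $a_1'$, and substitution gives $a_0 + Ha_1' = Nv$ for some $v \in \bF_2^{n_0}$, from which either $(a_0,a_1)$ is a boundary (when $v \in \mathrm{im}\,\bar H$) or $Nv$ contributes weight $\geq \ell$. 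The $X$-logicals are handled symmetrically via the $\bF_2$-expansion of the cochain complex $(\cA^{(\ell)})^*$.

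The main obstacle will be subcase~(i) of the distance lower bound. Merely using that $\bar a_1 \in \ker \bar H$ yields at best $|\bar a_1| \geq \Omega(1)$, since $\bar H$ is only an $O(\log\ell) \times O(\log\ell)$ matrix whose kernel carries no useful distance. The key is that the $\bF_2$-expansion hypothesis on $\cA^{(\ell)}$ controls \emph{all} low-$\bF_2$-weight elements of $A_1 = R_\ell^{n_1}$, not only trivial lifts of $\bar \cA$-codewords; applying it to a carefully chosen $\bF_2$-linear combination of the cyclic shifts $X^i a_1$ should extract a violator of the non-boundary condition whenever $|a_0|+|a_1| = o(\ell)$. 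This cyclic-averaging mechanism is the direct ancestor of the decoder in Section~\ref{sec:lpdec}, which likewise iteratively sums cyclic shifts of the syndrome to reduce quantum decoding to classical noisy-syndrome decoding.
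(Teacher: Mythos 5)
The paper does not actually prove this theorem: it is imported wholesale from \cite{panteleev_quantum_2022} (instantiated with the explicit lifted expanders of \cite{jeronimo_explicit_2022}), so there is no in-paper argument to compare against, and your attempt must stand on its own. Your block-length/LDPC and dimension steps are essentially correct: the identifications $H_2\cong\ker\bar H$ and $H_0\cong\operatorname{coker}\bar H$ together with $\chi(\cC^{(\ell)})=0$ do give $K=n_0+n_1-2\operatorname{rank}(\bar H)$, and in fact $n_1-n_0\leq K\leq n_0+n_1$ yields $K=\Theta(\log\ell)$ without the full-row-rank claim you attribute to Proposition~\ref{prop:classtan} (which does not assert it, and expansion does not imply it, since a vector $Nv$ in the left kernel has weight $\geq\ell$ and so is invisible to small-set expansion). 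Your observation that $(1+X)$ kills $H_1$ (because $(1+X)(a_0,a_1)=\partial_2 a_1$ for any cycle) is also correct.

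The genuine gap is the distance bound, which is the entire content of the theorem. In your case (i) you stop at ``applying the expansion to a carefully chosen $\bF_2$-linear combination of the cyclic shifts $X^ia_1$ should extract a violator'': this is the hard step, not a routine verification. The mechanism in \cite{panteleev_quantum_2022} is to hit the cycle condition with the prefix operators $\sum_{i\in[k]}X^i$, so that the $(1+X)a_0$ term telescopes to $(1+X^k)a_0$ of weight $\leq 2|a_0|$ while the $a_1$-contribution accumulates, and then to run an averaging/counting argument over $k$ together with expansion to force $|a_0|+|a_1|=\Omega(\ell)$; none of this is supplied by ``$\bar a_1\in\ker\bar H$ plus expansion.'' Your case (ii) as written is also broken: from $a_0=Ha_1'+Nv$ with $v\notin\im\bar H$ you cannot conclude $|a_0|\geq\ell-O(|a_1|)$, because the antiderivative $a_1'$ of $a_1=(1+X)a_1'$ can have weight $\Theta(\ell)$ even when $|a_1|=O(1)$ (e.g.\ a component $1+X^{\ell/2}$), so $Ha_1'$ can cancel most of $Nv$; handling this again requires the prefix-sum/expansion machinery rather than a weight count. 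Finally, the upper-bound sketch is overstated: invariant representatives have weight a multiple of $\ell$, and the existence of a nontrivial logical of weight $O(\ell)$ (rather than $O(\ell\log\ell)$) needs constant-weight vectors in $\ker\bar H$ or constant-weight coset leaders of $\operatorname{coker}\bar H$, which the stated hypotheses do not provide, so deferring that bound to \cite{panteleev_quantum_2022} is fine but your justification for it is not.
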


Chain complexes $\cA^{(\ell)}$ with the necessary properties for Theorem~\ref{thm:pkdisinf} to hold are provided in Proposition~\ref{prop:classtan}. These complexes are constructed as classical Tanner codes on a $\Theta(\ell\log\ell)$-vertex expander graph that respects a group action of $\bZ/\ell\bZ$.

\section{Technical Overview of Decoding Algorithms}
\label{sec:tecoverview}
This section provides an overview of our decoding algorithms. Section~\ref{sec:hgpdecinf} describes our hypergraph product decoder, which serves as a good warm-up to Section~\ref{sec:lpdecinf}, where we describe our lifted-product decoder that we use to prove our main result, namely Theorem~\ref{thm:lpdecinf}. The full descriptions and proofs of these results can be found in Section~\ref{sec:hgpdec} for the hypergraph product, and Section~\ref{sec:lpdec} for the lifted product.

\subsection{Decoding Hypergraph Products}
\label{sec:hgpdecinf}
In this section, we describe our decoding algorithm for the hypergraph product of an expander-based classical LDPC code with a repetition code. Specifically, we will prove Theorem~\ref{thm:hgpdecinf} below. We remark that in Appendix~\ref{sec:spacetime}, we give an alternative, fully classical view of this decoding algorithm by interpreting the repetition code as a time axis. Similar, but more involved, techniques will be used in our lifted product decoder described in Section~\ref{sec:lpdecinf} below.

\begin{theorem}[Informal statement of Theorem~\ref{thm:hgpdec}]
  \label{thm:hgpdecinf}
  Let $\cA_*=(A_1\xrightarrow{\partial^{\cA}}A_0)$ be a 2-term chain complex of constant locality such that $\cA_*$ and $\cA^*$ are both $(\Theta(N^{\cA}),\Theta(N^{\cA}),\Theta(1))$-noisy-syndrome decodable in time $O(N^{\cA})$, where $N^{\cA}:=\dim A_0+\dim A_1$. For some $\ell\in\bN$, let $\cB_*=(B_1\xrightarrow{\partial^{\cB}}B_0)$ be the 2-term chain complex of a length-$\ell$ repetition code. Let
  \begin{equation*}
    \cC_* = \cA\otimes_{\bF_2}\cB
  \end{equation*}
  be the hypergraph product. Then for every $\delta>0$, there exists a randomized decoder for this quantum code that protects against $e=\Theta(\min\{N^{\cA},\ell\})$ adversarial errors, runs in time $O(N^{\cA}\ell\log(1/\delta))$, and has failure probability $\leq\delta$. Furthermore, there exists a deterministic decoder protecting against $e$ errors that runs in time $O(N^{\cA}\ell^2)$.
\end{theorem}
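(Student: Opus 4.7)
The key insight is to exploit the ``time axis'' provided by the length-$\ell$ cyclic repetition code. Identify $A_i \otimes B_j \cong A_i^\ell$, indexing positions by $[\ell]$, and expand $u \in A_0\otimes B_1$ as $u=(u_k)_{k \in [\ell]}$ with $u_k\in A_0$, and similarly $v = (v_k)_{k \in [\ell]}$ with $v_k\in A_1$. Using $\partial^\cB \1_i = \1_i + \1_{i+1}$, the $Z$-syndrome $s = \partial_1(u,v)\in A_0^\ell$ has coordinates $s_j = \partial^\cA v_j + u_j + u_{j-1}$. Summing consecutive $s_k$'s over a cyclic interval telescopes the $u$-contribution, so for any reference $j_0 \in [\ell]$ and $j \in [\ell]$,
\begin{equation*}
  \sigma_j \; := \; \sum_{k=j_0}^{j-1} s_k \; = \; u_{j_0-1} + u_{j-1} + \partial^\cA V_j,
\end{equation*}
where $V_j := \sum_{k=j_0}^{j-1} v_k$. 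When $u_{j_0-1} = 0$, this is precisely a noisy-syndrome decoding instance for $\cA_*$: code error $V_j$ (weight $\le |v|$) and syndrome noise $u_{j-1}$ (weight $\le |u_{j-1}|$).

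The $Z$ decoder chooses a reference $j_0 \in [\ell]$ (uniformly at random for the randomized version; enumerated for the deterministic version), computes all $\sigma_j$ in $O(\ell N^\cA)$ time by incremental summation, applies the noisy-syndrome decoder $D_\cA$ of $\cA_*$ to each $\sigma_j$ to obtain $\tilde M_j \in A_1$ (which should approximate $V_j$), and outputs $v'_j := \tilde M_j + \tilde M_{j+1}$ together with $u'_j := \sigma_{j+1} + \partial^\cA \tilde M_{j+1}$. For a ``good'' reference with $u_{j_0-1}=0$ and $e := |u|+|v|$ a sufficiently small constant multiple of $\min\{N^\cA,\ell\}$ (so the noisy-syndrome guarantees apply), the gauge $w_j := \tilde M_j + V_j$ satisfies $|w_j| \le \gamma |u_{j-1}|$, and a short calculation shows $(u'-u, v'-v) = \partial_2 \tilde w \in \im\partial_2$ with $\tilde w_j := w_{j+1}$, placing the output in the correct homology class per Definition~\ref{def:decoder}. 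The running time per trial is $O(\ell N^\cA)$, dominated by the $\ell$ invocations of $D_\cA$.

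Because the algorithm cannot directly tell whether a given $j_0$ is good, we verify each candidate $(u',v')$ by (i) checking the syndrome $\partial_1(u',v') = s$ and (ii) testing a weight bound $|u'| + |v'| \le C e$ for a suitable constant $C$. For good $j_0$ the gauge bound yields $|u'-u| + |v'-v| = O(\gamma e) = O(e)$, so both checks pass. Conversely, any candidate passing both checks must lie in $(u,v) + \im\partial_2$: (i) forces $(u'-u, v'-v) \in \ker\partial_1$, while (ii) implies combined weight at most $(C+1)e$, which is strictly less than the code distance $D = \Theta(\min\{N^\cA,\ell\})$ from Proposition~\ref{prop:hgpdistance} (for $e$ a sufficiently small constant multiple of $\min\{N^\cA, \ell\}$), so the deviation cannot be a nontrivial homology cycle. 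Since $|u| \le e$, a uniformly random $j_0$ satisfies $u_{j_0-1} = 0$ with probability $\ge 1 - e/\ell = \Omega(1)$; thus $O(\log(1/\delta))$ trials give failure probability $\le \delta$, for total randomized time $O(N^\cA \ell \log(1/\delta))$. The deterministic version enumerates all $\ell$ references in $O(N^\cA \ell^2)$ time and is guaranteed to hit a good one since $|u| < \ell$.

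The $X$ decoder follows by applying the identical algorithm to the cochain complex $\cC^* \cong \cA^* \otimes \cB^*$, using the given noisy-syndrome decoder for $\cA^*$ and noting that $\cB^*$ is again a length-$\ell$ repetition code. The main obstacle I anticipate is bridging the gap between the classical decoder's approximate-correctness guarantee and the quantum decoder's requirement of exact homology-coset correctness; this is handled by the weight-verification step combined with Proposition~\ref{prop:hgpdistance}. A secondary subtlety is that we must insist $u_{j_0-1} = 0$ \emph{exactly} (not merely be small), since otherwise $u' - u$ would acquire a spurious constant term $u_{j_0-1}$ at every time step, equal to $u_{j_0-1} \otimes \mathbf{1} \in A_0 \otimes \ker\partial^\cB$, which may represent a nontrivial element of $H_0(\cA) \otimes H_1(\cB) \subseteq H_1(\cC)$ whenever $u_{j_0-1} \notin \im\partial^\cA$.
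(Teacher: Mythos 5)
Your proposal is correct and follows essentially the same route as the paper: telescoping prefix sums of the syndrome along the repetition-code direction with a random (or enumerated) cyclic reference chosen so that the anchor coordinate of the $A_0$-part vanishes exactly, black-box noisy-syndrome decoding of those sums, differencing consecutive decodings to recover the $A_1$-part, and a syndrome-plus-weight check combined with the hypergraph-product distance bound to certify the homology coset. The only cosmetic difference is that you recover the $A_0$-part in closed form as $u'_j=\sigma_{j+1}+\partial^{\cA}\tilde{M}_{j+1}$ and exhibit the explicit chain $\tilde{w}$ with $(u'-u,v'-v)=\partial_2\tilde{w}$, whereas the paper solves for the minimum-weight $\tilde{x}$ satisfying $(I\otimes(1+X))\tilde{x}=s-(\partial^{\cA}\otimes I)\tilde{y}$ via Lemma~\ref{lem:inv1pX}; the two are equivalent in substance.
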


If we let $\cA_*$ be a $[N,\Theta(N),\Theta(N)]$ classical expander-based LDPC code and we set $\ell=\Theta(N)$, then $\cC=\cA\otimes\cB$ is the hypergraph product of $\cA$ with the length-$\Theta(N)$ repetition code $\cB$, so $\cC$ yields a $[[\Theta(N^2),\Theta(N),\Theta(N)]]$ quantum LDPC code. Theorem~\ref{thm:hgpdecinf} provides a randomized decoder for this code $\cC$ with almost linear running time with respect to the block length, that corrects arbitrary adversarial errors of weight linear in the code distance with negligible failure probability. Theorem~\ref{thm:hgpdecinf} also provides a deterministic decoder (with zero probability of failure) that corrects the same number of errors in time $O(N^2)^{3/2}$, where $\Theta(N^2)$ is the block length.

This $[[\Theta(N^2),\Theta(N),\Theta(N)]]$ qLDPC code $\cC$ has worse dimension than the $[[\Theta(N^2),\Theta(N^2),\Theta(N)]]$ codes constructed as products of two expander-based codes that were described above in Section~\ref{sec:hgpdef}. However, as described in Section~\ref{sec:introconstruct}, codes such as $\cC$ constructed as products of an expander-based code with a repetition code arise naturally in certain settings. Theorem~\ref{thm:hgpdecinf} also serves as a good warm-up to our decoding algorithm for the \textit{lifted product} of such codes, which is more involved, as we will subsequently describe.

Our algorithm in Theorem~\ref{thm:hgpdecinf} differs from the ``flip''-style decoders used for products of two expander-based LDPC codes. The parity-check matrix $\partial^{\cB}$ of the repetition code has sufficiently poor expansion (its Tanner graph is a single long cycle) that we are not able to use a flip-style decoder $\cC$. Instead, our algorithm leverages the symmetry of the repetition code to compute prefix sums of the syndrome. Our algorithm then passes these prefix sums (in a black-box fashion) through a classical decoder that is robust to errors on both code bits and syndrome bits. A sketch of our algorithm and analysis is provided below, with an accompanying illustration in Figure~\ref{fig:algpic}; for the full details, the reader is referred to Section~\ref{sec:hgpdec}.

\begin{figure}
  \centering
  \includegraphics[width=6in]{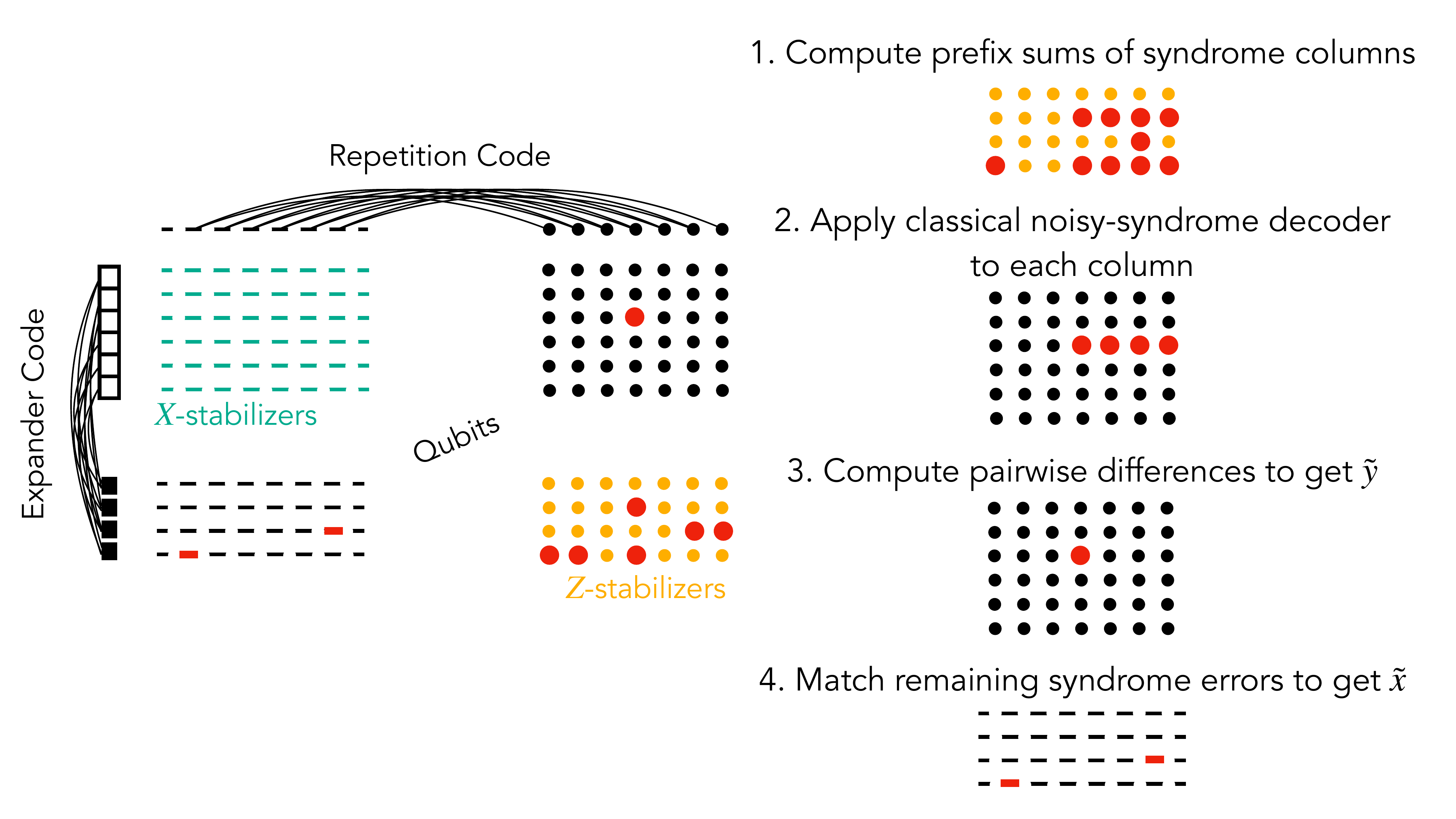}
  \caption{\label{fig:algpic} Illustration of our decoding algorithm in Theorem~\ref{thm:hgpdecinf} for the hypergraph product of a classical expander-based LDPC code with a repetition code. The hypergraph product is drawn on the left; there are two matrices of qubits, corresponding to $x$ and $y$ respectively in the proof sketch of Theorem~\ref{thm:hgpdecinf}, where the errors are supported on the qubits marked by red dots. The support of the $Z$-syndrome $s=\partial_1^{\cC}(x,y)=(I\otimes\partial^{\cB})x+(\partial^{\cA}\otimes I)y$ is also marked by red dots within the matrix of $Z$-stabilizers (i.e.~parity checks of the boundary map $\partial_1^{\cC}$). The algorithm outputs an estimate $(\tilde{x},\tilde{y})$ for the true error $(x,y)$.}
\end{figure}

\begin{proof}[Proof sketch of Theorem~\ref{thm:hgpdecinf}]
  By the symmetry of $\cC_*$ and its cochain complex $\cC^*$ (see Definition~\ref{def:hgpprod}), it suffices to describe the decoder $D_0:C_0\rightarrow C_1$ as described in Definition~\ref{def:decoder}; the construction of $D^2:C^2\rightarrow C^1$ will be exactly analogous. Thus, if we fix some error $c=(x,y)\in C_1=A_0\otimes\bF_2^\ell\oplus A_1\otimes\bF_2^\ell$, our goal is to construct an efficient algorithm $D_0(s)$ that receives as input the syndrome
  \begin{equation}
    \label{eq:hgpsyninf}
    s = \partial_1^{\cC}(c) = (I\otimes\partial^{\cB})x+(\partial^{\cA}\otimes I)y,
  \end{equation}
  and outputs some $\tilde{c}\in C_1$ that differs from $c$ by an element $\tilde{c}-c\in\im\partial_2^{\cC}$. In fact, we can define $e=\Theta(\min\{N^{\cA},\ell\})$ to be sufficiently small compared to the distance of $\cC$ so that it suffices to output any $\tilde{c}$ of weight $|\tilde{c}|\leq O(|c|)$ such that $\partial_1^{\cC}\tilde{c}=s$.

  For $i\in\{0,1\}$, let $N_i=\dim A_i$, so that $A_i\otimes\bF_2^\ell\cong\bF_2^{N_i\times\ell}$ can be viewed as the space of $N_i\times\ell$ matrices over $\bF_2$.

  First observe that if $x=0$, then $s=(\partial^{\cA}\otimes I)y$ is simply the matrix whose columns are syndromes of the columns of $y$ under the parity-check matrix $\partial^{\cA}$ of $\cA$. By the assumption that $\cA$ has an efficient noisy-syndrome decoder $D^{\cA}:A_0\rightarrow A_1$, so we can simply apply $D^{\cA}$ to each column of $s$ independently to recover $y$. Note that here we did not even need the ability of $D^{\cA}$ to accomodate noisy syndromes.

  Now in the general case where $x\neq 0$, then we can interpret $s=(I\otimes\partial^{\cB})x+(\partial^{\cA}\otimes I)y$ as being the matrix whose columns are \textit{noisy} syndromes of the columns of $y$ under the parity-check matrix $\partial^{\cA}$, where $(I\otimes\partial^{\cB})x$ provides the syndrome errors. Letting $a_i$ denote the $i$th column of a matrix $a$, then observe that $((I\otimes\partial^{\cB})x)_i=x_{i-1}+x_i$, where we view $[\ell]\cong\bZ/\ell\bZ$ so that $i-1$ is taken$\pmod{\ell}$.

  Therefore an initial attempt at decoding would be to again apply $D^{\cA}$ to each column of $s$. This would yield a matrix with columns $D^{\cA}(s_i)$ satisfying $|D^{\cA}(s_i)-y_i|\leq O(|x_{i-1}+x_i|)$ by the definition of noisy-syndrome decodability. However, it is not clear that we have made any progress, as we simply know that the matrix with columns $D^{\cA}(s_i)$ differs from $y$ in at most $\sum_iO(|x_{i-1}+x_i|)\leq O(|x|)$ positions. Because we could have $|x|=\Theta(|y|)=\Theta(e)$, the all-zeros matrix could provide an equally good approximation to $y$.

  Thus we need a way to isolate, or amplify, the effect of the errors $y$ while suppressing the syndrome errors from $x$, so that we can gain more information from our noisy-syndrome decoder $D^{\cA}$. The solution is to apply $D^{\cA}$ to prefix sums of $s$. That is, for every $k\in[\ell+1]$, we compute%\footnote{As describe in Section~\ref{sec:hgpdec}, our actual algorithm computes prefix sums starting from some offset $j\in[\ell]$ to deal with the scenario where many errors occur in the $0$th column. However, here we ignore this detail for simplicity of presentation.}
  \begin{equation}
    \label{eq:tadefinf}
    \tilde{a}_k = D^{\cA}\Bigl(\sum_{i\in[k]}s_i\Bigr),
  \end{equation}
  which we hope to be a good approximation to the true prefix sums
$a_k = \sum_{i\in[k]}y_i$. 

  The key observation here is that while passing to these prefix sums potentially allows an error in some column $y_i$ to propagate to all $a_k$ for $k>i$, the contribution of errors in $x$ to prefix sums of the syndrome $s$ is not amplified. Formally, by definition $\sum_{i\in[k]}((I\otimes\partial^{\cB})x)_i=\sum_{i\in[k]}(x_{i-1}+x_i)=x_{-1}+x_{k-1}$ (where $x_{-1}=x_{-1\pmod{\ell}}=x_{\ell-1}$). Thus if $x_{-1}=0$, then $x$ contributes at most $|x|$ nonzero entries to the matrix whose $k$th column is the prefix sum $\sum_{i\in[k]}s_i$. We can guarantee that $x_{-1}=0$ by permuting the columns of all matrices by an appropriate (e.g.~random) cyclic shift. We obtain the randomized decoder in Theorem~\ref{thm:hgpdecinf} by trying a small number of random cyclic shifts, while we obtain the deterministic decoder by trying all $\ell$ possible cyclic shifts.

  Thus by the definition of noisy-syndrome decoding, it must hold that
  \begin{equation}
    \label{eq:taainf}
    \sum_{k\in[\ell+1]}|\tilde{a}_k-a_k| \leq O(|x|).
  \end{equation}
  Note that here we are able to apply $(\Theta(n),\Theta(n),\Theta(1))$-noisy-syndrome decoding because $|x|$ (which bounds the syndrome errors) and each $|a_k|\leq|y|$ (which bounds the code bit errors) are by assumption bounded above by $e\leq\Theta(n)$.

  Therefore after computing $\tilde{a}_k$ for every $k\in[\ell+1]$, our decoder computes an estimate $\tilde{y}\in\bF_2^{N_1\times\ell}$ by letting the $k$th column be $\tilde{y}_k:=\tilde{a}_{k+1}-\tilde{a}_k$. Tracing through the definitions, we see that $\tilde{y}-y=(I\otimes\partial^{\cB})b$ for the matrix $b\in\bF_2^{N_1\times\ell}$ given by $b_k:=\tilde{a}_{k+1}-a_{k+1}$, as for every $k$ we have $(\tilde{y}-y)_k=b_{k-1}+b_{k}$.

  Therefore we have shown that $\tilde{y}=y+(I\otimes\partial^{\cB})b$ for a matrix $b$ satisfying $|b|\leq O(|x|)$ by~(\ref{eq:taainf}). Thus our decoder simply computes the minimum possible weight $\tilde{x}\in\bF_2^{N_0\times\ell}$ such that $\partial_1^{\cC}(\tilde{x},\tilde{y})=s$, or equivalently, such that $(I\otimes\partial^{\cB})\tilde{x}=s-(\partial^{\cA}\otimes I)\tilde{y}$. It then outputs $\tilde{c}:=(\tilde{x},\tilde{y})$. As $\tilde{x}'=x-(\partial^{\cA}\otimes I)b$ is a valid such choice of $\tilde{x}$, the decoder is guaranteed to output $\tilde{c}$ of weight $|\tilde{c}|=|\tilde{x}|+|\tilde{y}|\leq O(|x|+|y|)$, as desired.
\end{proof}

\subsection{Decoding Lifted Products}
\label{sec:lpdecinf}
In this section, we outline the main ideas in our proof of Theorem~\ref{thm:lpdecinf}, which provides a decoding algorithm for the nearly-linear-distance lifted-product codes in Theorem~\ref{thm:pkdisinf}. This algorithm is one of the main technical contributions of our paper.

We first reiterate Theorem~\ref{thm:lpdecinf} in slightly more technical terms:

\begin{theorem}[Informal statement of Theorem~\ref{thm:lpdec}]
  \label{thm:lpdecinf2}
  For some $\ell\in\bN$ that is a power of $2$, define $\cA^{(\ell)}$, $\cB^{(\ell)}$, and $\cC^{(\ell)}$ as in Theorem~\ref{thm:pkdisinf}. Let $N=\ell\log\ell$, and assume that $\cA^{(\ell)}_*$ and its cochain complex ${\cA^{(\ell)}}^*$ are both $(\Theta(N),\Theta(N),\Theta(1))$-noisy-syndrome decodable in time $O(N)$. Then for every $0<\epsilon\leq 1/2$ and every $\delta>0$, there exists a randomized decoder for $\cC^{(\ell)}$ that protects against $e=\Theta(\epsilon\ell)$ adversarial errors with failure probability $\leq\delta$ and runs in time $O(N^{2+2\epsilon}\cdot\log(1/\delta))$.
\end{theorem}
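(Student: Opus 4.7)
The plan is to construct the decoder $D_0 : C_0 \to C_1$ for $\cC^{(\ell)}$. The decoder $D^2 : C^2 \to C^1$ then follows by an analogous argument applied to the cochain complex, thanks to the symmetric role played by $\cA^{(\ell)}_*$ and ${\cA^{(\ell)}}^*$ in the lifted product $\cC^{(\ell)} = \cA^{(\ell)} \otimes_{R_\ell} \cB^{(\ell)}$ and the hypothesis that both are noisy-syndrome decodable. Given a syndrome
\[
  s \;=\; (1+X)\,x \;+\; \partial^{\cA}\, y \;\in\; R_\ell^{n_0}
\]
of a true error $(x,y) \in R_\ell^{n_0} \oplus R_\ell^{n_1}$ with $|x|+|y| \le e = \Theta(\epsilon\ell)$, the goal is to output $(\tilde x,\tilde y)$ with $(1+X)\tilde x + \partial^{\cA}\tilde y = s$ and $|\tilde x|+|\tilde y| = O(e)$. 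Because $e$ is a sufficiently small fraction of the $\Theta(\ell)$ distance of $\cC^{(\ell)}$, any such low-weight preimage automatically agrees with $(x,y)$ modulo $\im\partial_2^{\cC^{(\ell)}}$, which is what Definition~\ref{def:decoder} requires.

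The core tool is a black-box use of the noisy-syndrome decoder $D^{\cA}$ for $\cA^{(\ell)}$. For any polynomial $g(X) \in R_\ell$, scaling the syndrome gives
\[
  g(X)\, s \;=\; \partial^{\cA}\bigl(g(X)\,y\bigr) \;+\; (1+X)\,g(X)\,x,
\]
which is a noisy $\cA$-syndrome of the code error $g(X)\,y$ with syndrome noise $(1+X) g(X)\,x$. Provided $|g(X) y|$ and $|(1+X) g(X) x|$ both lie within the $\Theta(N)$ tolerance of $D^{\cA}$, feeding $g(X) s$ into $D^{\cA}$ returns $\tilde z_g$ with $|\tilde z_g - g(X)\,y| \le O(|(1+X)g(X)\,x|)$. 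Choosing $g$ to be a sum of cyclic shifts is especially useful: while $g$ itself may have large weight, the factor $(1+X)g(X)$ exhibits the telescoping cancellation that underlies the Panteleev--Kalachev distance bound, so the syndrome noise seen by $D^{\cA}$ stays small even though the ``signal'' $g(X)y$ has been amplified.

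The decoder runs $T = \Theta(\log\ell)$ iterations. At iteration $\tau = 0,1,\dots,T-1$ it maintains a polynomial $g_\tau(X)$ that is a sum of $2^\tau$ cyclic shifts, obtained from $g_{\tau-1}$ by appending $2^{\tau-1}$ freshly random shifts, and it calls $D^{\cA}$ on $g_\tau(X)\,s$ to obtain an estimate $\tilde z_\tau$ of $g_\tau(X)\,y$. The inductive invariant, to hold with constant probability over the fresh randomness at each level, is that (i)~$|g_\tau(X)\,y|$ stays within the code-bit tolerance of $D^{\cA}$, (ii)~$|(1+X)g_\tau(X)\,x|$ stays within the syndrome-noise tolerance, and (iii)~the deviation $|\tilde z_\tau - g_\tau(X)\,y|$ remains bounded by $O(|x|)$. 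The random shifts serve as a combinatorial averaging that keeps $(1+X)g_\tau(X)\,x$ from blowing up to its worst-case weight $\Theta(2^\tau|x|)$. After iteration $T-1$, the collection $\{\tilde z_\tau\}$ together with the recursion $g_\tau = g_{\tau-1} + \sum_i X^{k_i^{(\tau)}}$ gives enough information to assemble a global estimate $\tilde y$ of $y$; one then recovers $\tilde x$ by adjusting $\tilde y$ so that $s-\partial^{\cA}\tilde y \in \im(1+X)$ and picking a minimum-weight preimage of $s-\partial^{\cA}\tilde y$ under multiplication by $1+X$. Because $(1+X)x$ is itself such a preimage, the weight of $\tilde x$ is controlled in terms of $|x|$ plus the small adjustment cost on $\tilde y$.

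Since each of the $T = \Theta(\log\ell)$ iterations succeeds with constant probability, a single pass succeeds with probability $2^{-\Theta(\log\ell)} = \ell^{-\Theta(1)}$. Standard amplification via $\ell^{\Theta(1)}\log(1/\delta)$ independent trials, each validated by recomputing $(1+X)\tilde x + \partial^{\cA}\tilde y$ and checking the output weight, drives the failure probability below $\delta$. Each trial costs time polynomial in $N$, producing the claimed $O(N^{2+O(\epsilon)}\log(1/\delta))$ runtime. The main obstacle to making this rigorous is the probabilistic analysis inside the inductive invariant: one must simultaneously bound $|g_\tau(X)\,y|$ and $|(1+X)g_\tau(X)\,x|$ as random shifts accumulate, in terms of $\epsilon$ (which controls the slack in the decoder's noise tolerances) and of the arbitrary adversarial support of $(x,y)$. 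The doubling structure of $2^\tau$ shifts per iteration, rather than a one-shot choice of many shifts, is precisely what converts a per-iteration constant failure probability into the per-pass $1/\poly(\ell)$ failure probability needed for the amplification, and the tension between choosing $\epsilon$ small enough for the invariant to propagate and large enough to control the number of repetitions is the source of the $N^{2+2\epsilon}$ exponent.
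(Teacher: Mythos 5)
There is a genuine gap, and it is precisely at the point you flag as ``the main obstacle'': your two invariants cannot both be maintained by any choice of randomness, because your scheme always feeds $D^{\cA}$ a scaled version of the \emph{raw} syndrome $g_\tau(X)s$. First, the telescoping cancellation you invoke for the syndrome noise only happens when $g(X)$ is a sum of \emph{consecutive} shifts, $g(X)=\sum_{i\in[k]}X^i$, in which case $(1+X)g(X)x=(1+X^k)x$ has weight $\leq 2|x|$ independent of $k$; for $2^\tau$ independently random shifts there is no such cancellation over $\bF_2$, and $|(1+X)g_\tau(X)x|$ genuinely grows like $2^\tau|x|$, so invariant~(ii) fails once $2^\tau$ is superconstant. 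Second, and more fundamentally, even with consecutive shifts the ``signal'' $g_\tau(X)y$ itself overloads the decoder: with $|y|$ as large as $\Theta(\epsilon\ell)=\Theta(\epsilon N/\log N)$, a sum of $k$ shifts of $y$ can have weight $\Theta(k\epsilon N/\log N)$, which exceeds the $\Theta(N)$ code-error tolerance of $D^{\cA}$ as soon as $k\gg(\log N)/\epsilon$ --- yet your iterations run up to $2^\tau=\Theta(\ell)$ shifts. So invariant~(i) also fails, and no probabilistic averaging over the choice of shifts can rescue it, since the obstruction is the adversarial weight of $y$, not of the noise.

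The paper's algorithm resolves exactly this by never decoding prefix sums of $s$ beyond constant length directly. Instead it maintains an explicit estimate $\tilde{y}_\tau$ of $y$ and in each round applies $D^{\cA}$ to consecutive prefix sums $\sum_{i\in[k]}X^i(s-\partial^{\cA}\tilde{y}_\tau)$ of the \emph{residual} syndrome, for $k$ up to a scale $t$ that doubles each round. The quantity being decoded is then $\sum_{i\in[k]}X^i(y-\tilde{y}_\tau)$, and a structural invariant on the residual ($(\Delta_1,\Delta_2,t)$-approximate compatibility, i.e.\ $\tilde{y}_\tau-y=(1+X)u+v$ with $|u|=O(\ell)$ and $v$ having small prefix sums at scale $t$) guarantees via Lemma~\ref{lem:appcombound} that these prefix sums stay $O(\ell)$ even as $k$ grows, while the syndrome noise is always $(1+X^k)x$ of weight $\leq 2|x|$. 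The per-round randomness is a random offset $j\in[t]$ (analyzed with Markov's inequality) that controls where the estimate's ``gaps'' fall, and those gaps are filled by a majority vote over $t$ windows; your proposal has no analogue of either the residual subtraction, the compatibility invariant, or the gap-filling step, and the vague ``assemble $\tilde y$ from $\{\tilde z_\tau\}$'' step does not substitute for them. Your final reconstruction of $\tilde x$ as a minimum-weight preimage under $1+X$, and the amplification/runtime accounting, do match the paper, but they sit on top of an iteration whose invariants cannot be propagated as stated.
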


In Theorem~\ref{thm:lpdecinf2}, we assume that $\ell$ is a power of $2$ to avoid rounding issues in one step of the proof, but we believe that the algorithm can be extended to accomodate all $\ell\in\bN$.

As described in Remark~\ref{remark:noisysyn} above, we expect many expander-based $[\Theta(N),\Theta(N),\Theta(N)]$ classical LDPC codes to be $(\Theta(N),\Theta(N),\Theta(1))$-noisy-syndrome decodable, and we prove the precise statement we need in Proposition~\ref{prop:classtan}.

Below, we outline the main ideas in our proof of Theorem~\ref{thm:lpdecinf2}. For the full details, the reader is referred to Section~\ref{sec:lpdec}.

\begin{proof}[Proof sketch of Theorem~\ref{thm:lpdecinf2}]
  At a high level, we follow the idea of applying noisy-syndrome decoding to prefix sums of the received syndrome, as was described in Section~\ref{sec:hgpdecinf} above for the proof of Theorem~\ref{thm:hgpdecinf}. However, the prefix sums in the lifted product case rapidly become too high-weight to apply the noisy-syndrome decoder $D^{\cA}$ of $\cA$. We must instead start by taking prefix sums of constant size, and then iteratively passing larger prefix sums through $D^{\cA}$ to obtain successively better approximations to the true error. Each successive improvement in our error approximation allows us to take an even larger prefix sum without overloading the error tolerance of $D^{\cA}$. We can ultimately arrive at a true decoding after $\Theta(\log\ell)$ of these iterations. However, each iteration has a constant probability of failure, as we must sample a random cyclic shift in like in Theorem~\ref{thm:hgpdecinf}. Thus we obtain an overall success probability of $2^{-\Theta(\log\ell)}=1/\poly(\ell)$, which we amplify by repeating the algorithm $\poly(\ell)$ times.

  We now provide more details. Fix $\ell\in\bN$, and let $\cA=\cA^{(\ell)}$, $\cB=\cB^{(\ell)}$, and
  \begin{equation*}
    \cC_* = \cC_*^{(\ell)} = (\cA\otimes_{R_\ell}\cB)_* = \left(A_1 \xrightarrow{\partial^{\cA}+(1+X)} A_0\oplus A_1 \xrightarrow{(1+X)+\partial^{\cA}} A_0\right).
  \end{equation*}
  By the symmetry of $\cC_*$ and $\cC^*$, it suffices to construct the decoder $D_0:C_0\rightarrow C_1$ described in Definition~\ref{def:decoder}, as the decoder $D^2:C^2\rightarrow C^1$ will be exactly analogous.

  Thus fix some error $c=(x,y)\in C_1=A_0\oplus A_1$. Our goal is to construct an efficient algorithm $D_0(s)$ that receives as input the syndrome
  \begin{equation*}
    s = \partial_1^{\cC}=(1+X)x+\partial^{\cA}y,
  \end{equation*}
  and outputs some $\tilde{c}\in C_1$ for which $\tilde{c}-c\in\im\partial_2^{\cC}$. In fact, we can define $e=\Theta(\epsilon\ell)$ to be sufficiently small compared to the distance $\Theta(\ell)$ of $\cC$ so that it suffices to output any $\tilde{c}$ of weight $|\tilde{c}|\leq O(|c|)$ such that $\partial_1^{\cC}\tilde{c}=s$.

  Similarly to the proof of Theorem~\ref{thm:hgpdecinf}, if $x=0$, then $s=\partial^{\cA}y$, so we can simply apply $D^{\cA}$ to $s$ to recover $y$. Moving to the general case where $x\neq 0$, we can view $(1+X)x$ as a syndrome error and $y$ as a code error, which both contribute to the noisy syndrome $s=(1+X)x+\partial^{\cA}y$.

  A natural approach is to follow our proof of Theorem~\ref{thm:hgpdecinf} and apply $D^{\cA}$ to prefix sums of $s$. Specifically, the natural analogue of~(\ref{eq:tadefinf}) in the lifted product setting is to define $\tilde{a}_k$ for $k\in[\ell+1]$ by
  \begin{equation}
    \label{eq:talpinf}
    \tilde{a}_k = D^{\cA}\Bigl(\sum_{i\in[k]}X^is\Bigr),
  \end{equation}
  which we hope to be a good approximation to
  $a_k := \sum_{i\in[k]}X^iy$.

  To see that multiplying by $\sum_{i\in[k]}X^i$ is the natural lifted-product analogue of the prefix sums for the hypergraph product, observe that multiplying a vector by $\sum_{i\in[k]}X^i$ simply sums together $k$ cyclic shifts of that vector. But the length-$k$ prefix sums considered in Section~\ref{sec:hgpdecinf} can similar be obtained summing together $k$ cyclic shifts of the columns of the relevant matrices. Thus in both cases, the prefix sums are obtained by summing together shifts of the relevant vectors or matrices under the cyclic group action they inherit from the repetition code.

  Now, we see that the prefix sum $\sum_{i\in[k]}X^is$ receives a contribution of $\partial^{\cA}(\sum_{i\in[k]}X^iy)$ from $y$. As $|y|$ can be as large as $e=\Theta(\epsilon\ell)=\Theta(\epsilon N/\log N)$, the prefix sum $\sum_{i\in[k]}X^iy$ can have weight $>\omega(N)$ for any $k>\omega(\log N)/\epsilon$. Thus unless $k\leq O(\log N)/\epsilon$ (it is helpful to think of $\epsilon>0$ as a constant here),  $\sum_{i\in[k]}X^is$ will represent the noisy syndrome of an error that may be too large for $D^{\cA}$ to decode. In contrast, if we were to use a similar algorithm as in the proof of Theorem~\ref{thm:hgpdecinf}, we would want to compute $\tilde{a}_k$ at least up to $k=\Theta(N/\log N)$.

  We resolve this issue by initially only computing $\tilde{a}_k$ in~(\ref{eq:talpinf}) for all $1\leq k\leq t$ for a constant $t=t_0$ (we in fact take $t_0=2$, but for now it may be helpful to think of $t$ as a large constant, or even as a slowly growing function of $N$). We then form an initial estimate $\tilde{z}\in A_1$ for $y$ using these vectors $\tilde{a}_k$ for $1\leq k\leq t$, similarly as in the proof of Theorem~\ref{thm:hgpdecinf}.

  Specifically, letting $A_1=R_\ell^{n_1}$, then we can index the $n_1\cdot\ell$ bits in a vector $v\in A_1$ by pairs $(h,k)\in[n_1]\times[\ell]$, such that $(X^iv)_{h,k}=v_{h,k+i}$, where we take the second subscript modulo $\ell$. Then for $h\in[n_1]$ and $m\in[\ell/t]$, we define
  \begin{equation*}
    \tilde{z}_{h,mt+i} := \begin{cases}
      0,&i=0 \\
      (\tilde{a}_{i+1})_{h,mt}-(\tilde{a}_{i})_{h,mt},&1\leq i\leq t-1.
    \end{cases}
  \end{equation*}

  Assume that we had initially performed a random cyclic shift on all vectors above, by multiplying all vectors by $X^j$ for a random $j\in[\ell]$, to avoid accumulation of errors on the components $(\tilde{a}_i)_{h,mt}$ that we use above. Then using a similar argument as in the proof of Theorem~\ref{thm:hgpdecinf}, we apply the definition of noisy-syndrome decoding to show that with constant probability, $\tilde{z}_{h,mt+i}$ is a good approximation to $y_{h,mt+i}$ across all\footnote{Here we informally say that a vector $u$ is a good approximation to a vector $v$ across all components in a set $S$ if $u$ and $v$ disagree on a small number of components in $S$.} $h\in[n_1]$, $m\in[\ell/t]$, and $1\leq i<t$. However, we do not obtain an estimate of the components $y_{h,mt+i}$ for $i=0$. These ``gaps'' in our estimate $\tilde{z}$ occur periodically in one out of every $t$ components, and arise from the fact that we only computed $t$ prefix sums $\tilde{a}_k$.

  As these gaps in our estimate $\tilde{z}$ occur at a one out of every $t$ of the positions, where we have chosen $t$ to be a constant, we must find a way to ``fill in'' these gaps to obtain a better estimate of $y$. For this purpose, observe that we have good estimates $\tilde{z}_{h,mt+k}$ of $y_{h,mt+k}$ for all $h,m$ and all $1\leq k<t$, and we also have good estimates $(\tilde{a}_t)_{h,mt-k}$ of $(a_t)_{h,mt-k}=\sum_{i\in[t]}y_{h,mt-k+i}$ for all $h,m,k$. Therefore, for $k\in[t]$, we should expect $b_{h,mt-k}:=(a_t)_{h,mt-k}-\sum_{i\in[t]}\tilde{z}_{h,mt-k+i}$ to be a good estimate of $y_{h,mt}$. We thus define $\tilde{r}\in A_1$ by letting $\tilde{r}_{h,mt}$ be the majority value of $b_{h,mt-k}$ across all $k\in[t]$, and letting $\tilde{r}_{h,mt+i}=0$ for all $1\leq i<t$. We then let $\tilde{y}_1:=\tilde{z}+\tilde{r}$ be our estimate of $y$.

  We show that the error $\tilde{y}_1-y$ in our approximation $\tilde{y}_1$ of $y$ can be expressed in the form $\tilde{y}_1-y=(1+X)u+r'$, where $|u|\leq O(\ell)$ and $|r'|\leq O(\ell/t)$ (see the proof of Proposition~\ref{prop:ampcom} for details). Now consider taking a prefix sum of this error $y-\tilde{y}_1$ of any length $k\in[t_1]$ for $t_1=2t=2t_0$. As
  \begin{equation*}
    \Bigl|\sum_{i\in[k]}X^i(y-\tilde{y}_1)\Bigr| = \Bigl|(1+X^{k})u+\sum_{i\in[2k]}X^ir'\Bigr| \leq 2|u|+2t|r'| \leq O(\ell),
  \end{equation*}
  this prefix sum will not overload the noisy-syndrome decoder $D^{\cA}$. Thus we can compute new decodings $\tilde{a}_k':=D^{\cA}(\sum_{i\in[k]}X^i(s-\partial^{\cA}\tilde{y}_1))$ of the length-$k$ prefix sums of $s-\partial^{\cA}\tilde{y}_1$ for all lengths $k\in[t_1]$, and repeat the entire procedure described above to obtain a new estimate $\tilde{y}_2$ for $y$. After iterating $\Theta(\log\ell)$ times, we show that the error term $r'$ described above vanishes, at which point we have recovered some $\tilde{y}$ for which $\tilde{y}-y=(1+X)u$, where $|u|\leq O(\ell)$. Similarly as in the proof of Theorem~\ref{thm:hgpdecinf}, the decoder can then compute the minimum possible weight $\tilde{x}\in A_0$ for which $\partial_1^{\cC}(\tilde{x},\tilde{y})=s$. We show that the resulting estimate $\tilde{c}:=(\tilde{x},\tilde{y})$ must lie in $c+\im\partial_2^{\cC}$, as desired.
  
  Note that the random cyclic shift in each iteration mentioned above introduces a constant failure probability, so the $\Theta(\log\ell)$ iterations collectively have a $2^{-\Theta(\log\ell)}=1/\poly(\ell)$ success probability. We amplify this sucess probability be repeating the entire algorithm $\poly(\ell)$ times.
\end{proof}

Our decoding algorithm described above shares some similarities with proof of distance in Theorem~\ref{thm:pkdisinf} shown by \cite{panteleev_quantum_2022}. In particular, \cite{panteleev_quantum_2022} also computed prefix sums $\sum_{i\in[k]}X^ic$ of errors $c$ to prove the $\Omega(\ell)$ bound on distance, and relied on expansion of the graphs underlying the Tanner codes $\cA^{(\ell)}$. However, there are significant differences between the two results. Perhaps the most fundamental difference lies in the iterative nature of our decoding algorithm in Theorem~\ref{thm:lpdecinf2}, which is absent from the distance proof of \cite{panteleev_quantum_2022}. Interestingly, the greedy flip-style decoding algorithms for asymptotically good qLDPC codes more closely mirror their distance proofs (see e.g.~\cite{leverrier_decoding_2023,dinur_good_2023}).

\section{Classical Tanner Codes and Noisy-Syndrome Decodability}
\label{sec:classtan}
In this section, we describe classical Tanner codes obtained by imposing the constraints of a good local code around each vertex of an expander graph \cite{sipser_expander_1996}. Furthermore, we follow \cite{panteleev_quantum_2022} in constructing these codes to respect a large cyclic group symmetry. We show that the chain complexes from these codes, along with their associated cochain complexes, are noisy-syndrome decodable in linear time (see Section~\ref{sec:expdecclass}). Applying such chain complexes in a lifted product construction yields the nearly-linear distance codes of \cite{panteleev_quantum_2022}, which we show how to decode in Section~\ref{sec:lpdec}.

Distance proofs for Tanner codes have been previously established (e.g.~\cite{sipser_expander_1996}), while decoding algorithms for such codes have been given both without (e.g.~\cite{sipser_expander_1996,zemor_expander_2001,leverrier_decoding_2023}) and with (e.g.~\cite{spielman_linear-time_1996,gu_single-shot_2023}) syndrome noise. However, for completeness we provide proofs of the noisy-syndrome decodability properties we require in Appendix~\ref{sec:noisysyn}, as we were unable to find the precise statements we need in the prior literature. For instance, \cite{spielman_linear-time_1996} proves noisy-syndrome decodability for the chain complexes associated to classical Tanner codes, but not for the associated cochain complexes. Meanwhile, \cite{gu_single-shot_2023} proves an analogue of noisy-syndrome decodability (i.e.~\textit{single-shot decodability}) for the 3-term chain and cochain complexes associated to quantum Tanner codes \cite{leverrier_quantum_2022-1,leverrier_decoding_2023}, but not for the 2-term complexes associated to classical Tanner codes.

\begin{remark}
  We use classical LDPC codes given by imposing local codes around each vertex of a spectral expander graph \cite{sipser_expander_1996}, because as shown by \cite{panteleev_quantum_2022}, such codes can be made to respect a large cyclic group symmetry using known symmetric spectral expander constructions \cite{agarwal_expansion_2019,jeronimo_explicit_2022}. However, as mentioned above, the chain and cochain complexes of such codes have slightly different structures. In contrast, classical LDPC codes can alternatively be constructed using a different type of expanders called \textit{twosided lossless expanders}. The chain and cochain complexes for such codes have the same structure and hence can be decoded with the same algorithm (see e.g.~\cite{sipser_expander_1996}). However, it remains an open question to construct twosided lossless expanders respecting a large group symmetry (see for instance \cite{capalbo_randomness_2002,lin_good_2022,cohen_hdx_2023,golowich_new_2024}).
\end{remark}

The result below describes the specific 2-term chain complexes we need, which we will construct using classical Tanner codes.

\begin{proposition}
  \label{prop:classtan}
  There exist positive constants $w,\alpha,\beta,e_0,e_1,\gamma$ for which the following holds: there is an explicit infinite family of 2-term chain complexes $\cA^{(\ell)}=(A_1^{(\ell)}\xrightarrow{\partial^{(\ell)}}A_0^{(\ell)})$ indexed by $\ell\in\bN$, such that each $\cA^{(\ell)}$ has the following properties:
  \begin{enumerate}
  \item $\cA^{(\ell)}$ is a 2-term chain complex over $R_\ell$ of locality $w$.
  \item Letting $n_i=\dim_{R_\ell}A^{(\ell)}_i$, then $n_0,n_1,n_1-n_0=\Theta(\log\ell)$ (so that $\dim_{\bF_2}A^{(\ell)}_i=\Theta(\ell\log\ell)$).
  \item Both the chain complex $\cA^{(\ell)}_*$ and its cochain complex ${\cA^{(\ell)}}^*$ are $(\alpha,\beta)$-expanding, and are $(e_0,e_1,\gamma)$-noisy-syndrome decodable in time $O(\ell\log\ell)$ for $e_0,e_1=\Theta(\ell\log\ell)$.
  \end{enumerate}
\end{proposition}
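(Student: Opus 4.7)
The plan is to instantiate the construction from \cite{panteleev_quantum_2022}: take an explicit family of constant-degree spectral expander Cayley graphs $\Gamma_\ell = \Cay(\bZ/\ell\bZ, S_\ell)$ on $\Theta(\ell)$ vertices with generating set $|S_\ell| = \Theta(\log\ell)$, drawn from the symmetric expander constructions of \cite{agarwal_expansion_2019, jeronimo_explicit_2022}, which respect a free $\bZ/\ell\bZ$-action by translation. From $\Gamma_\ell$ one builds the edge--vertex double cover (or an analogous bipartite ``lifted'' graph), with left vertex set indexed by $[n_1]\times(\bZ/\ell\bZ)$ and right vertex set indexed by $[n_0]\times(\bZ/\ell\bZ)$; here $n_1,n_0=\Theta(\log\ell)$ arise from the degree $|S_\ell|$ and from repeating parity checks. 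Place a fixed constant-sized ``inner'' linear code $C_{\on{in}}$ of constant rate and constant relative distance at each right vertex, and let $\partial^{(\ell)}:A_1\to A_0$ be the map sending an edge-labelling to the collection of local check-syndromes of $C_{\on{in}}$ at each right vertex. Because $\bZ/\ell\bZ$ acts freely, the resulting parity-check matrix can be chosen with entries in $R_\ell$ of constant weight, giving locality $w=O(1)$ over $R_\ell$ and dimensions $n_0,n_1,n_1-n_0=\Theta(\log\ell)$ as required.

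Expansion of both $\cA^{(\ell)}_*$ and ${\cA^{(\ell)}}^*$ at scale $\alpha=\Theta(1)$ with ratio $\beta=\Theta(1)$ is essentially the Sipser--Spielman argument: the spectral expansion of $\Gamma_\ell$ together with the constant minimum distance of $C_{\on{in}}$ (used on one side) and of $C_{\on{in}}^\perp$ (used on the other) ensure that any small-weight codeword or small-weight dual codeword would force too many violations, by an expander-mixing-lemma calculation. This yields the $(\alpha,\beta)$-expansion property for both $\cA^{(\ell)}_*$ and ${\cA^{(\ell)}}^*$ simultaneously, provided $C_{\on{in}}$ is chosen so that both it and $C_{\on{in}}^\perp$ have large enough constant distance (for instance by taking $C_{\on{in}}$ to be a random linear code of rate near $1/2$, or any explicit asymptotically good constant-sized code whose dual is also asymptotically good).

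For the noisy-syndrome decoding of $\cA^{(\ell)}_*$, I would adapt the linear-time parallel bit-flipping decoder of \cite{spielman_linear-time_1996}: given a corrupted syndrome $\sigma=\partial^{(\ell)}c_1+c_0$, repeatedly flip any code bit whose flip strictly decreases the number of violated local checks (or, in the parallel version, flip all such bits simultaneously). A standard potential-function argument, using the $(\alpha,\beta)$-expansion of $\cA^{(\ell)}_*$, shows that after $O(\log(\ell\log\ell))$ parallel rounds the output $\hat c_1$ satisfies $|\hat c_1 - c_1|\leq \gamma|c_0|$, uniformly over $|c_1|\leq e_1$ and $|c_0|\leq e_0$ for $e_0,e_1=\Theta(\ell\log\ell)$, and the total running time is $O(\ell\log\ell)$. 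This is the routine half of the proof.

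The step I expect to be the main obstacle is noisy-syndrome decoding of the \emph{cochain} complex ${\cA^{(\ell)}}^*$, i.e.\ of the ``transpose'' code, because the right vertices now carry the unknown bits while the left vertices carry checks whose local structure comes from $C_{\on{in}}^\perp$, and the graph of incidences looks different from the Sipser--Spielman setup. The plan is to model the cochain decoder on the recent single-shot decoders for quantum Tanner codes \cite{gu_single-shot_2023,leverrier_decoding_2023}, adapting them to the 2-term classical setting: at each iteration, for each right vertex $v$, compute a minimum-weight local correction at $v$ that maximally reduces the weighted count of violated left-vertex checks incident to $v$; then apply a maximal compatible set of such local moves in parallel. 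A potential function of the form $\Phi = |\text{syndrome violations}| + \lambda\cdot|\text{current residual correction}|$ should decrease by a constant factor per round whenever the current error has weight at most $\alpha\cdot\dim A_0$, by another expander-mixing argument using $(\alpha,\beta)$-expansion of ${\cA^{(\ell)}}^*$ and the constant distance of $C_{\on{in}}^\perp$. The delicate point is that a constant fraction of the violated checks may be due to syndrome noise $c_0$ rather than true code error $c_1$; to handle this, I would track a ``confidence'' quantity and stop the iteration once $\Phi$ stabilizes, then bound the residual $|\hat c_1 - c_1|$ by $O(|c_0|)$ by charging any remaining mismatched bits to syndrome-error-induced fake violations via the expansion inequality. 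Once this decoder is analyzed, both chain and cochain decoders run in $O(\ell\log\ell)$ time, completing the proposition.
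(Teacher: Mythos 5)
Your high-level route is the same as the paper's (instantiate the quasi-cyclic Tanner codes of \cite{panteleev_quantum_2022} on symmetric expanders, prove two-sided expansion by expander mixing with an inner code whose kernel and transpose image both have good distance, then build greedy local decoders for the noisy-syndrome property), but as written there are genuine gaps. First, the construction is mis-specified: a Cayley graph $\Cay(\bZ/\ell\bZ,S_\ell)$ cannot be a constant-degree spectral expander (abelian Cayley graphs need $\Omega(\log\ell)$ generators), and if you take $|S_\ell|=\Theta(\log\ell)$ then the vertex degree, hence the inner-code length and the locality of $\partial^{(\ell)}$, is $\Theta(\log\ell)$, violating item 1 of the proposition; you also cannot repair this by sparsifying the inner code, since you need both $\ker Z$ and $\im Z^\top$ to have distance $\Omega(\Delta)$. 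The correct structure (the paper's, following \cite{panteleev_quantum_2022} with the explicit abelian lifts of \cite{jeronimo_explicit_2022} in place of the random lifts of \cite{agarwal_expansion_2019}) is a constant-degree $\ell$-lift of a base graph on $\Theta(\log\ell)$ vertices: bits sit on the $\Theta(\ell\log\ell)$ edges, a constant-length inner code $\ker Z$ (rate $1/4$, found by Gilbert--Varshamov plus brute force so that $\im Z^\top$ also has distance $\delta\Delta$) sits at each vertex, the edge-to-slot labels are lifted from the base graph so the complex is over $R_\ell$, and $n_1=|E_0|$, $n_0=\Gamma|V_0|$ give item 2; expansion of both complexes is then exactly the statement quoted from \cite{panteleev_quantum_2022}.

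Second, the decoders. The chain-direction algorithm you sketch (flip a single code bit whenever it reduces the number of violated checks) has no progress guarantee for Tanner codes with a general constant-size inner code: one bit flip perturbs the local syndromes at two vertices of a random-like inner code arbitrarily, and $(\alpha,\beta)$-expansion of the complex is far weaker than the lossless factor-graph expansion that the classical bit-flip analysis needs. The paper's Algorithm~\ref{alg:nsdecp} instead maintains local views $x_v$ with $Zx_v=s_v$ (initialized to minimum-weight solutions) and greedily adds inner codewords at single vertices to reduce the cross-edge mismatch, with the error bound $|\tilde a_1-a_1|\le 2\Delta|a_0|$ coming from an expander-mixing claim; citing \cite{spielman_linear-time_1996} for "the routine half" is fine in spirit, but the algorithm you state is not the one that works. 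More importantly, the cochain direction, which you rightly identify as the crux and which is precisely the part the paper adds, remains a plan rather than a proof: the weighted potential $\Phi$, the parallel "maximal compatible set" of moves, and the "confidence"-based stopping rule are unspecified and unanalyzed. The paper's solution is simpler: starting from $x=0$, greedily update the $\Gamma$ bits at a single vertex whenever this strictly decreases $|s-\delta^{\cA}x|$, and a single mixing-lemma argument (Claim~\ref{claim:redsyn}, using the distance of $\im Z^\top$ and $\lambda<\delta/16$) shows an improving vertex exists whenever the disagreement set has size between $4|a^1|$ and $\lambda|V|$, so at termination $|\tilde a^0-a^0|\le 4\Delta|a^1|$; linear time follows by maintaining a list of improvable vertices. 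Until you supply an analysis of that strength for your iteration in the presence of syndrome noise, item 3 is not established.
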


\cite{panteleev_quantum_2022} showed that the 2-term chain complexes such as those in Proposition~\ref{prop:classtan} can be used to obtain lifted product quantum codes of nearly linear distance. However, the 2-term chain complexes used in \cite{panteleev_quantum_2022} were based on expander graphs from random abelian lifts \cite{agarwal_expansion_2019}. These random expanders were later derandomized in \cite{jeronimo_explicit_2022}, which we use to obtain the explicit construction in Proposition~\ref{prop:classtan}.
Theorem~\ref{thm:pkdisinf} above 
% The theorem below
follows directly by applying the results of \cite{panteleev_quantum_2022} with the graphs constructed in \cite{jeronimo_explicit_2022}. Note that all of our results also go through using the random abelian lifts of \cite{agarwal_expansion_2019}, at the cost of explicitness.

% \begin{theorem}[\cite{panteleev_quantum_2022,jeronimo_explicit_2022}]
%   \label{thm:pkdis}
%   Let $\cA^{(\ell)},\ell\in\bN$ denote the 2-term chain complexes given by Proposition~\ref{prop:classtan}, and define the 2-term chain complexes $\cB^{(\ell)}=(R_\ell\xrightarrow{1+X}R_\ell)$. Then the 3-term chain complexes $\cC^{(\ell)}:=\cA^{(\ell)}\otimes_{R_\ell}\cB^{(\ell)}$ form an explicit infinite family of $[[\Theta(\ell\log\ell),\Theta(\log\ell),\Theta(\ell))]]$ quantum LDPC codes.
% \end{theorem}

One of our main contributions (Theorem~\ref{thm:lpdecinf}, proven in Section~\ref{sec:lpdec}) is to efficiently decode the codes in Theorem~\ref{thm:pkdis} up to adversarial corruptions of linear weight with respect to the code's distance.

We prove Proposition~\ref{prop:classtan} in Appendix~\ref{sec:noisysyn}. In particular, Appendix~\ref{sec:noisysyn} contains an essentially self-contained proof of the linear-time noisy-syndrome decodability of the chain and cochain complexes associated to classical Tanner codes. This noisy-syndrome decodability is the main property we need that \cite{panteleev_quantum_2022} does not consider.

\section{Hypergraph Product of Tanner Code and Repetition Code}
\label{sec:hgpdec}
In this section, we present an efficient decoding algorithm for the hypergraph product (see Section~\ref{sec:hgpdef}) of an expanding classical LDPC code with a repetition code. For instance, the hypergraph product of a $[N,\Theta(N),\Theta(N)]$ Tanner code with a $[\Theta(N),1,\Theta(N)]$ repetition code is a $[[\Theta(N^2),\Theta(N),\Theta(N)]]$ quantum code, for which our decoding algorithm can efficiently recover from $\Theta(N)$ adversarial errors.

\subsection{Result Statement}
Our main result on decoding the hypergraph product of a Tanner code and a repetition code is stated below.

\begin{theorem}
  \label{thm:hgpdec}
  Let $\cA_*=(A_1\xrightarrow{\partial^{\cA}}A_0)$ be a 2-term chain complex of locality $w$ such that $\cA_*$ and $\cA^*$ are both $(e_0,e_1,\gamma)$-noisy-syndrome decodable in time $T$. For some $\ell\in\bN$, let $\cB_*=(B_1\xrightarrow{\partial^{\cB}}B_0)$ be the 2-term chain complex over $R_\ell$ given by $B_0=B_1=R_\ell$ and $\partial^{\cB}=1+X\in R_\ell$. Let
  \begin{equation*}
    \cC_* = \cA\otimes_{\bF_2}\cB
  \end{equation*}
  denote the hypergraph product. Let $d(\cC)$ denote the distance of the quantum code associated to $\cC_*$, and let $N^{\cA}=\dim A_0+\dim A_1$. Then for every $\delta>0$, there exists a randomized decoder for this quantum code that protects against $e=\min\{e_0,\;e_1,\;\ell/2,\;(d(\cC)-1)/(2+(w+2)\gamma)\}$ adversarial errors, runs in time $O((N^{\cA}+T)\ell\log(1/\delta))$, and has failure probability $\leq\delta$. Furthermore, there exists a deterministic decoder protecting against $e$ errors that runs in time $O((N^{\cA}+T)\ell^2)$.
\end{theorem}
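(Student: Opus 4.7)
The plan is to exploit the symmetry between $\cC_*$ and $\cC^*$ in the hypergraph product, so that it suffices to construct $D_0\colon C_0\to C_1$; the dual decoder $D^2\colon C^2\to C^1$ is obtained by the analogous procedure on the cochain complex, which is where the hypothesis that $\cA^*$ (not just $\cA_*$) is noisy-syndrome decodable is used. Writing an $X$-error as $c=(x,y)\in A_0\otimes B_1\oplus A_1\otimes B_0$ and viewing $x,y$ as matrices with $\ell$ columns indexed by $\bZ/\ell\bZ$, the received syndrome decomposes column-wise as $s_j=x_{j-1}+x_j+\partial^{\cA}y_j$. Each $s_j$ is therefore a noisy $\cA$-syndrome of $y_j$, but the total weight of syndrome errors across $j$ can be as large as $2|x|$, so naively decoding column-by-column is insufficient.

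The central idea is to take prefix sums along the repetition-code axis. Setting $a_k:=\sum_{j\in[k]}y_j$, the telescoping identity $\sum_{j\in[k]}(x_{j-1}+x_j)=x_{-1}+x_{k-1}$ gives $\sum_{j\in[k]}s_j=x_{-1}+x_{k-1}+\partial^{\cA}a_k$; that is, the length-$k$ prefix of the syndrome is a noisy $\cA$-syndrome of $a_k$ whose noise depends on only two columns of $x$. First I would apply a uniformly random cyclic shift $X^r$ along the repetition axis to $(x,y,s)$; since $|x|\leq e\leq\ell/2$, with probability $\geq 1/2$ the shifted vector has $x_{-1}=0$. Conditioning on this event, the syndrome error in the $k$th prefix is the single column $x_{k-1}$ of weight $\leq|x|\leq e_0$, while $|a_k|\leq|y|\leq e_1$, so the $(e_0,e_1,\gamma)$-noisy-syndrome decoder applies: the vectors $\tilde a_k:=D^{\cA}\bigl(\sum_{j\in[k]}s_j\bigr)$ satisfy $|\tilde a_k-a_k|\leq\gamma|x_{k-1}|$, and summing over $k$ gives the crucial bound $\sum_k|\tilde a_k-a_k|\leq\gamma|x|$.

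Next I would recover columns of $y$ via $\tilde y_k:=\tilde a_{k+1}+\tilde a_k$; a short calculation shows $\tilde y-y=(I\otimes\partial^{\cB})b$ where $b_k:=\tilde a_{k+1}+a_{k+1}$, so $|b|\leq\gamma|x|$. Then I would choose $\tilde x\in A_0\otimes B_1$ of minimum weight satisfying $(I\otimes\partial^{\cB})\tilde x=s-(\partial^{\cA}\otimes I)\tilde y$, which can be done row-by-row in $O(\ell)$ time per row (each row is a repetition-code syndrome equation whose right-hand side lies in $\im\partial^{\cB}$ by construction). Exhibiting $x-(\partial^{\cA}\otimes I)b$ as a valid competitor of weight at most $|x|+w|b|\leq(1+w\gamma)|x|$ yields $|\tilde c|\leq(1+(w+2)\gamma)|c|$, hence $|\tilde c-c|\leq(2+(w+2)\gamma)e<d(\cC)$. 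Because $\partial_1^{\cC}(\tilde c-c)=0$, the distance bound forces $\tilde c-c\in B_1(\cC)$, as required by Definition~\ref{def:decoder}.

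A single attempt runs in time $O((N^{\cA}+T)\ell)$, dominated by $\ell$ prefix-sum evaluations, $\ell$ calls to $D^{\cA}$, and $O(N^{\cA})$ row-wise repetition decodings. The randomized decoder repeats this with $O(\log(1/\delta))$ independent shifts, outputting the first candidate $\tilde c$ satisfying $\partial_1^{\cC}\tilde c=s$ and $|\tilde c|\leq(1+(w+2)\gamma)e$; since each trial succeeds with probability $\geq 1/2$, the failure probability is $\leq\delta$. The deterministic decoder simply iterates over all $\ell$ possible cyclic shifts. The main obstacle I anticipate is the tight weight accounting: the bound $\sum_k|\tilde a_k-a_k|\leq\gamma|x|$, rather than the naive $\gamma|x|$ per $k$, is what keeps $|b|$ small, which in turn forces $|\tilde c-c|<d(\cC)$ so that the correction lies in $B_1(\cC)$ rather than merely in $Z_1(\cC)$.
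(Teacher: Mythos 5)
Your proposal is correct and follows essentially the same route as the paper's proof: prefix sums along the repetition axis after a random (or exhaustive) cyclic shift guaranteeing $x_{-1}=0$, the aggregate bound $\sum_k|\tilde a_k-a_k|\le\gamma|x|$ from noisy-syndrome decoding, recovery of $\tilde y$ by consecutive differences with $\tilde y-y=(I\otimes\partial^{\cB})b$, a minimum-weight row-wise solve for $\tilde x$ certified by the competitor $x-(\partial^{\cA}\otimes I)b$, and the final distance argument forcing $\tilde c-c\in B_1(\cC)$, with the same amplification and time analysis. No gaps.
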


While we define the chain complex $\cB$ as a chain complex over $R_\ell$ for notational convenience, in this section we do not assume that $\cA$ has an $R_\ell$-structure, and we take the hypergraph product over $\bF_2$. We do consider decoding a lifted product over $R_\ell$ in Section~\ref{sec:lpdec}.

The chain complex $\cB_*=(R_\ell\xrightarrow{1+X}R_\ell)$ and its cochain complex $\cB^*=(R_\ell\xleftarrow{1+X^{\ell-1}}R_\ell)$ both have associated classical code given by the length-$\ell$ repetition code. The parity checks simply check that every pair of consecutive code bits are equal. We reiterate that $X$ here denotes an indeterminate variable in polynomials, rather than a Pauli operator.

\begin{example}
  As an example application of Theorem~\ref{thm:hgpdec} above, consider an instantiation of $\cC=\cA\otimes_{\bF_2}\cB$ where $\cA$ is the 2-term chain complex associated to asymptotically good classical Tanner codes such as those given in Proposition~\ref{prop:classtan}, and $\cB$ is defined by letting $\ell=\Theta(N^{\cA})$.\footnote{We emphasize that this $\ell=\Theta(N^{\cA})$ is just used to define $\cB$, and is not the same $\ell$ as in the statement of Proposition~\ref{prop:classtan}. In fact, Proposition~\ref{prop:classtan} is stronger than necesssary for our purposes here, as we will just use the $\bF_2$-structure of $\cA$, and ignore the cyclic group structure of $\cA$ described in Proposition~\ref{prop:classtan} until we consider lifted products in Section~\ref{sec:lpdec}.} Note that we could alternatively let $\cA$ be the 2-term chain complex associated to a random lossless expander code (see e.g.~\cite{sipser_expander_1996}).

  Then the K\"{u}nneth formula (Proposition~\ref{prop:kunneth}) implies that $\cC$ has associated quantum code dimension
  \begin{equation*}
    \dim(H_1(\cC)) = \dim(H_1(\cA))\cdot\dim(H_0(\cB)) + \dim(H_0(\cA))\cdot\dim(H_1(\cB)) = \Theta(N^{\cA}),
  \end{equation*}
  as $\dim(H_1(\cA))=\dim\ker(\partial^{\cA})\geq\dim(A_1)-\dim(A_0)=\Theta(N^{\cA})$ by Proposition~\ref{prop:classtan}, $\dim(H_0(\cA))\leq O(N^{\cA})$ by definition, $\dim(H_0(\cB))=\ell-\dim(\im\partial^{\cB})=1$ because $\im\partial^{\cB}$ is by definition the set of all vectors in $\bF_2^\ell$ of even Hamming weight, and $\dim(H_1(\cB))=\dim(H^0(\cB))=\dim(H_0(\cB))=1$ by the symmetry of $\cB$.

  Meanwhile, the classical codes associated to $\cA_*$ and $\cA^*$ have distance $\Theta(N^{\cA})$ by Proposition~\ref{prop:classtan}, and the classical codes associated to $\cB_*$ and $\cB^*$ are both length-$\ell$ repetition codes, which have distance $\ell=\Theta(N^{\cA})$. Thus Proposition~\ref{prop:hgpdistance} implies that the quantum code associated to $\cC$ has distance
  \begin{equation*}
    d(\cC) \geq \Theta(N^{\cA}).
  \end{equation*}
  The above bound is in fact tight up to constant factors; see for instance Theorem~15 of \cite{tillich_quantum_2014}.

  Thus $\cC$ is a $[[N=\Theta(N^\cA)^2,\;K=\Theta(N^{\cA}),\;D=\Theta(N^{\cA})]]$ quantum LDPC code, for which Theorem~\ref{thm:hgpdec} provides a $O(N\log(1/\delta))$-time randomized decoder against $\Theta(D)$ adversarial errors with failure probability $\delta$, as well as a $O(N^{3/2})$-time deterministic decoder against $\Theta(D)$ adversarial errors.
\end{example}

We prove Theorem~\ref{thm:hgpdec} in the following sections. Section~\ref{sec:hgpdecalg} below presents a decoding algorithm for $\cC$ (Algorithm~\ref{alg:dechgp}) that takes as input a parameter $j\in[\ell]$; in Section~\ref{sec:hgpcorrect}, we show that this decoding algorithm succeeds for at least half of all choices of $j\in[\ell]$. Section~\ref{sec:hgpdectime} analyzes the running time of this algorithm. Section~\ref{sec:hgpcombine} then combines these results to prove Theorem~\ref{thm:hgpdec}, simply using repeated runs of Algorithm~\ref{alg:dechgp} for different choices of $j\in[\ell]$.

\subsection{Decoder With Constant Probability of Success}
\label{sec:hgpdecalg}
In this section, we present Algorithm~\ref{alg:dechgp}, which contains the technical core of the decoding algorithm we use to prove Theorem~\ref{thm:hgpdec}. Specifically, Algorithm~\ref{alg:dechgp} can be used to successfully decode the code $\cC_*$ in Theorem~\ref{thm:hgpdec} with probability $\geq 1/2$. In Section~\ref{sec:hgpcombine} below, we show how to amplify this success probability to prove Theorem~\ref{thm:hgpdec}.

By symmetry of the construction
\begin{equation*}
  \cC_* = \cA\otimes\cB = \left(A_1\otimes R_\ell \xrightarrow{\partial^{\cA}\otimes I+I\otimes(1+X)} A_0\otimes R_\ell\oplus A_1\otimes R_\ell \xrightarrow{I\otimes(1+X)+\partial^{\cA}\otimes I} A_0\otimes R_\ell\right).
\end{equation*}
in Theorem~\ref{thm:hgpdec}, it will suffice to provide the decoding algorithm $D_0:C_0\rightarrow C_1$ described in Definition~\ref{def:decoder}, as the decoding algorithm $D^2:C^2\rightarrow C^1$ will be exactly analogous. Note that all tensor products $\otimes$ in the equation above are implicitly taken over $\bF_2$.

% It will be useful to define an $\bF_2$-bilinear form $\langle\cdot,\cdot\rangle:R_\ell\times R_\ell\rightarrow\bF_2$ by
% \begin{equation*}
%   \left\langle\sum_{i\in[\ell]}f_iX^i,\sum_{i\in[\ell]}g_iX^i\right\rangle = \sum_{i\in[\ell]}f_ig_i.
% \end{equation*}
% This form extends to act as $\langle\cdot,\cdot\rangle:R_\ell\times(A_*\otimes_{\bF_2}R_\ell)\rightarrow A_*$ by simply letting
% \begin{equation*}
%   \langle f(X),a\otimes g(X)\rangle=\langle f(X),g(X)\rangle a,
% \end{equation*}
% and extending to non-pure tensors by linearity.

We let $D^{\cA}:A_0\rightarrow A_1$ be an $(e_0,e_1,\gamma)$-noisy-syndrome decoder for the classical code associated to $\cA$ that runs in time $\leq T$; such a decoder is assumed to exist in the statement of Theorem~\ref{thm:hgpdec}.

% For $i\in[\ell]$, let $\1(X^i):R_\ell\rightarrow\bF_2$ be the indicator function for $X^i$, which we then extend by $\bF_2$-linearity. Thus for a general polynomial $f(X)=\sum_{i\in[\ell]}f_iX^i\in R_\ell$, we let $\1(f(X)):R_\ell\rightarrow\bF_2$ be the $\bF_2$-linear map 

% Given a $\bF_2$-vector space $V$ (i.e.~$V=A_*$) and an element $x\in V\otimes_{\bF_2}R_\ell$, we may decompose the $R_\ell$ component using its standard basis to write $x=\sum_{i\in[\ell]}v_i\otimes_{\bF_2}X^i$ for some $v_i\in V$. Then for an element $f(X)=\sum_{i=0}^{\ell-1}f_iX^i\in R_\ell$, we define $x[f(X)]\in A_i$ by
% \begin{equation*}
%   x[f(X)] = \sum_{i\in[\ell]}f_iv_i.
% \end{equation*}
% That is, $x[f(X)]$ is simply the sum of the components of $x$ corresponding to the nonzero coefficients of $f(X)$. Alternatively, $x[f(X)]$ is obtained by viewing the coefficients of $f(X)$ as a linear functional from $R_\ell$ to $\bF_2$, and applying this functional to $x$.

For a given error $c\in C_1=A_0\otimes R_\ell\oplus A_1\otimes R_\ell$ of weight $|c|\leq e$, the decoder $D_0$ receives as input the syndrome $s=\partial_1^{\cC}c\in C_0=A_0\otimes R_\ell$. The decoder's goal is to output an element $\tilde{c}\in C_1$ that lies in the coset $c+B_1(\cC)$. We will define $D_0(s)$ to simply perform several appropriate calls to Algorithm~\ref{alg:dechgp}.

\begin{algorithm}[h]
  \caption{Algorithm to decode hypergraph product of Tanner and repetition code.}
  \label{alg:dechgp}
  \SetKwInOut{Input}{Input}
  \SetKwInOut{Output}{Output}

  \SetKwFunction{FnDecHGP}{DecHGP}
  \SetKwProg{Fn}{Function}{:}{}

  \Input{Syndrome $s\in C_0$ of some error $c=(x,y)\in C_1$, and some $j\in[\ell]$}
  \Output{$\tilde{c}\in C_1$ that will be shown to be a low-weight element of $c+B_1(\cC)$ for at least half of all choices of $j\in[\ell]$, as long as $|c|$ is sufficiently small}

  \Fn{\FnDecHGP{$s,j$}}{
    Decomposing $s=\sum_{i\in[\ell]}s_i\otimes X^i$, then for each $k\in[\ell+1]$, compute
    \begin{equation*}
      \tilde{a}_{j+k}:=D^{\cA}\left(\sum_{i\in[k]}s_{j+i\pmod{\ell}}\right)
    \end{equation*} \\ \label{li:nsdec}
    Define $\tilde{y}\in A_1\otimes R_\ell$ by $\tilde{y}:=\sum_{k\in[\ell]}(\tilde{a}_{j+k+1}-\tilde{a}_{j+k})\otimes X^{j+k}$ \\ \label{li:ytdef}
    Compute the element $\tilde{x}\in A_0\otimes R_\ell$ of minimum weight such that $\partial_1^{\cC}(\tilde{x},\tilde{y})=s$. If no such $\tilde{x}$ exists, return FAIL. \\ \label{li:findx}
    \KwRet{$\tilde{c}:=(\tilde{x},\tilde{y})$} \\ \label{li:hgpret}
  }
\end{algorithm}

% \begin{enumerate}
% \item\label{it:nsdec} Sample a uniformly random $j\in[\ell]$. Decomposing $s=\sum_{i\in[\ell]}s_i\otimes X^i$, then for each $k\in[\ell+1]$, compute $\tilde{a}_{j+k}:=D^{\cA}(\sum_{i\in[k]}s_{j+i\pmod{\ell}})$.
% % \item For each $0\leq k\leq\ell$, compute $a_k:=D^{\cA}(\langle\sum_{i\in[k]}X^i,s\rangle)$.
% \item\label{it:ytdef} Define $\tilde{y}\in A_1\otimes R_\ell$ by $\tilde{y}:=\sum_{k\in[\ell]}(\tilde{a}_{j+k+1}-\tilde{a}_{j+k})\otimes X^{j+k}$.
% \item\label{it:findx} Compute the element $\tilde{x}\in A_0\otimes R_\ell$ of minimum weight such that $\partial_1^{\cC}(\tilde{x},\tilde{y})=s$. If no such $\tilde{x}$ exists, return FAIL.
% \item Return $\tilde{c}:=(\tilde{x},\tilde{y})$.
% \end{enumerate}

Note that in line~\ref{li:ytdef} of Algorithm~\ref{alg:dechgp}, $X^{j+k}=X^{j+k\pmod{\ell}}$ by the definition of $R_\ell=\bF_2[X]/(X^\ell-1)$.

Proposition~\ref{prop:hgpdeccorrect} below implies that if $|c|$ is sufficiently small, then the procedure \FnDecHGP{$s,j$} in Algorithm~\ref{alg:dechgp} outputs a correct low-weight decoding $\tilde{c}\in c+B_1(\cC)$ for at least half of all possible choices of $j\in[\ell]$. Thus if we let $D_0(s)$ simply run \FnDecHGP{$s,j$} for a small number of random $j\in[\ell]$, it will obtain a correct decoding $\tilde{c}\in c+B_1(\cC)$ with high probability. Meanwhile, if we instead let $D_0(s)$ run \FnDecHGP{$s,j$} for all $j\in[\ell]$, it will (with a slightly larger time overhead) deterministically obtain a correct decoding $\tilde{c}$. The precise details for defining and analyzing $D_0$ in both of these cases are provided in Section~\ref{sec:hgpcombine}.

% While Proposition~\ref{prop:hgpdeccorrect} below shows that this algorithm succeeds with probability $\geq 1/2$, the success probability may easily be amplified by running the algorithm for multiple random $j\in[\ell]$, and then returning the minimum weight output $\tilde{c}$ over these runs that satisfies $\partial^{\cC}\tilde{c}=s$. Similarly, the algorithm may be derandomized by simply running it for every choice of $j\in[\ell]$, and again returning the minimum weight output $\tilde{c}$ over all the runs that satisfies $\partial^{\cC}\tilde{c}=s$. In both cases, it suffices to choose the minimum weight output over the runs because successful decoding corresponds to identifying any sufficiently low-weight error $\tilde{c}$ that is compatible with the syndrome $s$; see the proof of Proposition~\ref{prop:hgpdeccorrect} below for details.

\subsection{Correctness of the Constant-Success-Probability Decoding Algorithm}
\label{sec:hgpcorrect}
In this section, we show correctness for the decoder $D_0$ presented in Section~\ref{sec:hgpdecalg}. Formally, we show the following.

\begin{proposition}
  \label{prop:hgpdeccorrect}
  Define all variables as in Theorem~\ref{thm:hgpdec}. For every error $c\in C_1$ of weight $|c|\leq e$, letting $s=\partial_1^{\cC}c$, then it holds for $\geq \ell/2$ of all $j\in[\ell]$ that the procedure \FnDecHGP{$s,j$} in Algorithm~\ref{alg:dechgp} outputs some $\tilde{c}\in C_1$ satisfying $\partial_1^{\cC}\tilde{c}=s$ and $|\tilde{c}|\leq((w+2)\gamma+1)e$.
\end{proposition}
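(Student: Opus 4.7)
The plan is to identify the ``good'' indices $j \in [\ell]$ on which the algorithm succeeds cleanly, show that at least half of all $j \in [\ell]$ are good, and then verify both that line~\ref{li:findx} does not return FAIL and that the output $\tilde c$ satisfies the claimed weight bound. Decomposing $s = \sum_{i \in [\ell]} s_i \otimes X^i$ in the natural $\bF_2$-basis of $R_\ell$, and similarly $x = \sum_i x_i \otimes X^i$ and $y = \sum_i y_i \otimes X^i$, the coordinate relation $s_i = x_{i-1} + x_i + \partial^{\cA} y_i$ telescopes to
\[
\sum_{i \in [k]} s_{j+i} \;=\; x_{j-1} + x_{j+k-1} + \partial^{\cA} a_{j+k}, \qquad a_{j+k} := \sum_{i \in [k]} y_{j+i}.
\]

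Call $j$ \emph{good} if $x_{j-1} = 0$; since $|\supp x| \leq |x| \leq e \leq \ell/2$, at least $\ell/2$ indices are good. For such a $j$, the argument passed to $D^{\cA}$ on line~\ref{li:nsdec} is the genuine noisy syndrome $x_{j+k-1} + \partial^{\cA} a_{j+k}$ of $a_{j+k}$, with code-error weight $|a_{j+k}| \leq |y| \leq e_1$ and syndrome-error weight $|x_{j+k-1}| \leq |x| \leq e_0$, so the $(e_0,e_1,\gamma)$-noisy-syndrome guarantee yields $|\eta_{j+k}| \leq \gamma|x_{j+k-1}|$, where $\eta_{j+k} := \tilde a_{j+k} - a_{j+k}$. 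Two boundary values are crucial: $\eta_j = 0$ because $D^{\cA}(0) = 0$, and $\eta_{j+\ell} = 0$ because the indices $j-1$ and $j+\ell-1$ agree modulo $\ell$, so the syndrome-error $x_{j-1} + x_{j+\ell-1}$ vanishes identically over $\bF_2$ (this second cancellation holds for every $j$, not only the good ones).

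These vanishings upgrade approximate relations to exact ones. A short index manipulation that uses $\eta_{j+\ell} = 0$ to absorb the cyclic wrap-around shows
\[
y - \tilde y \;=\; (I \otimes (1+X))\, u', \qquad u' := \sum_{k \in [\ell]} \eta_{j+k+1} \otimes X^{j+k}.
\]
Since $\partial^{\cA} \otimes I$ commutes with $I \otimes (1+X)$, the element $\tilde x^\ast := x + (\partial^{\cA} \otimes I)\, u'$ satisfies $\partial_1^{\cC}(\tilde x^\ast, \tilde y) = s$, so line~\ref{li:findx} does not FAIL and the chosen minimizer $\tilde x$ obeys $|\tilde x| \leq |\tilde x^\ast| \leq |x| + w|u'| \leq (1 + w\gamma)|x|$, using the locality-$w$ bound on $\partial^{\cA}$ and $|u'| \leq \sum_k |\eta_{j+k+1}| \leq \gamma|x|$.

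Finally, expanding $\tilde a_{j+k+1} - \tilde a_{j+k} = y_{j+k} + \eta_{j+k+1} + \eta_{j+k}$ and summing over $k \in [\ell]$ (using $\eta_j = \eta_{j+\ell} = 0$ to fold the two endpoint contributions together) gives $|\tilde y| \leq |y| + 2\gamma|x|$; combining with the bound on $|\tilde x|$ yields $|\tilde c| = |\tilde x| + |\tilde y| \leq |y| + (1 + (w+2)\gamma)|x| \leq (1 + (w+2)\gamma)\, e$. The main technical subtlety I expect is the cyclic-index bookkeeping: verifying $\eta_{j+\ell} = 0$ and then confirming that the wrap-around term in the computation of $(I \otimes (1+X))\, u'$ cancels cleanly, so that $y - \tilde y$ is an \emph{exact} boundary rather than merely a small-weight perturbation; once this is done, matching the syndrome exactly in line~\ref{li:findx} is immediate, and the rest is routine Hamming-weight arithmetic.
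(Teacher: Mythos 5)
Your proof is correct and follows essentially the same route as the paper's: restrict to the at-least-$\ell/2$ shifts $j$ with $x_{j-1}=0$, bound $\eta_{j+k}=\tilde a_{j+k}-a_{j+k}$ via noisy-syndrome decoding of the telescoped prefix sums (with the endpoint vanishings $\eta_j=\eta_{j+\ell}=0$), and exhibit the explicit candidate $\tilde x^\ast=x+(\partial^{\cA}\otimes I)u'$ to certify that line~\ref{li:findx} succeeds with the stated weight bound. Your $u'$ is just the paper's vector $b$ shifted by one position (the paper writes $\tilde y-y=(I\otimes(1+X)X^{\ell-1})b$), so the two arguments coincide.
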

\begin{proof}
  We first fix some notation. For an $\bF_2$-vector space $V$ and an element $v\in V\otimes_{\bF_2} R_\ell$, we let $v=\sum_{i\in[\ell]}v_i\otimes X^i$ be the natural decomposition.

  Fix any $c=(x,y)\in A_0\otimes R_\ell\oplus A_1\otimes R_\ell$ such that $|c|\leq e$. Recall here that all Hamming weights are computed with respect to the $\bF_2$-structure, by counting the number of nonzero bits of the specified vector in its representation in the chosen $\bF_2$-basis. By assumption $|x|\leq|c|\leq\ell/2$, so it holds for $\geq \ell/2$ of all $j\in[\ell]$ that $x_{j-1\pmod{\ell}}=0$. For the remainder of the proof we assume that $j$ satisfies $x_{j-1\pmod{\ell}}=0$.

  For $k\in[\ell+1]$, let $a_{j+k}=\sum_{i\in[k]}y_{j+i\pmod{\ell}}$ denote the prefix sums of $y$. Intuitively, the $\tilde{a}_{j+k}$ defined in line~\ref{li:nsdec} of Algorithm~\ref{alg:dechgp} provide approximations to these prefix sums. The claim below bounds the error in this approximation.

  \begin{claim}
    \label{claim:hgpaclose}
    It holds that $\sum_{k\in[\ell+1]}|\tilde{a}_{j+k}-a_{j+k}| \leq \gamma|x|$ and $\tilde{a}_j-a_j=\tilde{a}_{j+\ell}-a_{j+\ell}=0$.
  \end{claim}
  \begin{proof}[Proof of Claim~\ref{claim:hgpaclose}]
    By definition each $s_i=x_{i-1\pmod{\ell}}+x_i+\partial^{\cA}y_i$. Therefore for $k\in[\ell+1]$, because $x_{j-1\pmod{\ell}}=0$, it follows that $\sum_{i\in[k]}s_{j+i\pmod{\ell}} = x_{j+k-1\pmod{\ell}}+\partial^{\cA}a_{j+k}$. Then as $|x_{j+k-1\pmod{\ell}}|\leq|x|\leq|c|\leq e_0$ and $|a_{j+k}|\leq|y|\leq|c|\leq e_1$, the $(e_0,e_1,\gamma)$-noisy-syndrome decoder $D^{\cA}$ must yield $\tilde{a}_{j+k}=D^{\cA}(x_{j+k-1\pmod{\ell}}+\partial^{\cA}a_{j+k})$ satisfying
    \begin{equation}
      \label{eq:adifhp}
      |\tilde{a}_{j+k}-a_{j+k}| \leq \gamma|x_{j+k-1\pmod{\ell}}|.
    \end{equation}
    Summing~(\ref{eq:adifhp}) over all $k\in[\ell+1]$ then yields the first claim, while setting $k=0,\ell$ in~(\ref{eq:adifhp}) and applying the assumption $x_{j-1\pmod{\ell}}$ yields the second claim.
  \end{proof}

  Now define $b=\sum_{i\in[\ell]}b_i\otimes X^i\in A_1\otimes R_\ell$ by
  \begin{equation*}
    b_{j+i\pmod{\ell}} := \tilde{a}_{j+i}-a_{j+i}.
  \end{equation*}
  for $i\in[\ell+1]$. Note that $b_j=0$ and $|b|\leq\gamma|x|$ by Claim~\ref{claim:hgpaclose}.

  As the $\tilde{a}_{j+i}$ provide approximations to prefix sums of $y$, then line~\ref{li:ytdef} of Algorithm~\ref{alg:dechgp} takes differences of consecutive $\tilde{a}_{j+i}$ to recover an approximation $\tilde{y}$ to $y$. The claim below bounds the error in this approximation.

  % \vnote{Why write as $(1+X)X^{\ell-1}$ instead of $1+X^{\ell-1}$?}
  % This was to emphasize that it is a multiple of $1+X$, as we subsequently divide by $1+X$ to define $\tilde{x}'$
  \begin{claim}
    \label{claim:hgpydif}
    $\tilde{y}-y=(I\otimes(1+X)X^{\ell-1})b$.
  \end{claim}
  \begin{proof}
    By definition, it holds for all $i\in[\ell]$ that $y_{j+i\pmod{\ell}}=a_{j+i+1}-a_{j+i}$ and $\tilde{y}_{j+i\pmod{\ell}}=\tilde{a}_{j+i+1}-\tilde{a}_{j+i}$. Thus
    \begin{align*}
      (\tilde{y}-y)_{j+i\pmod{\ell}}
      &= (\tilde{a}_{j+i+1}-a_{j+i+1})-(\tilde{a}_{j+i}-a_{j+i}) \\
      &= b_{j+i+1\pmod{\ell}}-b_{j+i\pmod{\ell}}.
    \end{align*}
    The above equality holding for all $i\in[\ell]$ is equivalent to the claim statement, as $\tilde{y}-y=\sum_k(\tilde{y}-y)_k\otimes X^k$, and $(I\otimes(1+X)X^{\ell-1})b=\sum_k(b_{k+1\pmod{\ell}}+b_k)\otimes X^k$, where $b_{k+1\pmod{\ell}}+b_k=b_{k+1\pmod{\ell}}-b_k$ as we are working over $\bF_2$.
  \end{proof}

  Define $\tilde{x}':=x+(\partial^{\cA}\otimes X^{\ell-1})b$. Then
  \begin{align*}
    \partial_1^{\cC}(\tilde{x}',\tilde{y})
    &= (I\otimes(1+X))\tilde{x}'+(\partial^{\cA}\otimes I)\tilde{y} \\
    &= (I\otimes(1+X))x+(\partial^{\cA}\otimes(1+X)X^{\ell-1})b+(\partial^{\cA}\otimes I)\tilde{y} \\
    &= (I\otimes(1+X))x+(\partial^{\cA}\otimes I)(\tilde{y}+y)+(\partial^{\cA}\otimes I)\tilde{y} \\
    &= (I\otimes(1+X))x+(\partial^{\cA}\otimes I)y \\
    &= s,
  \end{align*}
  where the third equality above holds by Claim~\ref{claim:hgpydif}.

  Thus we have shown that $\tilde{x}'$ is a valid choice of $\tilde{x}$ in line~\ref{li:findx} of Algorithm~\ref{alg:dechgp}. Thus line~\ref{li:findx} is guaranteed to find $\tilde{x}$ of weight at most $|\tilde{x}|\leq|\tilde{x}'|$ such that $\partial_1^{\cC}(\tilde{x},\tilde{y})=s$, so Algorithm~\ref{alg:dechgp} returns $\tilde{c}=(\tilde{x},\tilde{y})$ of weight at most
  \begin{align*}
    |\tilde{c}|
    &= |\tilde{x}|+|\tilde{y}| \\
    &\leq |\tilde{x}'|+|y|+|\tilde{y}-y| \\
    &\leq |x|+|(\partial^{\cA}\otimes X^{\ell-1})b|+|y|+|(I\otimes(1+X)X^{\ell-1})b| \\
    &\leq |x|+w|b|+|y|+2|b| \\
    &\leq |x|+|y|+(w+2)\gamma|x| \\
    &\leq (1+(w+2)\gamma)|c|.
  \end{align*}
  where the second inequality above holds by the definition of $\tilde{x}'$ along with Claim~\ref{claim:hgpydif}, the third inequality holds because $\cA$ has locality $w$ by assumption, the fourth inequality holds by the definition of $b$ along with Claim~\ref{claim:hgpaclose}, and the fifth inequality holds by the definition of $c=(x,y)$.

  Thus as $|c|\leq e$, we have shown that for at least $\ell/2$ of all $j\in[\ell]$, the \FnDecHGP{$s,j$} in Algorithm~\ref{alg:dechgp} returns $\tilde{c}$ of weight at most $|\tilde{c}|\leq((w+2)\gamma+1)e$ with $\partial_1^{\cC}\tilde{c}=s$, as desired.
  % Therefore $\tilde{c}-c\in\ker\partial_1^{\cC}$ and $|\tilde{c}-c|\leq|\tilde{c}|+|c|\leq(2+(w+2)\gamma)e$. But by assumption $e\leq(d(\cC)-1)/(2+(w+2)\gamma)$ where $d(\cC)$ denotes the distance of the quantum code associated to $\cC$, so $|\tilde{c}-c|\leq d(\cC)-1$. Thus $\tilde{c}-c\in\ker\partial_1^{\cC}$ must lie in $\im\partial_2^{\cC}$ by the definition of the distance of a quantum code, so $\tilde{c}\in c+B_1(\cC)$, as desired.
\end{proof}

\subsection{Running Time of the Constant-Success-Probability Decoding Algorithm}
\label{sec:hgpdectime}
In this section, we analyze the running time Algorithm~\ref{alg:dechgp}. Formally, we show the following.

\begin{proposition}
  \label{prop:hgpdectime}
  Define all variables as in Theorem~\ref{thm:hgpdec}. Then the procedure \FnDecHGP{$s,j$} in Algorithm~\ref{alg:dechgp} always terminates in time $O((N^{\cA}+T)\ell)$.
\end{proposition}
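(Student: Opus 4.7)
The plan is to bound the running time of each of the three substantive steps of \FnDecHGP{$s,j$} (line~\ref{li:nsdec}, line~\ref{li:ytdef}, and line~\ref{li:findx}) separately, showing that the total is $O((N^{\cA}+T)\ell)$.

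For line~\ref{li:nsdec}, I would process the values of $k$ from $0$ up to $\ell$ in order, maintaining a running partial sum $u_k := \sum_{i\in[k]}s_{j+i\pmod{\ell}}\in A_0$; advancing from $u_k$ to $u_{k+1}$ costs only $O(\dim A_0)=O(N^{\cA})$ time per step (one vector addition in $A_0$). Each of the $\ell+1$ invocations of the noisy-syndrome decoder $D^{\cA}$ on $u_k$ takes time at most $T$ by hypothesis. Summing gives $O((N^{\cA}+T)\ell)$.

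For line~\ref{li:ytdef}, I would compute $\tilde{y}=\sum_{k\in[\ell]}(\tilde{a}_{j+k+1}-\tilde{a}_{j+k})\otimes X^{j+k}$ by performing $\ell$ subtractions of vectors in $A_1$; this takes $O(N^{\cA}\ell)$ time.

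The main content is line~\ref{li:findx}, which requires finding a minimum-weight $\tilde{x}\in A_0\otimes R_\ell$ with $(I\otimes(1+X))\tilde{x}=s-(\partial^{\cA}\otimes I)\tilde{y}=:r$, or detecting infeasibility. First, I compute the right-hand side $r$: the product $(\partial^{\cA}\otimes I)\tilde{y}$ can be evaluated by applying $\partial^{\cA}$ to each of the $\ell$ ``slices'' of $\tilde{y}$ in $A_1$, and since $\partial^{\cA}$ has locality $w=O(1)$ each application takes $O(N^{\cA})$ time, for a total of $O(N^{\cA}\ell)$. Next, I observe that the equation $(I\otimes(1+X))\tilde{x}=r$ decouples across the chosen $\bF_2$-basis of $A_0$: writing $\tilde{x}=\sum_{a}a\otimes\tilde{x}_a$ and $r=\sum_a a\otimes r_a$ with $\tilde{x}_a,r_a\in R_\ell$, the system becomes the $\dim A_0\leq N^{\cA}$ independent one-dimensional equations $(1+X)\tilde{x}_a=r_a$ in $R_\ell$. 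Viewing $R_\ell\cong\bF_2^\ell$, each such equation is the circular ``discrete integration'' problem $(\tilde{x}_a)_i+(\tilde{x}_a)_{i-1\pmod{\ell}}=(r_a)_i$; it is solvable iff the coefficients of $r_a$ sum to $0$, in which case the solution set is a coset of the one-dimensional kernel $\{0,1+X+\cdots+X^{\ell-1}\}$ of $1+X$, so the two candidates can be produced by prefix sums in $O(\ell)$ time and we return the lighter one (or FAIL if infeasible). This gives $O(\ell)$ per basis element and $O(N^{\cA}\ell)$ in total.

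Adding the three bounds gives $O((N^{\cA}+T)\ell)$, as claimed. There is no real obstacle; the only point requiring a moment of care is recognizing that the minimum-weight preimage problem in line~\ref{li:findx} decomposes across the $\bF_2$-basis of $A_0$ (because the operator $I\otimes(1+X)$ acts diagonally on that decomposition) and within each fiber reduces to a one-dimensional circular differencing problem solvable in linear time.
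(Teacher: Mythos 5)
Your proposal is correct and follows essentially the same route as the paper: bound lines~\ref{li:nsdec} and~\ref{li:ytdef} directly, then observe that line~\ref{li:findx} decouples over the $\bF_2$-basis of $A_0$ into independent equations $(1+X)\chi=\zeta$ in $R_\ell$, each solvable in $O(\ell)$ time. Your prefix-sum/coset-of-the-kernel description of the one-dimensional step is just a rephrasing of the paper's Lemma~\ref{lem:inv1pX}, which tries both values of $\chi_0$ (the two candidates differing by $1+X+\cdots+X^{\ell-1}$), so there is no substantive difference.
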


We will just need the following lemma.

\begin{lemma}
  \label{lem:inv1pX}
  Given $\zeta\in R_\ell$, there exists a $O(\ell)$-time algorithm that computes the minimum-weight element $\chi\in R_\ell$ satisfying $(1+X)\chi=\zeta$, or outputs FAIL if no such $\chi$ exists.
\end{lemma}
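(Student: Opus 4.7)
The plan is to translate the equation $(1+X)\chi=\zeta$ into a recurrence on the coefficients of $\chi$, observe that it has either $0$ or $2$ solutions, and write both down explicitly in $O(\ell)$ time before selecting the lighter one.

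Write $\chi=\sum_{i\in[\ell]}\chi_iX^i$ and $\zeta=\sum_{i\in[\ell]}\zeta_iX^i$ in the standard $\bF_2$-basis of $R_\ell=\bF_2[X]/(X^\ell-1)$. Expanding $(1+X)\chi$ and matching coefficients gives the cyclic system
\begin{equation*}
\chi_i+\chi_{i-1\pmod{\ell}}=\zeta_i,\qquad i\in[\ell].
\end{equation*}
Solving the recurrence forward from $\chi_0$ yields $\chi_k=\chi_0+\sum_{i=1}^{k}\zeta_i$ for every $k\in[\ell]$. The wrap-around constraint at $i=0$, namely $\chi_0=\chi_{\ell-1}+\zeta_0$, then collapses to $\sum_{i=0}^{\ell-1}\zeta_i=0$. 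So a solution exists if and only if $|\zeta|$ is even, which is exactly the parity obstruction coming from the fact that $1+X$ has image equal to the even-weight vectors and kernel $\spn\{1+X+\cdots+X^{\ell-1}\}$ in $R_\ell$.

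Given this, the algorithm is straightforward: first compute $|\zeta|\bmod 2$ in $O(\ell)$ time and return FAIL if it is $1$; otherwise set $\chi^{(0)}_0=0$ and compute the running prefix XORs $\chi^{(0)}_k=\chi^{(0)}_{k-1}+\zeta_k$ for $k=1,\dots,\ell-1$ in $O(\ell)$ time. The only other solution is $\chi^{(1)}:=\chi^{(0)}+(1+X+\cdots+X^{\ell-1})$, obtained by flipping all $\ell$ bits, and hence $|\chi^{(1)}|=\ell-|\chi^{(0)}|$. Return whichever of $\chi^{(0)},\chi^{(1)}$ has smaller Hamming weight (breaking ties arbitrarily). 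The total running time is $O(\ell)$, and correctness follows because every solution of $(1+X)\chi=\zeta$ differs from $\chi^{(0)}$ by an element of $\ker(1+X)=\{0,\,1+X+\cdots+X^{\ell-1}\}$.

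There is no real obstacle here; the only subtlety worth checking in the write-up is that $\ker(1+X)$ is exactly one-dimensional (so there are only two candidate solutions to compare), which is immediate from the fact that the recurrence $\chi_i=\chi_{i-1}$ forces all coordinates to agree.
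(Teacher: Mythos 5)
Your proposal is correct and follows essentially the same route as the paper: both reduce $(1+X)\chi=\zeta$ to the coefficient recurrence $\chi_i=\chi_0+\sum_{k=1}^i\zeta_k$ and compare the two candidate solutions arising from the one-dimensional kernel of $1+X$. The only cosmetic difference is that you detect solvability up front via the parity of $|\zeta|$ and obtain the second candidate as the complement of the first, whereas the paper simply tries both values of $\chi_0$ and verifies $(1+X)\chi=\zeta$ directly; both give the same $O(\ell)$ algorithm.
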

\begin{proof}
  Decomposing $\chi=\sum_{i\in[\ell]}\chi_iX^i$ and $\zeta=\sum_{i\in[\ell]}\zeta_iX^i$, then the condition $(1+X)\chi=\zeta$ is equivalent to requiring that $\chi_{i-1\pmod{\ell}}+\chi_i=\zeta_i$ for all $i\in[\ell]$. Inductively applying this equality for $i=1,\dots,\ell$ implies that $\chi_i=\chi_0+\sum_{k=1}^i\zeta_k$ for all $i\in[\ell]$.

  Given a choice of $\chi_0\in\{0,1\}$, we may use this formula to compute $\chi=\sum_{i\in[\ell]}\chi_iX^i$ in time $O(\ell)$. Therefore we can simply try both $\chi_0=0$ and $\chi_0=1$, and for both resulting $\chi\in R_\ell$, we can test if $(1+X)\chi=\zeta$. If this test succeeds for either $\chi$, we return whichever has smaller Hamming weight; otherwise, if $(1+X)\chi\neq\zeta$ for both $\chi$, we return FAIL.
\end{proof}

\begin{proof}[Proof of Proposition~\ref{prop:hgpdectime}]
  We analyze each line in Algorithm~\ref{alg:dechgp} separately. Below, recall that $N^{\cA}=\dim A_0+\dim A_1$ (see the statement of Theorem~\ref{thm:hgpdec}).
  \begin{description}
  \item[Line~\ref{li:nsdec}:] This step requires $O(N^{\cA}\ell)$ time to compute the partial sums $\sum_{i\in[k]}s_{j+i\pmod{\ell}}$ for $k\in[\ell+1]$, and then $O(T\ell)$ time to run the decoder $D^{\cA}$ on each such partial sum, for a total running time of $O((N^{\cA}+T)\ell)$.
  \item[Line~\ref{li:ytdef}:] This step by definition requires $O(N^{\cA}\ell)$ time.
  \item[Line~\ref{li:findx}:] This step requires $O(N^{\cA}\ell)$ time. Specifically, as $\partial_1^{\cC}(\tilde{x},\tilde{y})=(I\otimes(1+X))\tilde{x}+(\partial^{\cA}\otimes I)\tilde{y}$, in this step the algorithm must simply compute a minimum-weight $\tilde{x}\in A_0\otimes R_\ell$ satisfying
    \begin{equation}
      \label{eq:findtx}
      (I\otimes(1+X))\tilde{x}=z,
    \end{equation}
    for $z:=s-(\partial^{\cA}\otimes I)\tilde{y}$, or else determine that no such $\tilde{x}$ exists.

    For this purpose, we will simply apply Lemma~\ref{lem:inv1pX}. Specifically, let $N_0=\dim A_0$ and let $\{a_h\}_{h\in[N_0]}$ be the fixed basis of $A_0$ (with respect to which we compute Hamming weights), and let $z=\sum_{h\in[N_0]}a_{h}\otimes z_h$ for $z_h\in R_\ell$ be the decomposition into this basis. Because $I\otimes(1+X)$ preserves each subspace $\{a_h\}\otimes R_\ell$, we may apply Lemma~\ref{lem:inv1pX} independently on each such subspace. Specifically, we apply Lemma~\ref{lem:inv1pX} to $\zeta=z_h$ for each $h\in[N_0]$. Each of these $N_0$ runs take $O(\ell)$ time, for a total time of $O(N_0\ell)\leq O(N^{\cA}\ell)$. If all runs succeed, we let $\tilde{x}_h=\chi$ be the respective outputs, and return $\tilde{x}=\sum_{h\in[N_0]}a_h\otimes\tilde{x}_h$. Otherwise, if any runs of Lemma~\ref{lem:inv1pX} fail, we output FAIL.
    
  \item[Line~\ref{li:hgpret}:] This step by definition requires $O(N^{\cA}\ell)$ time.
  \end{description}
  Summing the above running times, we obtain the desired $O((N^{\cA}+T)\ell)$ time for all 4 steps.
\end{proof}

\subsection{Amplifying the Success Probability}
\label{sec:hgpcombine}
In this section, we combine the results of Section~\ref{sec:hgpcorrect} and Section~\ref{sec:hgpdectime} to prove Theorem~\ref{thm:hgpdec}. The main idea is to simply run algorithm~\ref{alg:dechgp} for multiple different $j$ in order to obtain a correct decoding of the syndrome with high probability (or with probability $1$).

\begin{proof}[Proof of Theorem~\ref{thm:hgpdec}]
  % \vnote{Might not need this much detail if it's standard amplification and trying out all choices for deterministic case}
  % Ah yeah I can try to make this proof more concise. The one sublety making it not completely standard amplification is that in general for a quantum code it can be difficult to test which of the errors we recovered is correct, as they can differ from the true error by stabiliziers. But taking the minimum weight error we recover and appealing to distance of the code makes it work out.
  As described in Section~\ref{sec:hgpdecalg}, by the symmetry of $\cC_*$ and $\cC^*$, it suffices to provide a decoding algorithm $D_0:C_0\rightarrow C_1$ as described in Definition~\ref{def:decoder}, as the algorithm $D^2:C^2\rightarrow C^1$ will be exactly analogous.
  
  We first prove the statement in Theorem~\ref{thm:hgpdec} regarding the randomized decoder. Specifically, given $\delta>0$, define a randomized decoder $D_0^{(R)}(s)$ that simply calls \FnDecHGP{$s,j$} in Algorithm~\ref{alg:dechgp} for $K:=\lg(1/\delta)$ iid uniform samples of $j\in[\ell]$, and then returns the minimum weight output $\tilde{c}$ from the $K$ runs that satisfies $\partial_1^{\cC}\tilde{c}=s$. By Proposition~\ref{prop:hgpdectime}, $D_0^{(R)}(s)$ always terminates in time $O((N^{\cA}+T)\ell\log(1/\delta))$. Our goal is to show that if $s=\partial_1^{\cC}c$ is the syndrome of some error $c$ of weight $|c|\leq e$, then with probability $\geq 1-\delta$, the resulting output $\tilde{c}\gets D_0^{(r)}(s)$ satisfies $\tilde{c}\in c+B_1(\cC)$.

  Proposition~\ref{prop:hgpdeccorrect} implies that the probability that all $K$ runs fail to output some $\tilde{c}$ with $\partial_1^{\cC}\tilde{c}=s$ and $|\tilde{c}|\leq((w+2)\gamma+1)e$ is at most $(1/2)^K=\delta$. Thus with probability $\geq\delta$, the output $\tilde{c}$ of $D_0^{(r)}(s)$ will satisfy $\partial_1^{\cC}\tilde{c}=s$ and $|\tilde{c}|\leq((w+2)\gamma+1)e$. Thus $\tilde{c}-c\in Z_1(\cC)$, and furthermore, $|\tilde{c}-c|\leq|\tilde{c}|+|c|\leq((w+2)\gamma+2)e<d(\cC)$ because by definition $e<d(\cC)/(2+(w+2)\gamma)$. Therefore by the definition of quantum code distance, $\tilde{c}-c\notin Z_1(\cC)\setminus B_1(\cC)$, so $\tilde{c}-c\in B_1(\cC)$, as desired.

  We now prove the statement in Theorem~\ref{thm:hgpdec} regarding the deterministic decoder. Specifically, define the deterministic decoder $D_0^{(D)}(s)$ to simply call \FnDecHGP{$s,j$} in Algorithm~\ref{alg:dechgp} for every $j\in[\ell]$, and then return the minimum weight output $\tilde{c}$ from the $\ell$ runs that satisfies $\partial_1^{\cC}\tilde{c}=s$. By Proposition~\ref{prop:hgpdectime}, $D_0^{(D)}(s)$ always terminates in time $O((N^{\cA}+T)\ell^2)$. Our goal is to show that if $s=\partial_1^{\cC}c$ is the syndrome of some error $c$ of weight $|c|\leq e$, then the resulting output $\tilde{c}\gets D_0^{(r)}(s)$ always satisfies $\tilde{c}\in c+B_1(\cC)$.

  Proposition~\ref{prop:hgpdeccorrect} implies that there exists some $j\in[\ell]$ for which $\tilde{c}\gets$\FnDecHGP{$s,j$} satisfies $\partial_1^{\cC}\tilde{c}=s$ and $|\tilde{c}|\leq((w+2)\gamma+1)e$. By the same reasoning given above for the randomized case, any such $\tilde{c}$ must lie in $\tilde{c}\in c+B_1(\cC)$, so $D_0^{(D)}(s)$ is gauranteed to output such a $\tilde{c}\in c+B_1(\cC)$, as desired.
\end{proof}

\section{Lifted Product of Tanner Code and Repetition Code}
\label{sec:lpdec}
In this section, we present an efficient decoding algorithm for the lifted product over $R_\ell$ (see Section~\ref{sec:lpdef}) of an expanding classical LDPC code with a repetition code. Specifically, we show how to decode the $[[N,\Theta(\log N),\Theta(N/\log N)]]$ codes of \cite{panteleev_quantum_2022} (see Theorem~\ref{thm:pkdis}) up $\Theta(N/\log N)$ adversarial errors, that is, up to a linear number of errors in the code's distance.

\subsection{Result Statement}
Our main result on decoding the lifted product over $R_\ell$ of a Tanner code and a repetition code is stated below. For ease of presentation, we restrict attention to the case where $\ell$ is a power of $2$, though we expect our algorithm to generalize to arbitrary $\ell\in\bN$.

\begin{theorem}
  \label{thm:lpdec}
  For some $\ell\in\bN$ that is a power of $2$, let $\cA_*=(A_1\xrightarrow{\partial^{\cA}}A_0)$ be a 2-term chain complex over $R_\ell$ of locality $w$ such that $\cA_*$ and $\cA^*$ are both $(e_0,e_1,\gamma)$-noisy-syndrome decodable in time $T$. Let $\cB_*=(B_1\xrightarrow{\partial^{\cB}}B_0)$ be the 2-term chain complex over $R_\ell$ given by $B_0=B_1=R_\ell$ and $\partial^{\cB}=1+X\in R_\ell$. Let
  \begin{equation*}
    \cC_* = \cA\otimes_{R_\ell}\cB
  \end{equation*}
  denote the lifted product over $R_\ell$. Let $d(\cC)$ denote the distance of the quantum code associated to $\cC_*$, and let $n^{\cA}=\dim_{R_\ell}A_0+\dim_{R_\ell}A_1$. Then for every $0<\epsilon\leq 1/2$ and every $\delta>0$, there exists a randomized decoder for this quantum code that protects against
  \begin{equation}
    \label{eq:elpdec}
    e := \min\left\{\frac{e_0}{2},\; \frac{\epsilon e_1}{48\gamma},\; \frac{\epsilon\ell}{12\gamma},\; \frac{d(\cC)-1}{2(w+2)\gamma/\epsilon+2}\right\}
  \end{equation}
  adversarial errors with failure probability $\leq\delta$ and runs in time $O(\ell^{1+2\epsilon}\cdot(n^{\cA}\ell+T)\cdot\log(1/\delta))$.
\end{theorem}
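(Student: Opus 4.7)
By symmetry of $\cC_* = \cA \otimes_{R_\ell} \cB$ (the boundary maps $\partial_1^{\cC}$ and $\delta_1^{\cC} = (\partial_2^{\cC})^\top$ have identical structure up to swapping the two summands and conjugating), it suffices to construct the $Z$-decoder $D_0 : C_0 \to C_1$; the $X$-decoder $D^2$ follows mutatis mutandis. Fix an error $c = (x,y) \in A_0 \oplus A_1$ with $|c| \le e$ and let $s = (1+X)x + \partial^{\cA} y$ be the received syndrome. By the choice of $e$ in~(\ref{eq:elpdec}), it suffices to output $\tilde c \in C_1$ with $\partial_1^{\cC} \tilde c = s$ and $|\tilde c| \le ((w+2)\gamma/\epsilon + 1)\,|c|$, since then $\tilde c - c \in Z_1(\cC)$ has weight less than $d(\cC)$ and must therefore lie in $B_1(\cC)$. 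The entire task thus reduces to producing $\tilde y \in A_1$ with $\tilde y - y = (1+X)u$ for some $u$ of weight $O(\ell)$; given such $\tilde y$, the ``right half'' $\tilde x$ is recovered by solving $(1+X)\tilde x = s - \partial^{\cA}\tilde y$ componentwise in the $R_\ell$-basis of $A_0$ via Lemma~\ref{lem:inv1pX}.

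The heart of the argument is an iterative decoder that maintains the invariant
\[
  \tilde y_\tau - y \;=\; (1+X)\,u_\tau + r_\tau, \qquad |u_\tau| \le O(\ell),\quad |r_\tau| \le O(\ell / t_\tau),
\]
where $t_\tau = 2^{\tau+1}$ is the window size at iteration $\tau$. To pass from $\tilde y_\tau$ to $\tilde y_{\tau+1}$, we work with the residual syndrome $s_\tau := s - \partial^{\cA} \tilde y_\tau$, whose ``true error'' $y - \tilde y_\tau$ decomposes as above. After a uniformly random cyclic shift $X^j$, we compute $\tilde a_k = D^{\cA}\bigl(\sum_{i \in [k]} X^{i+j} s_\tau\bigr)$ for all $k \in [t_{\tau+1}+1]$. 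The crucial point is that $\sum_{i \in [k]} X^i[(1+X)u_\tau] = (1+X^k)u_\tau$ has weight $\le 2|u_\tau|$ independent of $k$, and $\sum_{i \in [k]} X^i r_\tau$ has weight $\le t_{\tau+1}|r_\tau|$; by the invariant and choice of $t_\tau$, both are comfortably below $e_0$, so $D^{\cA}$ succeeds and returns $\tilde a_k$ close to $a_k = \sum_{i \in [k]} X^i y$ up to a syndrome-proportional slack. We then define $\tilde z$ by $\tilde z_{h, m t_{\tau+1} + i} = (\tilde a_{i+1} - \tilde a_i)_{h, m t_{\tau+1}}$ for $1 \le i < t_{\tau+1}$ and $0$ at $i = 0$, and fill in the length-$t_{\tau+1}$ ``gaps'' by a majority vote over the $t_{\tau+1}$ reconstructions of each $y_{h, m t_{\tau+1}}$ obtained from the $\tilde a_{t_{\tau+1}}$ estimates. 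A bookkeeping calculation (analogous to, but more elaborate than, Claims~\ref{claim:hgpaclose} and~\ref{claim:hgpydif}) shows that with constant probability over the shift, the resulting $\tilde y_{\tau+1}$ again satisfies the invariant with $|r_{\tau+1}| \le O(\ell / t_{\tau+1})$.

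After $\tau^\star = \log_2 \ell$ iterations, $t_{\tau^\star} = \Theta(\ell)$ forces $|r_{\tau^\star}|$ below $1$, hence $r_{\tau^\star} = 0$, and we obtain $\tilde y - y = (1+X) u$ with $|u| \le O(\ell)$ as desired. One single pass through the $\log_2 \ell$ iterations succeeds with probability $2^{-\Theta(\log \ell)} = \ell^{-\Theta(1)}$; to amplify to failure probability $\le \delta$, we repeat the whole procedure $\ell^{O(\epsilon)} \log(1/\delta)$ times (choosing the constant in the exponent large enough to dominate the constant failure probability per iteration), keeping the minimum-weight valid output. Verifying a candidate $\tilde c$ is cheap: it suffices to check $\partial_1^{\cC}\tilde c = s$ and $|\tilde c|$ small enough. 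Each individual iteration costs $O(n^{\cA}\ell + T)$ for the prefix sums plus the $t_{\tau+1} = O(\ell)$ calls to $D^{\cA}$; summing the geometric series over $\tau$ and multiplying by the $\ell^{O(\epsilon)}\log(1/\delta)$ outer repetitions gives the stated $O(\ell^{1+2\epsilon}(n^{\cA}\ell + T)\log(1/\delta))$ running time.

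The main obstacle in executing this plan is the inductive step: showing that $|r_{\tau+1}| \le O(\ell/t_{\tau+1})$, which requires a delicate accounting of how noisy-syndrome errors introduced by $D^{\cA}$ at the various lengths $k \le t_{\tau+1}$ redistribute under the difference-and-majority extraction. In particular, one must show that the ``residual'' contribution $r_\tau$ from the previous iteration, once passed through a prefix sum of length up to $t_{\tau+1} = 2 t_\tau$, remains within the $e_0$-budget of $D^{\cA}$ (which uses the factor-$2$ slack to ensure $t_{\tau+1}|r_\tau| \le 2\cdot O(\ell) \le e_0$) and that the majority step actually corrects the periodic gaps rather than amplifying them. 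The random cyclic shift, which guarantees with constant probability that the special ``boundary'' positions $m t_{\tau+1} - 1$ are free of syndrome-error support, is what makes this bookkeeping close. All other steps (reduction via distance, symmetry between $D_0$ and $D^2$, recovery of $\tilde x$ via Lemma~\ref{lem:inv1pX}, and amplification) are routine given this invariant.
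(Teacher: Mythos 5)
Your plan is essentially the paper's proof: reduce by symmetry to the $Z$-side decoder, build a weak decoder that maintains an approximate-compatibility invariant $\tilde y_\tau - y = (1+X)u_\tau + r_\tau$ and doubles the window $t_\tau$ each round by decoding prefix sums of the residual syndrome after a random cyclic shift (this is Algorithm~\ref{alg:ampcom} and Proposition~\ref{prop:ampcom}, iterated in Algorithm~\ref{alg:weakdec}), recover $\tilde x$ componentwise via Lemma~\ref{lem:inv1pX}, pass from $\partial_1^{\cC}\tilde c = s$ plus a weight bound to $\tilde c - c \in B_1(\cC)$ using $d(\cC)$, and amplify by independent repetitions. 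So the architecture and the runtime accounting match.

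Two details in your write-up would not go through as literally stated, and they are exactly where the paper's bookkeeping lives. First, the random shift does \emph{not} guarantee that the boundary positions $mt_{\tau+1}-1$ are ``free of syndrome-error support'': there are $\Theta(n_1\ell/t)$ such positions per round, and with $|x|=\Theta(\epsilon\ell)$ errors you cannot avoid all of them with constant probability. What the paper proves instead (Claim~\ref{claim:lptzbound}) is an averaging/Markov argument: the \emph{expected} total mass of the decoder's accumulated error at the sampled residue class is at most $2\gamma|x|$, so with probability $\geq 1-\epsilon$ it is below $2\gamma|x|/\epsilon$, and the majority step then only needs this total mass to be $o(t)$ per window, not zero. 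Second, and relatedly, your amplification count only yields the stated $O(\ell^{1+2\epsilon})$ factor if the per-iteration failure probability is $\epsilon$ itself, not an unspecified constant: with a fixed constant failure rate per round the single-pass success probability is $\ell^{-\Theta(1)}$ and the repetition overhead is $\ell^{\Theta(1)}$, not $\ell^{O(\epsilon)}$. Driving the per-round failure down to $\epsilon$ is what forces the Markov threshold $2\gamma|x|/\epsilon$, and hence the $1/\epsilon$ factors in $\Delta_1,\Delta_2$, in the output weight $(2(w+2)\gamma/\epsilon+1)e$, and in the decoding radius terms $\epsilon e_1/48\gamma$ and $\epsilon\ell/12\gamma$ of~(\ref{eq:elpdec}); your plan never ties these together. (Minor: the prefix sums of $y-\tilde y_\tau$ must fit the \emph{code-error} budget $e_1$ of $D^{\cA}$ — that is the role of $e\leq \epsilon e_1/48\gamma$ — while $e_0$ only needs to cover $|(1+X^k)x|\leq 2e$; your text charges both to $e_0$.)
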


The following example shows that Theorem~\ref{thm:lpdec} combined with Proposition~\ref{prop:classtan} and Theorem~\ref{thm:pkdis} implies Theorem~\ref{thm:lpdecinf}.

\begin{example}
\label{ex:PK-lifted-product}
  If we instantiate $\cC=\cA\otimes_{R_\ell}\cB$ using $\cA=\cA^{(\ell)}$ defined in Proposition~\ref{prop:classtan} for all $\ell\in\bN$ that are powers of $2$, then $\cC$ yields an explicit $[[N=\Theta(\ell\log\ell),\;K=\Theta(\log\ell),D=\;\Theta(\ell)]]$ quantum LDPC code by Theorem~\ref{thm:pkdis}, as shown by \cite{panteleev_quantum_2022}. Setting $0<\epsilon\leq 1/2$, $\epsilon'>0$ to be (arbitrarily small) positive constants, Theorem~\ref{thm:lpdec} provides a close-to-quadratic, that is, $O(N^{2+O(\epsilon+\epsilon')})$-time randomized decoder against $e=\Theta(\epsilon D)$ adversarial errors with failure probability $\delta=2^{-N^{\epsilon'}}$.
\end{example}

The reader is referred to Section~\ref{sec:lpdecinf} for an overview of the main ideas and motivation behind our decoding algorithm in Theorem~\ref{thm:lpdec}. In particular, Section~\ref{sec:lpdecinf} describes where the techniques used to decode the hypergraph products in Theorem~\ref{thm:hgpdec} break down for the lifted product, and provides an overview of how we address this challenge.

As described in Section~\ref{sec:introdecoverview} and Section~\ref{sec:lpdecinf}, while our decoding algorithm in Theorem~\ref{thm:lpdec} shares some similarities with the distance proof (Theorem~\ref{thm:pkdis}) of \cite{panteleev_quantum_2022}, our algorithm is fundamentally iterative unlike the distance proof of \cite{panteleev_quantum_2022}. Perhaps as a result of these differences, Theorem~\ref{thm:lpdec} does not seem to imply a bound on the distance of the codes $\cC$ of \cite{panteleev_quantum_2022}. Rather, the decoding radius in Theorem~\ref{thm:lpdec} grows with the code distance. Therefore we applied the distance bound of \cite{panteleev_quantum_2022} in Example~\ref{ex:PK-lifted-product} to show Theorem~\ref{thm:lpdecinf}.

We prove Theorem~\ref{thm:lpdec} in the following sections. Section~\ref{sec:lpdecintro} below reduces the task of proving Theorem~\ref{thm:lpdec}, that is successfully decoding with arbitrarily high probability, to the task of ``weak decoding,'' meaning successfully decoding with just inverse polynomial success probability. Section~\ref{sec:lpdecprelim} then introduces basic notions and definitions that we will need to construct such a weak decoder, and provides an outline of our approach. Section~\ref{sec:ampcom} then provides the core technical result we use to construct our weak decoder, and Section~\ref{sec:weakdec} applies this result to construct our weak decoder.

In the remainder of this section, we fix all notation as in Theorem~\ref{thm:lpdec}.

\subsection{Reducing to Weak Decoding}
\label{sec:lpdecintro}
In this section, we reduce the problem of proving Theorem~\ref{thm:lpdec}, which provides a decoder for $\cC$ that succeeds with high probability, to the problem of constructing a ``weak'' decoder that succeeds with inverse polynomial probability. We will then subsequently construct such a weak decoder.

By symmetry of the construction
\begin{equation*}
  \cC_* = \cA\otimes_{R_\ell}\cB = \left(A_1 \xrightarrow{\partial^{\cA}+(1+X)} A_0\oplus A_1 \xrightarrow{(1+X)+\partial^{\cA}} A_0\right).
\end{equation*}
in Theorem~\ref{thm:lpdec}, it will suffice to provide the decoding algorithm $D_0:C_0\rightarrow C_1$ described in Definition~\ref{def:decoder}, as the decoding algorithm $D^2:C^2\rightarrow C^1$ will be exactly analogous.

We let $D^{\cA}:A_0\rightarrow A_1$ be an $(e_0,e_1,\gamma)$-noisy-syndrome decoder for the classical code associated to $\cA$ that runs in time $\leq T$; such a decoder is assumed to exist in the statement of Theorem~\ref{thm:lpdec}.

For a given error $c\in C_1$ of weight $|c|\leq e$, the decoder $D_0$ receives as input the syndrome $s=\partial_1^{\cC}c\in C_0=A_0$. The decoder's goal is to output an element $\tilde{c}\in C_1$ that lies in the coset $c+B_1(\cC)$.

We proceed by first constructing a ``weak'' decoder that successfully decodes with probability $\geq 1/\poly(\ell)$. The desired decoder $D_0$ with high success probability then works by simply repeatedly applying the weak decoder polynomially many times. The theorem below describes our weak decoder.

\begin{theorem}
  \label{thm:lpweakdec}
  For some $\llog\in\bN$, let $\ell=2^\llog$, and let $\cA_*=(A_1\xrightarrow{\partial^{\cA}}A_0)$ be a 2-term chain complex over $R_\ell$ of locality $w$ such that $\cA_*$ and $\cA^*$ are both $(e_0,e_1,\gamma)$-noisy-syndrome decodable in time $T$. Let $\cB_*=(B_1\xrightarrow{\partial^{\cB}}B_0)$ be the 2-term chain complex over $R_\ell$ given by $B_0=B_1=R_\ell$ and $\partial^{\cB}=1+X\in R_\ell$. Let
  \begin{equation*}
    \cC_* = \cA\otimes_{R_\ell}\cB
  \end{equation*}
  denote the lifted product over $R_\ell$. Fix any $0<\epsilon\leq 1$, let $n^{\cA}=\dim_{R_\ell}A_0+\dim_{R_\ell}A_1$, and fix any
  \begin{equation}
    \label{eq:eweak}
    e \leq \min\left\{\frac{e_0}{2},\; \frac{\epsilon e_1}{48\gamma},\; \frac{\epsilon\ell}{12\gamma}\right\}.
    % e := \min\left\{\frac{e_0}{2},\; \frac{\epsilon e_1}{48\gamma},\; \frac{\epsilon\ell}{12\gamma},\; \frac{d(\cC)-1}{2(w+2)\gamma/\epsilon+2}\right\}.
  \end{equation}
  Then there exists a $O(\ell(n^{\cA}\ell+T))$-time randomized algorithm that takes as input the syndrome $s=\partial_1^{\cC}c$ of any error $c\in C_1$ of weight $|c|\leq e$, and with probability $\geq(1-\epsilon)^\llog$ outputs some $\tilde{c}\in C_1$ of weight $|\tilde{c}|\leq(2(w+2)\gamma/\epsilon+1)e$ such that $\partial_1^{\cC}\tilde{c}=s$.
\end{theorem}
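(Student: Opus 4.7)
The plan is to follow the iterative prefix-sum refinement outlined in Section~\ref{sec:lpdecinf}. Writing the error as $c=(x,y)\in A_0\oplus A_1$, the received syndrome is $s=(1+X)x+\partial^{\cA}y$, and the key algebraic identity is the telescoping
\begin{equation*}
  \sum_{i\in[k]}X^i s \;=\; (1+X^k)x \;+\; \partial^{\cA}\Bigl(\sum_{i\in[k]}X^iy\Bigr),
\end{equation*}
which exhibits any length-$k$ prefix sum of $s$ as a \emph{noisy} syndrome of $a_k:=\sum_{i\in[k]}X^iy$, with syndrome noise $(1+X^k)x$ of weight $\leq 2|x|\leq 2e\leq e_0$. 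Thus passing such prefix sums through $D^{\cA}$ always respects the syndrome tolerance; the binding constraint is keeping the code-side error $a_k$ below $e_1$, which forces $k$ to grow gradually.

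First I would set up a loop invariant: at the start of iteration $\tau\in\{0,1,\dots,\llog\}$, the current estimate $\tilde{y}_\tau$ of $y$ satisfies
\begin{equation*}
  \tilde{y}_\tau-y=(1+X)u_\tau+r_\tau,\qquad |u_\tau|\leq C\ell,\quad |r_\tau|\leq C\ell/t_\tau,
\end{equation*}
where $t_\tau$ is a doubling schedule (e.g.\ $t_\tau=2^{\tau+1}$) and $C$ is a constant depending on $w,\gamma,\epsilon$ calibrated so that the base case $\tilde{y}_0=0$ fits (using $|y|\leq e\leq\epsilon\ell/(12\gamma)$). The heart of the proof is a single-step amplification lemma (the analogue of Proposition~\ref{prop:ampcom}): assuming the invariant at $\tau$, one iteration produces $\tilde{y}_{\tau+1}$ satisfying the invariant at $\tau+1$ with probability at least $1-\epsilon$ over a uniformly random cyclic shift $X^j$. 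Concretely I would compute $\tilde{a}_k:=D^{\cA}(\sum_{i\in[k]}X^i(s-\partial^{\cA}\tilde{y}_\tau))$ for $k\leq t_{\tau+1}$, noting that the invariant yields $|\sum_{i\in[k]}X^i(y-\tilde{y}_\tau)|\leq 2|u_\tau|+k|r_\tau|=O(\ell)\leq e_1$; then, mirroring the hypergraph-product argument of Theorem~\ref{thm:hgpdec}, form an estimate $\tilde{z}$ via successive differences $\tilde{a}_{k+1}-\tilde{a}_k$, and fill the $\ell/t_{\tau+1}$ periodic ``gaps'' at positions $(h,mt_{\tau+1})$ by majority voting the $t_{\tau+1}$ shifted estimates $(\tilde{a}_{t_{\tau+1}})_{h,mt_{\tau+1}-k}-\sum_i\tilde{z}_{h,mt_{\tau+1}-k+i}$ over $k\in[t_{\tau+1}]$.

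The hard part will be the probabilistic analysis of this amplification step. The noisy-syndrome guarantee yields total $\ell_1$-error $\sum_k|\tilde{a}_k-a_k|\leq O(\gamma(|x|+\text{weight from }r_\tau))=O(\epsilon\ell)$ distributed across the entries of $\tilde{z}$; over $\ell/t_{\tau+1}$ gap positions each decided by $t_{\tau+1}$ votes, a Markov-type count over a random shift shows the fraction of positions where the majority is wrong is $O(\epsilon)$, yielding $|r_{\tau+1}|\leq C\ell/t_{\tau+1}$ with probability $\geq 1-\epsilon$, while $u_{\tau+1}$ absorbs the $(1+X)\cdot(\cdot)$ portion and stays bounded by $C\ell$. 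The random cyclic shift is essential here: without it, an adversarial $x$ or $r_\tau$ could concentrate its errors on precisely the shift offsets that each majority vote consults, fooling every gap simultaneously.

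Termination uses that once the doubling schedule makes $t_\tau>C\ell$ (which happens within $\llog$ iterations by appropriate calibration of $t_0$), integrality forces $r_\tau=0$, so $\tilde{y}:=\tilde{y}_\llog$ satisfies $\tilde{y}-y=(1+X)u$ with $|u|\leq C\ell=O(\gamma e/\epsilon)$. Then $\tilde{x}$ is computed as the minimum-weight solution to $(1+X)\tilde{x}=s-\partial^{\cA}\tilde{y}$ via Lemma~\ref{lem:inv1pX} applied coordinate-wise in $A_0$; since $\tilde{x}':=x-\partial^{\cA}u$ is a valid such solution, $|\tilde{x}|\leq|x|+w|u|$, and the returned $\tilde{c}=(\tilde{x},\tilde{y})$ has $|\tilde{c}|\leq|x|+|y|+(w+2)|u|\leq(1+2(w+2)\gamma/\epsilon)e$, matching the claimed bound after tracking constants. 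Multiplying the per-iteration success probabilities over the $\llog$ iterations gives the claimed $(1-\epsilon)^\llog$ bound, and the running time is dominated by $O(t_\tau)$ calls to $D^{\cA}$ per iteration (each preceded by an $O(n^{\cA}\ell)$-time prefix-sum update), summing to $\sum_\tau O(t_\tau(n^{\cA}\ell+T))=O(\ell(n^{\cA}\ell+T))$ overall.
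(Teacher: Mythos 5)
Your proposal is correct and follows essentially the same route as the paper's proof (Algorithm~\ref{alg:weakdec} built on the amplification step of Algorithm~\ref{alg:ampcom}/Proposition~\ref{prop:ampcom}): prefix-sum decoding with the telescoped syndrome noise $(1+X^k)x$, a random shift plus a Markov argument to control the error at the gap anchors, majority voting to fill the periodic gaps, termination by integrality once $t$ exceeds the residual scale, and recovery of $\tilde{x}$ via Lemma~\ref{lem:inv1pX} with the same $\tilde{x}'=x+\partial^{\cA}u$ comparison. The only caveat is constant bookkeeping: the code-side error fed to $D^{\cA}$ must be bounded by $e_1$ using $e\leq \epsilon e_1/(48\gamma)$ (the paper's $2\Delta_1+2\Delta_2\leq e_1$), not by asserting $O(\ell)\leq e_1$, and the residual/termination scale should be $\Theta(\gamma e/\epsilon)\leq \ell$ (so that $t=\ell$ already forces $r=0$) rather than $C\ell$ with $t_\tau$ never exceeding $\ell$.
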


% We remark that the assumption that $e\leq(d(\cC)-1)/(2(w+2)\gamma/\epsilon+2)$ in~(\ref{eq:eweak}) will not be used in the proof of Theorem~\ref{thm:lpweakdec}, but rather in its application to prove Theorem~\ref{thm:lpdec}, as shown below.

Note that the bound on $e$ in~(\ref{eq:eweak}) in Theorem~\ref{thm:lpweakdec} excludes the term involving $d(\cC)$ in~(\ref{eq:elpdec}) in Theorem~\ref{thm:lpdec}. The reason is that the proof of Theorem~\ref{thm:lpweakdec} does not rely on the distance of $\cC$. However, as the output $\tilde{c}$ of the decoder in Theorem~\ref{thm:lpweakdec} is only guaranteed to satisfy $\partial_1^{\cC}\tilde{c}=s$ and $|\tilde{c}|\leq O(e)$ (assuming $w,\gamma,\epsilon=\Theta(1)$), we can only ensure that $\tilde{c}$ lies in the desired coset $c+B_1(\cC)$ if $e$ is sufficiently small so that $|\tilde{c}-c|\leq O(e)$ is less than $d(\cC)$.

The algorithm described in Theorem~\ref{thm:lpweakdec} is given by Algorithm~\ref{alg:weakdec} below. We will first prove Theorem~\ref{thm:lpdec} assuming Theorem~\ref{thm:lpweakdec}. The subsequent sections will then be dedicated to proving Theorem~\ref{thm:lpweakdec}.

\begin{proof}[Proof of Theorem~\ref{thm:lpdec}]
  % \vnote{This may not need this much detail if it's just standard amplification of failure probability}
  % Will try to consolidate
  As described above, it suffices to provide a decoding algorithm $D_0:C_0\rightarrow C_1$ as described in Definition~\ref{def:decoder}, as the algorithm $D^2:C^2\rightarrow C^1$ will be exactly analogous. Specifically, we want to show that for a given error $c\in C_1$ of weight $|c|\leq e$, then $D_0(s=\partial_1^{\cC}c)$ outputs some $\tilde{c}\in c+B_1(\cC)$ with probability $\geq 1-\delta$.

  Formally, we define $D_0(s)$ to simply run the algorithm described in Theorem~\ref{thm:lpweakdec}
  \begin{equation*}
    K := \left\lceil\frac{\log\delta}{\log(1-(1-\epsilon)^\llog)}\right\rceil
  \end{equation*}
  times with fresh (independent) randomness in each run, and then return the lowest-weight output $\tilde{c}$ of the $K$ runs that satisfies $\partial_1^{\cC}\tilde{c}=s$. Then $D_0(s)$ can only fail to output some $\tilde{c}$ with $\partial_1^{\cC}\tilde{c}=s$ and $|\tilde{c}|\leq(2(w+2)\gamma/\epsilon+1)e$ if all $K$ runs of the algorithm in Theorem~\ref{thm:lpweakdec} fail to output such a $\tilde{c}$. But Theorem~\ref{thm:lpweakdec} ensures that the probability that any one of the $K$ runs fails to output such a $\tilde{c}$ is $\leq 1-(1-\epsilon)^\llog$, so the probability that all $K$ runs fail is at most $(1-(1-\epsilon)^\llog)^K\leq\delta$.

  Therefore with probability $\geq 1-\epsilon$, we have shown that $D_0(s)$ outputs some $\tilde{c}$ with $\partial_1^{\cC}\tilde{c}=s$ and $|\tilde{c}|\leq(2(w+2)\gamma/\epsilon+1)e$. Therefore $\partial_1^{\cC}(\tilde{c}-c)=0$, that is, $\tilde{c}-c\in Z_1(\cC)$. Futhermore,
  \begin{equation*}
    |\tilde{c}-c| \leq |\tilde{c}|+|c| \leq (2(w+2)\gamma/\epsilon+1)e+e < d(\cC_*),
  \end{equation*}
  where the final inequality above holds as $e<d(\cC)/(2(w+2)\gamma/\epsilon+2)$ by definition. Therefore $\tilde{c}-c\notin Z_1(\cC)\setminus B_1(\cC)$ by the definition of the distance $d(\cC_*)$ of a quantum code, so it follows that $\tilde{c}-c\in B_1(\cC)$, as desired.

  It remains to analyze the running time of $D_0$. By definition, $D_0$ simply executes the $O(\ell(n^{\cA}\ell+T))$-time algorithm in Theorem~\ref{thm:lpweakdec} a total of $K$ times. By definition
  \begin{align*}
    K
    &\leq \frac{\log(1/\delta)}{(1-\epsilon)^\llog}+1 \\
    &= \frac{\log(1/\delta)}{2^{\llog\log(1-\epsilon)/\log(2)}}+1 \\
    &\leq \ell^{2\epsilon}\log(1/\delta)+1,
  \end{align*}
  where the first inequality above holds because $-\log(1-\alpha)\geq\alpha$ for all $0\leq\alpha<1$, and the second inequality holds because $2^\llog=\ell$ and because $-\log(1-\epsilon)/\log(2)\leq 2\epsilon$ for all $0<\epsilon\leq 1/2$. Thus $D_0$ runs in time $O(K\cdot\ell(n^{\cA}\ell+T))\leq O(\ell^{1+2\epsilon}\cdot(n^{\cA}\ell+T)\cdot\log(1/\delta))$, as desired.
\end{proof}

\subsection{Preliminary Notions and Main Ideas for Weak Decoding}
\label{sec:lpdecprelim}
In this section, we introduce the main ideas and notions that we will use to prove Theorem~\ref{thm:lpweakdec}. Recall that a more high-level overview of our techniques here can also be found in Section~\ref{sec:lpdecinf}.

To begin, we need the following notions of proximity for elements of $A_1$.

\begin{definition}
  We define the following notions, where below $\Delta_1,\Delta_2,t\in\bN$ are arbitrary positive integers, and $y,\tilde{y}\in A_1$ are arbitrary elements.
  \begin{itemize}
  \item We say that $\tilde{y}$ is \textbf{$\Delta_1$-compatible} with $y$ if there exists some $u\in A_1$ of weight $|u|\leq\Delta_1$ such that $\tilde{y}-y=(1+X)u$.
  \item We say that $y$ is \textbf{$(\Delta_2,t)$-small} if it holds for all $1\leq i\leq t$ that $|\sum_{j\in[i]}X^jy|\leq\Delta_2$.
  \item We say that $\tilde{y}$ is \textbf{$(\Delta_1,\Delta_2,t)$-approximately compatible} with $y$ if there exists a decomposition $\tilde{y}=y'+v$ for some $y'\in A_1$ that is $\Delta_1$-compatible with $y$ and some $v\in A_1$ that is $(\Delta_2,t)$-small.
  \end{itemize}
\end{definition}

The following lemma shows the key property of the notions defined above.

\begin{lemma}
  \label{lem:appcombound}
  If $\tilde{y}\in A_1$ is $(\Delta_1,\Delta_2,t)$-approximately compatible with $y\in A_1$, then it holds for every $i\in\bN$ that
  \begin{equation*}
    \left|\sum_{j\in[i]}X^j(\tilde{y}-y)\right| \leq 2\Delta_1+\left\lceil\frac{i}{t}\right\rceil\Delta_2.
  \end{equation*}
\end{lemma}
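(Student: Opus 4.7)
The plan is to unfold the definition of approximate compatibility and handle the two resulting pieces separately. By hypothesis we may write $\tilde{y} - y = (1+X)u + v$ with $|u| \leq \Delta_1$ and with $v$ being $(\Delta_2,t)$-small. Applying $\sum_{j\in[i]}X^j$ to both sides and using linearity gives
\begin{equation*}
  \sum_{j\in[i]}X^j(\tilde{y}-y) = (1+X)\sum_{j\in[i]}X^j u \;+\; \sum_{j\in[i]}X^j v,
\end{equation*}
so it suffices to bound the two terms on the right by $2\Delta_1$ and $\lceil i/t\rceil\Delta_2$ respectively.

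For the first term, I would exploit the telescoping identity over $\mathbb{F}_2$: since $(1+X)\sum_{j\in[i]}X^j = \sum_{j\in[i]}X^j + \sum_{j\in[i]}X^{j+1} = 1 + X^i$, we get $(1+X)\sum_{j\in[i]}X^j u = (1+X^i)u$, whose Hamming weight is at most $2|u|\leq 2\Delta_1$ (noting that multiplication by $X^i$ is a cyclic shift in $R_\ell$ and hence weight-preserving).

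For the second term, the idea is to chop the interval $[i]$ into consecutive blocks of length $t$, with possibly one shorter remainder. Writing $i = qt + r$ with $0 \leq r < t$, we have
\begin{equation*}
  \sum_{j\in[i]}X^j v \;=\; \sum_{k=0}^{q-1} X^{kt}\Bigl(\sum_{j\in[t]}X^j v\Bigr) \;+\; X^{qt}\Bigl(\sum_{j\in[r]}X^j v\Bigr),
\end{equation*}
where the final term is absent (or zero) when $r=0$. Each surviving inner sum has weight at most $\Delta_2$ by the $(\Delta_2,t)$-smallness of $v$ applied with $i=t$ and $i=r$ respectively, and the outer shifts $X^{kt}, X^{qt}$ preserve weight. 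The number of surviving blocks is exactly $\lceil i/t\rceil$ (either $q$ when $r=0$ or $q+1$ when $r\geq 1$), so the total weight of the second term is at most $\lceil i/t\rceil\Delta_2$, and the triangle inequality combines the two bounds.

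No step here looks like a real obstacle: the proof is essentially a bookkeeping argument. The only mild subtlety is being careful with the $\mathbb{F}_2$ telescoping (so that cancellation leaves precisely $1+X^i$) and with the edge case $r=0$ when tallying blocks, both of which are minor.
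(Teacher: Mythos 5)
Your proof is correct and follows essentially the same route as the paper: split off the $(1+X)u$ part, telescope it to $(1+X^i)u$ of weight at most $2\Delta_1$, and chop the sum over $[i]$ applied to the $(\Delta_2,t)$-small part $v$ into $\lceil i/t\rceil$ blocks of at most $t$ consecutive shifts, each of weight at most $\Delta_2$. The only cosmetic difference is that you make the block decomposition explicit via $i=qt+r$, whereas the paper just invokes a partition into $\lceil i/t\rceil$ runs of $\leq t$ consecutive integers; the handling of the edge case $r=0$ and the shift-invariance of Hamming weight are exactly as in the paper's argument.
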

\begin{proof}
  By definition, $(\Delta_1,\Delta_2,t)$-approximate compatibility guarantees that there exist $y',u,v\in A_1$ such that $\tilde{y}=y'+v$ with $y'-y=(1+X)u$, $|u|\leq\Delta_1$, and $v$ is $(\Delta_2,t)$-small. Therefore
  \begin{align*}
    \left|\sum_{j\in[i]}X^j(\tilde{y}-y)\right|
    &\leq \left|\sum_{j\in[i]}X^j(y'-y)\right|+\left|\sum_{j\in[i]}X^jv\right|.
  \end{align*}
  Now the first term on the right hand side above equals $|(\sum_{j\in[i]}X^j)(1+X)u|=|(1+X^i)u|\leq 2|u|\leq 2\Delta_1$. Meanwhile, by partitioning $[i]$ into $\lceil i/t\rceil$ sequences of $\leq t$ consecutive integers, the second term on the right hand side above is bounded above a sum of $\lceil i/t\rceil$ expressions of the form $|\sum_{j=i_0}^{i_1}X^jv|$ for some integers $i_0\leq i_1$ satisfying $|i_0-i_1|<t$. But each such expression satisfies $|\sum_{j=i_0}^{i_1}X^jv|=|\sum_{j\in[i_1-i_0+1]}X^jv|\leq\Delta_2$ as $v$ is $(\Delta_2,t)$-small, so we conclude that
  \begin{align*}
    \left|\sum_{j\in[i]}X^j(\tilde{y}-y)\right|
    &\leq 2\Delta_1+\left\lceil\frac{i}{t}\right\rceil\Delta_2,
  \end{align*}
  as desired.
\end{proof}

We are now ready to present the decoding algorithm $D_0(s)$ for $\cC_*$, where $s=\partial_1^{\cC}c$ is the syndrome of an error $c=(x,y)\in A_0\oplus A_1$ of weight $|c|\leq e=\Theta(\ell)$. At a high level, the decoding algorithm begins with $\tilde{y}_0=0$, which is trivially $(\Delta_1,\Delta_2,1)$-approximately compatible for appropriate $\Delta_1,\Delta_2=\Theta(\ell)$ by the trivial decomposition $\tilde{y}=0=y+y$. We then show how given $\tilde{y}$ that is $(\Delta_1,\Delta_2,t)$-approximately compatible with $y$, we can efficiently construct some $\tilde{y}'$ that, with constant probability, is $(\Delta_1,\Delta_2,2t)$-approximately compatible\footnote{A slight modification of the algorithm would allow $2t$ here to be replaced by $\alpha t$ for any constant $\alpha>1$.} with $y$. Iterating this procedure $\Theta(\log\ell)$ times, we obtain some $\tilde{y}'$ that, with $\geq 1/\poly(\ell)$ probability, is $(\Delta_1,\Delta_2,\Theta(\ell))$-approximately compatible with $y$. We will furthermore show that this $\tilde{y}'$ is in fact $\Delta_1$-compatible with $y$. Finally, we show that we may use such a $\tilde{y}'$ to recover some low-weight $\tilde{c}$ with $\partial_1^{\cC}\tilde{c}=s$. Repeating this procedure $\poly(\ell)$ times to boost the success probability, we obtain a $\poly(\ell)$-time algorithm that recovers some $\tilde{c}\in c+B_1(\cC)$ with probability $\geq 1-2^{-\Omega(\ell)}$, as desired.

\subsection{Amplifying Approximate Compatibility}
\label{sec:ampcom}
To begin, below we present Algorithm~\ref{alg:ampcom}, which amplifies approximate compatiblity. This procedure is the technical core of our decoding algorithm for $\cC$. In Section~\ref{sec:weakdec} below, we will apply this algorithm to prove Theorem~\ref{thm:lpweakdec}.

Algorithm~\ref{alg:ampcom} takes as input the syndrome $s\in A_0$ of some error $c=(x,y)\in A_0\oplus A_1$ for the lifted product code $\cC_*$ defined in Theorem~\ref{thm:lpweakdec}, along with some $\tilde{y}\in A_1$ and some $t\in\bN$ such that $\tilde{y}$ is guaranteed to be $(\Delta_1,\Delta_2,t/2)$-approximately compatible with $y$. The algorithm outputs some $\tilde{y}'\in A_1$, which we will show is $(\Delta_1,\Delta_2,t)$-approximately compatible with $y$ with good probability as long as $|c|$ is sufficiently small. For ease of presentation, we assume that $t|\ell$, so for instance $t$ and $\ell$ may both be powers of $2$.

We use the following notation in Algorithm~\ref{alg:ampcom}. We let $n_1=\dim_{R_\ell} A_1$ and we let $\{\alpha_h\}_{h\in[n_1]}$ be the fixed $R_\ell$-basis for $A_1$ (which by assumption is a free based $R_\ell$-module), so that $\{\alpha_hX^{-i}\}_{h\in[n_1],i\in[\ell]}$ gives the fixed $\bF_2$-basis for $A_1$ with respect to which we compute Hamming weights, where $X^{-i}=(X^i)^*=X^{\ell-i}$. For an element $a\in A_1$, we let $a=\sum_{h\in[n_1],i\in[\ell]}a_{h,i}\alpha_hX^{-i}$ denote the decomposition of $a$ into this basis. For $i\notin[\ell]$, we let $a_{h,i}=a_{h,(i\pmod{\ell})}$.

Note that we use the convention of labeling the $(h,i)$th basis element $\alpha_hX^{-i}$ so that for $k\in\bZ$, we have $X^ka=\sum_{h\in[n_1],i\in[\ell]}a_{h,i+k}\alpha_hX^{-i}$ and thus $(X^ka)_{h,i}=a_{h,i+k}$, a property that will simplify expressions later on.

Also, for a sequence of bits $b_1,\dots,b_t\in\{0,1\}$, we let $\Maj_{i\in[t]}\{b_i\}\in\{0,1\}$ denote the majority bit with ties broken arbitrarily, so that $\Maj_{i\in[t]}\{b_i\}=\bar{b}$ if $|\{i\in[t]:b_i=\bar{b}\}|>t/2$.

\begin{algorithm}[h]
  \caption{Algorithm to amplify approximate compatibility.}
  \label{alg:ampcom}
  \SetKwInOut{Input}{Input}
  \SetKwInOut{Output}{Output}

  \SetKwFunction{FnAmpCom}{AmpCom}
  \SetKwProg{Fn}{Function}{:}{}

  \Input{Syndrome $s\in A_0$ of some error $c=(x,y)\in A_0\oplus A_1$, and $\tilde{y}\in A_1$, $t\in\bN$ such that $\tilde{y}$ is $(\Delta_1,\Delta_2,t/2)$-approximately compatible with $y$}
  \Output{$\tilde{y}'\in A_1$ that will be shown to be $(\Delta_1,\Delta_2,t)$-approximately compatible with $y$ with good (e.g.~constant) probability as long as $|c|$ is sufficiently small}

  \Fn{\FnAmpCom{$s,\tilde{y},t$}}{
    For each $1\leq k\leq t$, compute $\tilde{a}_k:=D^{\cA}(\sum_{i\in[k]}X^i(s-\partial^{\cA}\tilde{y}))$ \\ \label{li:ak}
    Choose a uniformly random $j\in[t]$ \\ \label{li:choosej}
    Define $\tilde{z}\in A_1$ so that for every $h\in[n_1]$, $m\in[\ell/t]$, and $i\in[t]$,
    \begin{equation*}
      \tilde{z}_{h,j+mt+i} := \begin{cases}
        0,&i=0 \\
        (\tilde{a}_{i+1})_{h,j+mt}-(\tilde{a}_{i})_{h,j+mt},&1\leq i\leq t-1
      \end{cases}
    \end{equation*} \\ \label{li:z0}
    Let $\tilde{b}_t:=\tilde{a}_t-\sum_{i\in[t]}X^i\tilde{z}$ \\ \label{li:bt}
    Define $\tilde{r}\in A_1$ so that for every $h\in[n_1]$, $m\in[\ell/t]$, and $i\in[t]$,
    \begin{equation*}
      \tilde{r}_{h,j+mt+i} := \begin{cases}
        \Maj_{k\in[t]}\{(\tilde{b}_t)_{h,j+mt-k}\},&i=0 \\
        0,&1\leq i\leq t-1
      \end{cases}
    \end{equation*} \\ \label{li:tr}
    \KwRet{$\tilde{y}+\tilde{z}+\tilde{r}$}
  }
  % \Fn{\FnAmpCom{$s,\tilde{y},t$}}{
  %   For each $1\leq k\leq t$, compute $\tilde{a}_k:=D^{\cA}(\sum_{i\in[k]}X^i(s-\partial^{\cA}\tilde{y}))$ \\
  %   Choose a uniformly random $j\in[t]$ \\
  %   For every $h\in[n_1]$, $m\in[\ell/t]$, and $i\in[t]$, define $\tilde{y}^0\in A_1$ by
  %   \begin{equation*}
  %     \tilde{y}^0_{h,j+mt+i} := \begin{cases}
  %       \tilde{y}_{h,j+mt+i},&i=0 \\
  %       \tilde{y}_{h,j+mt+i}+(\tilde{a}_{i+1})_{h,j+mt}-(\tilde{a}_{i})_{h,j+mt},&1\leq i\leq t-1
  %       \end{cases}
  %   \end{equation*} \\
  %   Let $\tilde{b}_t:=\tilde{a}_t-\sum_{i\in[t]}X^i(\tilde{y}^0-\tilde{y})$ \\
  %   For every $h\in[n_1]$, $m\in[\ell/t]$, and $i\in[t]$, define $\tilde{y}'\in A_1$ by
  %   \begin{equation*}
  %     \tilde{y}'_{h,j+mt+i} := \begin{cases}
  %       \tilde{y}^0_{h,j+mt}+\Maj_{k\in[t]}\{(\tilde{b}_t)_{h,j+mt-k}\},&i=0 \\
  %       \tilde{y}^0_{h,j+mt+i},&1\leq i\leq t-1
  %     \end{cases}
  %   \end{equation*} \\
  %   \KwRet{$\tilde{y}'$}
  % }
\end{algorithm}

The proposition below shows that Algorithm~\ref{alg:ampcom} indeed amplifies approximate compatibility with good probability.

\begin{proposition}
  \label{prop:ampcom}
  Define all variables as in Theorem~\ref{thm:lpweakdec}. Let $\Delta_1=2\gamma e/\epsilon$, $\Delta_2=12\gamma e/\epsilon$, and let $t\in\bN$ be an integer such that $2|t$ and $t|\ell$. Fix some error $c=(x,y)\in C_1=A_0\oplus A_1$ of weight $|c|\leq e$. Let $s=\partial_1^{\cC}c$ be the syndrome of $c$, and let $\tilde{y}\in A_1$ be some element that is $(\Delta_1,\Delta_2,t/2)$-approximately compatible with $y$. Then with probability $\geq 1-\epsilon$, Algorithm~\ref{alg:ampcom} outputs some $\tilde{y}'\gets\FnAmpCom{$s,\tilde{y},t$}$ that is $(\Delta_1,\Delta_2,t)$-approximately compatible with $y$. Furthermore, if $t>12\gamma e/\epsilon$, then $\tilde{y}'$ is $\Delta_1$-compatible with $y$.
\end{proposition}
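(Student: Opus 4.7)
The plan is to choose a decomposition of $\tilde{y}$ witnessing the $(\Delta_1, \Delta_2, t/2)$-approximate compatibility, track how it evolves through one run of Algorithm~\ref{alg:ampcom}, and show that the random shift $j$ yields the claimed success probability via Markov's inequality. Fix $\tilde{y} = y' + v$ with $y' - y = (1+X)u$, $|u| \leq \Delta_1$, and $v$ being $(\Delta_2, t/2)$-small (which in particular forces $|v| \leq \Delta_2$ by taking $i = 1$ in the definition). Using the telescoping identity $\sum_{i \in [k]} X^i(1+X) = 1 + X^k$, I rewrite the input to the noisy-syndrome decoder in line~\ref{li:ak} as
\[
\sum_{i \in [k]} X^i \bigl(s - \partial^{\cA} \tilde{y}\bigr) \;=\; (1+X^k)\,\xi \;+\; \partial^{\cA} \sum_{i \in [k]} X^i v,
\]
with $\xi := x + \partial^{\cA} u$. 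The preconditions $|(1+X^k)\xi| \leq 2(|x|+w|u|) \leq e_0$ (using $e \leq e_0/2$) and $|\sum_{i \in [k]} X^i v| \leq 2\Delta_2 \leq e_1$ (using $(\Delta_2, t/2)$-smallness together with $e \leq \epsilon e_1/(48\gamma)$) both hold, so the decoder returns $\tilde{a}_k = \sum_{i \in [k]} X^i v + E_k$ with $|E_k| \leq 2\gamma|\xi|$ for every $1 \leq k \leq t$.

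Next I would analyze the \emph{ideal} run obtained by setting $E_k \equiv 0$. A direct computation shows that $\tilde{z}_{\mathrm{ideal}}$ equals $v$ off the arithmetic progression $P_j := \{j+mt : m \in [\ell/t]\}$ and vanishes on $P_j$; that $\tilde{b}_t^{\mathrm{ideal}} = \sum_{i \in [t]} X^i v^{(j)}$ where $v^{(j)}$ is the restriction of $v$ to $P_j$; and that every $(\tilde{b}_t^{\mathrm{ideal}})_{h, j+mt-k}$ for $k \in [t]$ equals exactly $v_{h, j+mt}$, so the majority defining $\tilde{r}_{\mathrm{ideal}}$ is exact. Hence ideally $\tilde{z}_{\mathrm{ideal}} + \tilde{r}_{\mathrm{ideal}} = v$ and $\tilde{y}' - y = (1+X)u$. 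For the true noisy run I set $u' := u$ (so $|u'| \leq \Delta_1$ automatically) and $v' := v + \tilde{z} + \tilde{r}$, which decomposes as $v' = \phi + \rho$ where $\phi := \tilde{z} - \tilde{z}_{\mathrm{ideal}}$ is supported off $P_j$ with entries $\phi_{h, j+mt+i'} = (E_{i'+1}+E_{i'})_{h, j+mt}$, and $\rho := \tilde{r} - \tilde{r}_{\mathrm{ideal}}$ is supported on $P_j$ and is nonzero at $(h, j+mt)$ precisely when a $t/2$-majority of $(E_t + \sum_{i \in [t]} X^i \phi)_{h, j+mt-k}$ over $k \in [t]$ equals $1$.

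The main obstacle, and technical heart of the proof, is showing $v'$ is $(\Delta_2, t)$-small with probability at least $1 - \epsilon$. Define $S_j := \sum_{k=1}^{t} \#\{(h, m) : (E_k)_{h, j+mt} = 1\}$, the total weight of decoder errors landing on $P_j$. Since $\sum_{k=1}^{t} |E_k| \leq 2t\gamma|\xi|$ and $j$ is uniform on $[t]$, one has $\mathbb{E}_j[S_j] \leq 2\gamma|\xi|$, so Markov's inequality yields $S_j \leq 2\gamma|\xi|/\epsilon$ with probability $\geq 1-\epsilon$. Conditioned on this event, the burst structure $\phi_{h, j+mt+i'} = (E_{i'+1}+E_{i'})_{h, j+mt}$ gives $|\phi| \leq 2 S_j$, and the $t/2$-majority threshold yields $|\rho| \leq (2/t)(|E_t| + |\sum_{i \in [t]} X^i \phi|)$. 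Combining $|\xi| \leq e + w\Delta_1$ with the choices $\Delta_1 = 2\gamma e/\epsilon$, $\Delta_2 = 12\gamma e/\epsilon$, a careful accounting of the partial sums $\sum_{k \in [i]} X^k v'$ for $i \leq t$---exploiting that shifts of $\rho$ have disjoint supports (since $\rho$ sits in a single residue class modulo $t$) and the telescoping cancellations in consecutive bursts of $\phi$---gives $|\sum_{k \in [i]} X^k v'| \leq \Delta_2$ for all $1 \leq i \leq t$, establishing $(\Delta_2, t)$-approximate compatibility.

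Finally, the \emph{furthermore} clause for $t > \Delta_2 = 12\gamma e/\epsilon$ follows by pushing this argument to $i = t$: the shifts $X^0 \rho, \ldots, X^{t-1} \rho$ occupy the $t$ distinct residue classes mod $t$, so $|\sum_{k \in [t]} X^k \rho| = t|\rho|$ exactly; using the bound on $|\phi|$ and the block structure of $\phi$ to control the possible $\phi$--$\rho$ overlaps, the $(\Delta_2, t)$-smallness inequality at $i = t$ becomes $t|\rho| + (\text{controlled contribution from }\phi) \leq \Delta_2 < t$, forcing $|\rho| = 0$ and then $|\phi| = 0$. Hence $v' = 0$, $\tilde{y}' - y = (1+X)u'$, and $\tilde{y}'$ is $\Delta_1$-compatible with $y$, as claimed.
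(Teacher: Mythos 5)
Your high-level skeleton (ideal vs.\ noisy run, telescoping of the burst structure, Markov over the random shift $j$, majority analysis) matches the paper's, and your computation of the ideal run is correct. The gap is in how you split the decoder's input. You invoke the noisy-syndrome decoder with code error $\sum_{i\in[k]}X^i v$ and syndrome error $(1+X^k)\xi$, $\xi=x+\partial^{\cA}u$. First, the precondition check is wrong: $|(1+X^k)\xi|\leq 2(|x|+w|u|)\leq 2e+2w\Delta_1$, and $e\leq e_0/2$ only gives $2|x|\leq e_0$; since $\Delta_1=2\gamma e/\epsilon$, the extra $2w\Delta_1$ term is not covered by the hypotheses. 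Second, and more fundamentally, the noisy-syndrome guarantee bounds the decoder's output error by $\gamma$ times the \emph{syndrome-error} weight, so with your split every downstream quantity ($|E_k|$, $S_j$, $|\phi|$, $t|\rho|$) is bounded in terms of $|\xi|\leq e(1+2w\gamma/\epsilon)$ rather than $|x|\leq e$. Because you keep $u'=u$ and push $\phi+\rho$ entirely into the small part, the best bound your accounting can give on $|\sum_{k\in[i]}X^k v'|$ is of order $\gamma|\xi|/\epsilon=\Theta(\gamma e/\epsilon + w\gamma^2 e/\epsilon^2)$, which exceeds $\Delta_2=12\gamma e/\epsilon$ by a factor of order $w\gamma/\epsilon$; since $\Delta_1,\Delta_2$ are fixed in the statement (they must be, for the iteration in Algorithm~\ref{alg:weakdec} to be repeatable), the induction does not close. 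The ``furthermore'' clause has the same problem: when $t>12\gamma e/\epsilon$ your argument can at best force $\rho=0$, but $\phi$ (the decoder-error differences) has no reason to vanish, so you do not get $\tilde{y}'-y=(1+X)u'$ with $|u'|\leq\Delta_1$.

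The paper's proof avoids this by making the opposite split and by \emph{rebuilding} the compatibility witness each round. It treats all of $\sum_{i\in[k]}X^i(y-\tilde{y})$ (including the old $(1+X)u$) as the code error, bounded by $2\Delta_1+2\Delta_2\leq e_1$ via Lemma~\ref{lem:appcombound}, and only $(1+X^k)x$ as the syndrome error, so the decoder error is at most $2\gamma|x|\leq 2\gamma e$, independent of the old witness. It then defines a \emph{new} witness $u$ directly from the decoder errors $\tilde{a}_i-a_i$ at the sampled positions (Claim~\ref{claim:lptzbound}), obtaining $\tilde{z}=y-\tilde{y}+(1+X)u+r$ with $|u|<2\gamma|x|/\epsilon\leq\Delta_1$, so that $\tilde{y}'=y+(1+X)u+(\tilde{r}-r)$; the old witness is discarded rather than carried along, which is exactly what prevents the accumulation that breaks your version. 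Your telescoping observation is in fact the same mechanism, but to salvage the argument you must absorb the telescoped decoder errors into a fresh $(1+X)(\cdot)$ witness measured against the full difference $y-\tilde{y}$ (i.e., the paper's route), not leave them in the small part with the old $u$ retained.
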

\begin{proof}
  For $1\leq k\leq t$, define
  \begin{equation}
    \label{eq:akdef}
    a_k = \sum_{i\in[k]}X^i(y-\tilde{y}).
  \end{equation}
  Thus the $a_k$ give true prefix sums of $y-\tilde{y}$, while the $\tilde{a}_k$ defined in line~\ref{li:ak} of Algorithm~\ref{alg:ampcom} give approximations to these prefix sums. We begin with the following claim that applies the definition of noisy-syndrome decoding to bound the error in this approximation.
  
  \begin{claim}
    \label{claim:lpakbound}
    It holds for every $1\leq k\leq t$ that $|\tilde{a}_k-a_k|\leq 2\gamma|x|$.
  \end{claim}
  \begin{proof}
    By definition, $\tilde{a}_k:=D^{\cA}(\sum_{i\in[k]}X^i(s-\partial^{\cA}\tilde{y}))$ where
    \begin{align*}
      \sum_{i\in[k]}X^i(s-\partial^{\cA}\tilde{y})
      &= \sum_{i\in[k]}X^i(1+X)x+\sum_{i\in[k]}X^i\partial^{\cA}(y-\tilde{y}) \\
      &= (1+X^k)x+\partial^{\cA}\sum_{i\in[k]}X^i(y-\tilde{y}).
    \end{align*}
    By definition $e\leq e_0/2$ (see Theorem~\ref{thm:lpweakdec}), so $|(1+X^k)x|\leq 2|x|\leq 2e\leq e_0$. Meanwhile, by Lemma~\ref{lem:appcombound} and because $\Delta_1,\Delta_2\leq e_1/4$ as $\Delta_1,\Delta_2\leq 12\gamma e/\epsilon$ (see Proposition~\ref{prop:ampcom}) and $e\leq\epsilon e_1/48\gamma$ (see Theorem~\ref{thm:lpweakdec}), we have $|\sum_{i\in[k]}X^i(y-\tilde{y})|\leq 2\Delta_1+2\Delta_2\leq e_1$. Thus as $D^{\cA}$ is an $(e_0,e_1,\gamma)$-noisy-syndrome decoder for $\cA$, it follows that
    \begin{align*}
      |\tilde{a}_k-a_k|
      &= \biggl|D^{\cA}\Bigl((1+X^k)x+\partial^{\cA}\sum_{i\in[k]}X^i(y-\tilde{y}))\Bigr)-\sum_{i\in[k]}X^i(y-\tilde{y})\biggr| \\
      &\leq \gamma|(1+X^k)x| \\
      &\leq 2\gamma|x|. \quad \qedhere
    \end{align*}
  \end{proof}

  As Claim~\ref{claim:lpakbound} shows that $\tilde{a}_k$ is a good approximation to the true prefix sums $a_k$ of $y-\tilde{y}$, we should expect to be able to approximately recover $y-\tilde{y}$ by looking at differences $\tilde{a}_{k+1}-\tilde{a}_k$ of consecutive such elements, as indeed by definition $a_{k+1}-a_k=X^k(y-\tilde{y})$. The following claim pertaining formalizes this intuition to show that $\tilde{z}$ defined in line~\ref{li:z0} of Algorithm~\ref{alg:ampcom} indeed provides an approximation to $y-\tilde{y}$, at least at all components $(h,j+mt+i)$ for all $i\not\equiv 0\pmod{t}$. Below, recall that we have fixed some $0<\epsilon\leq 1$.
  
  \begin{claim}
    \label{claim:lptzbound}
    It holds with probability $\geq 1-\epsilon$ over the choice of $j$ in line~\ref{li:choosej} in Algorithm~\ref{alg:ampcom} that there exists some $u\in A_1$ of weight $|u|<2\gamma|x|/\epsilon$ and some $r\in A_1$ such that $r_{h,j+mt+i}=0$ for all $h\in[n_1]$, $m\in[\ell/t]$, $1\leq i\leq t-1$, and such that
    \begin{equation*}
      \tilde{z}=y-\tilde{y}+(1+X)u+r.
    \end{equation*}
  \end{claim}
  \begin{proof}
    Define $u\in A_1$ by letting for all $h\in[n_1]$, $m\in[\ell/t]$, $1\leq i\leq t$,
    \begin{equation}
      \label{eq:defu}
      u_{h,j+mt+i} := (\tilde{a}_{i})_{h,j+mt}-(a_{i})_{h,j+mt}.
    \end{equation}
    Recal that $j\in[t]$ is sampled uniformly at random, so that
    \begin{align*}
      \bE_j[|u|]
      &= \frac{1}{t}\sum_{j\in[t],h\in[n_1],m\in[\ell/t],1\leq i\leq t}|(\tilde{a}_{i})_{h,j+mt}-(a_{i})_{h,j+mt}| \\
      &= \frac{1}{t}\sum_{1\leq i\leq t}|\tilde{a}_{i}-a_{i}| \\
      &\leq 2\gamma|x|,
    \end{align*}
    where the inequality above holds by Claim~\ref{claim:lpakbound}. Thus by Markov's inequality, we obtain the desired bound on $u$, namely that
    \begin{equation*}
      \Pr[|u|\geq 2\gamma|x|/\epsilon] \leq \epsilon.
    \end{equation*}

    Now condition on choosing $j$ for which $|u|\leq 2\gamma|x|\epsilon$, and define $r\in A_1$ by
    \begin{equation*}
      r := \tilde{z}-(y-\tilde{y})-(1+X)u.
    \end{equation*}
    It remains to be shown that $r_{h,j+mt+i}=0$ for all $h\in[n_1]$, $m\in[\ell/t]$, $1\leq i\leq t-1$. But indeed for every such $h,m,i$, by definition
    \begin{align*}
      \tilde{z}_{h,j+mt+i} &= (\tilde{a}_{i+1})_{h,j+mt}-(\tilde{a}_i)_{h,j+mt} \\
      (y-\tilde{y})_{h,j+mt+i} &= (a_{i+1})_{h,j+mt}-(a_i)_{h,j+mt} \\
      ((1+X)u)_{h,j+mt+i} &= u_{h,j+mt+i+1}-u_{h,j+mt+i} \\
                             &\hspace{1em}= ((\tilde{a}_{i+1})_{h,j+mt}-(a_{i+1})_{h,j+mt})-((\tilde{a}_{i})_{h,j+mt}-(a_{i})_{h,j+mt}),
    \end{align*}
    where the first equality above follows by the definition of $\tilde{z}$ (line~\ref{li:z0} of Algorithm~\ref{alg:ampcom}), the second equality holds by the definition of $a_k$ in~(\ref{eq:akdef}), and the final equality holds by the definition of $u$ in~(\ref{eq:defu}). Combining the above equations immediately gives that for all $h\in[n_1]$, $m\in[\ell/t]$, $1\leq i\leq t-1$, it holds that $r_{h,j+mt+i}=0$, as desired.
  \end{proof}

  Claim~\ref{claim:lptzbound} shows that $\tilde{z}-r$ provides a good approximation to $y-\tilde{y}$ for some $r$ supported in components at indices $(h,mt)$ for $h\in[n_1]$ and $m\in[\ell/t]$. The following claim shows that $\tilde{r}\in A_1$ defined in line~\ref{li:tr} of Algorithm~\ref{alg:ampcom} provides a good approximation to $r$.

  Below, for $0<\epsilon\leq 1$, we let $E_\epsilon$ denote the event occuring with probability $\geq 1-\epsilon$ referenced in the statement of Claim~\ref{claim:lptzbound}. Formally, $E_\epsilon$ is the event that $j\in[t]$ is selected such that $u\in A_1$ defined by~(\ref{eq:defu}) has weight $|u|<2\gamma|x|/\epsilon$.

  \begin{claim}
    \label{claim:lptrbound}
    If event $E_\epsilon$ occurs, then $|\tilde{r}-r|\leq 12\gamma|x|/\epsilon t$.
  \end{claim}
  \begin{proof}
    Let $f=\tilde{a}_t-a_t$, so that by Claim~\ref{claim:lpakbound}, $|f|\leq 2\gamma|x|$. Recall from line~\ref{li:bt} of Algorithm~\ref{alg:ampcom} that $\tilde{b}_t:=\tilde{a}_t-\sum_{i\in[t]}X^i\tilde{z}$. As $\tilde{a}_t=a_t+f=\sum_{i\in[t]}X^i(y-\tilde{y})+f$, then applying the expression for $\tilde{z}$ in Claim~\ref{claim:lptzbound}, we obtain
    \begin{align*}
      \tilde{b}_t
      &= \sum_{i\in[t]}X^i(y-\tilde{y})+f - \sum_{i\in[t]}X^i(y-\tilde{y}+(1+X)u+r) \\
      &= \left(\sum_{i\in[t]}X^ir\right) + \left(f+(1+X^t)u\right).
    \end{align*}
    Therefore by definition, for every $h\in[n_1]$, $m\in[\ell/t]$, and $k\in[t]$, we have
    \begin{align*}
      (\tilde{b}_t)_{h,j+mt-k}
      &= \sum_{i\in[t]}r_{h,j+mt-k+i} + (f+(1+X^t)u)_{h,j+mt-k} \\
      &= r_{h,j+mt}+(f+(1+X^t)u)_{h,j+mt-k},
    \end{align*}
    where the second equality above holds because $r_{h,j+mt-k+i}=0$ for all $k\neq i\in[t]$ by the definition of $r$ in Claim~\ref{claim:lptzbound}. Thus for a given $h\in[n_1]$ and $m\in[\ell/t]$,
    \begin{equation*}
      \tilde{r}_{h,j+mt} := \Maj_{k\in[t]}\{(\tilde{b}_t)_{h,j+mt-k}\}
    \end{equation*}
    can only fail to equal $r_{h,j+mt}$ if $(f+(1+X^t)u)_{h,j+mt-k}\neq 0$ for at least $t/2$ values of $k\in[t]$. It follows that there are a total of at most $|f+(1+X^t)u|/(t/2)$ distinct choices of $(h,m)\in[n_1]\times[\ell/t]$ for which $\tilde{r}_{h,j+mt}\neq r_{h,j+mt}$. As $\tilde{r}$ and $r$ by definition both vanish at all components $(h,j+mt+i)$ for $i\not\equiv 0\pmod{t}$, it follows that
    \begin{equation*}
      |\tilde{r}-r| \leq \frac{|f+(1+X^t)u|}{t/2}.
    \end{equation*}
    Now conditioned on event $E$ occuring, by definition
    \begin{equation*}
      |f+(1+X^t)u| \leq |f|+2|u| \leq 6\gamma|x|/\epsilon.
    \end{equation*}
    Combining the above two inequalities immediately yields the desired claim.
  \end{proof}

  Combining Claim~\ref{claim:lptzbound} with Claim~\ref{claim:lptrbound}, we conclude that the output $\tilde{y}':=\tilde{y}+\tilde{z}+\tilde{r}$ of Algorithm~\ref{alg:ampcom} is of the form
  \begin{align*}
    \tilde{y}'
    &= \tilde{y} + (y-\tilde{y}+(1+X)u+r) + \tilde{r} \\
    &= y+(1+X)u+(\tilde{r}-r).
  \end{align*}
  Furthermore, conditioned on event $E_\epsilon$, which occurs with probability $\Pr[E_\epsilon]\geq 1-\epsilon$, then $|u|<2\gamma|x|/\epsilon\leq 2\gamma e/\epsilon\leq\Delta_1$ (see Proposition~\ref{prop:ampcom} for the definition of $\Delta_1$) and $|\tilde{r}-r|\leq 12\gamma|x|/\epsilon t\leq 12\gamma e/\epsilon t$. Thus if $t>12\gamma e/\epsilon$, then $|\tilde{r}-r|<1$, which implies that $\tilde{r}=r$, and $\tilde{y}'=y+(1+X)u$ for $|u|\leq\Delta_1$, meaning that $\tilde{y}'$ is $\Delta_1$-compatible with $y$. Meanwhile, even without the assumption $t>12\gamma e/\epsilon$, it also follows that for all $1\leq i\leq t$, we have $|\sum_{k\in[i]}X^k(\tilde{r}-r)|\leq t\cdot|\tilde{r}-r|\leq 12\gamma e/\epsilon\leq\Delta_2$, meaning that $\tilde{r}-r$ is $(\Delta_2,t)$-small (see Proposition~\ref{prop:ampcom} for the definition of $\Delta_2$). Thus $\tilde{y}'$ is $(\Delta_1,\Delta_2,t)$-approximately compatible with $y$. These conclusions complete the proof of Proposition~\ref{prop:ampcom}.
\end{proof}

The lemma below analyzes the running time of Algorithm~\ref{alg:ampcom}.

\begin{lemma}
  \label{lem:ampcomtime}
  Define all variables as in Theorem~\ref{thm:lpweakdec}. Running \FnAmpCom{$s,\tilde{y},t$} in Algorithm~\ref{alg:ampcom} takes time $O(t(n^{\cA}\ell+T))$.
\end{lemma}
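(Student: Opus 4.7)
The plan is to bound the cost of each of the six executed steps in \FnAmpCom{$s,\tilde{y},t$} (the five labeled lines of Algorithm~\ref{alg:ampcom} plus the final return), and then sum. Throughout, I would use the fact that every element of $A_0$ or $A_1$ is a vector in an $\bF_2$-space of dimension $n^{\cA}\ell$, that $\partial^{\cA}$ has locality $w=O(1)$ so any application of $\partial^{\cA}$ costs $O(n^{\cA}\ell)$, and that multiplication by a monomial $X^i$ is just a cyclic relabeling of coordinates and hence can be done in $O(n^{\cA}\ell)$ time (or effectively for free if implemented by pointer arithmetic).

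The main cost to track is line~\ref{li:ak}. The key observation is that one does not compute each of the $t$ prefix sums $\sum_{i\in[k]}X^i(s-\partial^{\cA}\tilde{y})$ from scratch. Instead, I would first compute $w:=s-\partial^{\cA}\tilde{y}$ once in $O(n^{\cA}\ell)$ time, then maintain a running sum $S_k:=\sum_{i\in[k]}X^iw$, updating $S_{k+1}=S_k+X^kw$ in $O(n^{\cA}\ell)$ time per step. Thus all $t$ inputs to $D^{\cA}$ are produced in total time $O(t\cdot n^{\cA}\ell)$, and the $t$ invocations of the noisy-syndrome decoder $D^{\cA}$ contribute a further $O(tT)$. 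This gives a total of $O(t(n^{\cA}\ell+T))$ for line~\ref{li:ak}, which will turn out to dominate and match the claimed bound.

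For the remaining lines the accounting is routine. Line~\ref{li:choosej} is $O(\log t)$. Line~\ref{li:z0} fills in the $n_1\ell$ coordinates of $\tilde z$, each as an $O(1)$-time lookup/subtraction from the precomputed $\tilde a_k$'s, for $O(n^{\cA}\ell)$. Line~\ref{li:bt} again uses an iterative running sum to form $\sum_{i\in[t]}X^i\tilde z$ in $O(t\cdot n^{\cA}\ell)$ time and then subtracts from $\tilde a_t$. Line~\ref{li:tr} computes, for each of the $n_1\cdot(\ell/t)$ relevant indices $(h,j+mt)$, a majority over $t$ bits of $\tilde b_t$, for total work $O((n_1\ell/t)\cdot t)=O(n^{\cA}\ell)$; all other coordinates of $\tilde r$ are set to $0$. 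Finally, returning $\tilde y+\tilde z+\tilde r$ costs $O(n^{\cA}\ell)$.

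Summing the contributions gives $O(t(n^{\cA}\ell+T))+O(n^{\cA}\ell)=O(t(n^{\cA}\ell+T))$, which is the claim. I do not expect any real obstacle here; the only thing to be careful about is avoiding a naive implementation of line~\ref{li:ak} (or line~\ref{li:bt}) that would recompute each prefix sum from scratch and incur an extra factor of $t$, and to confirm that the $\bF_2$-dimension of each $A_i$ is indeed $n_i\ell$ so that the $O(n^{\cA}\ell)$ bound on elementary operations is correct.
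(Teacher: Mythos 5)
Your proposal is correct and follows essentially the same route as the paper's proof: bound line~\ref{li:ak} by computing the $t$ prefix sums incrementally in $O(t\cdot n^{\cA}\ell)$ time plus $O(tT)$ for the decoder calls, and observe that every remaining line touches only a constant number of vectors of length $O(n^{\cA}\ell)$, accessing each component $O(t)$ times at most. Your per-line accounting is in fact slightly sharper than the paper's (which is content to bound each subsequent line by $O(t\cdot n^{\cA}\ell)$), but this makes no difference to the final bound.
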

\begin{proof}
  By definition line~\ref{li:ak} takes time $O(t\cdot n^{\cA}\ell)$ to compute all $\sum_{i\in[k]}X^i(s-\partial^{\cA}\tilde{y})$ for $1\leq k\leq t$, and then time $O(tT)$ to apply $D^{\cA}$ to these $t$ vectors, for a total time of $O(t(n^{\cA}\ell+T))$. Each subsequent line in the algorithm by definition takes $O(t\cdot n^{\cA}\ell)$ time, as each subsequent line simply performs some computation on a constant number of vectors in $A_1$ (specifically $\tilde{z},\tilde{a},\tilde{b},\tilde{r}$), and each such computation involves accessing each of the $O(n^{\cA}\ell)$ components of these vectors at most $O(t)$ times each. Thus the total running time of Algorithm~\ref{alg:ampcom} is $O(t(n^{\cA}\ell+t))$, as desired.
\end{proof}

\subsection{Weak Decoding via Repeated Approximate-Compatibility Amplification}
\label{sec:weakdec}
In this section, we apply the results of Section~\ref{sec:ampcom} to prove Theorem~\ref{thm:lpweakdec}. Specifically, Algorithm~\ref{alg:weakdec} provides the desired weak decoding algorithm described in Theorem~\ref{thm:lpweakdec}. Recall here that we assume $\ell=2^\llog$ for some integer $\llog$.

\begin{algorithm}[h]
  \caption{Weak decoding algorithm to decode $\cC_*$ described in Theorem~\ref{thm:lpweakdec}. This algorithm is based on repeated applications of Algorithm~\ref{alg:ampcom}.}
  \label{alg:weakdec}
  \SetKwInOut{Input}{Input}
  \SetKwInOut{Output}{Output}

  \SetKwFunction{FnWeakDec}{WeakDec}
  \SetKwProg{Fn}{Function}{:}{}

  \Input{Syndrome $s\in A_0$ of some error $c=(x,y)\in A_0\oplus A_1$}
  \Output{$\tilde{c}=(\tilde{x},\tilde{y})\in A_0\oplus A_1$ that will be a relatively low-weight element of $c+B_1(\cC)$ with probability $\geq 1/\poly(\ell)$, as long as $|c|$ is sufficiently small}

  \Fn{\FnWeakDec{$s$}}{
    Let $\tilde{y}_0:=0$ and $\llog:=\lg\ell$ \\ \label{li:wdinit}
    \For{$\tau\in[\llog]$}{ \label{li:fort}
      Let $\tilde{y}_{\tau+1} \gets$ \FnAmpCom{$s,\tilde{y}_\tau,2^{\tau+1}$} \\ \label{li:tyt}
    }
    Compute the element $\tilde{x}\in A_0$ of minimum weight such that $\partial_1^{\cC}(\tilde{x},\tilde{y}_\llog)=s$. If no such $\tilde{x}$ exists, return FAIL \\ \label{li:wdtx}
    \KwRet{$\tilde{c}:=(\tilde{x},\tilde{y}_\llog)$} \\ \label{li:wdret}
  }
\end{algorithm}

\begin{proof}[Proof of Theorem~\ref{thm:lpweakdec}]
  We will show that Algorithm~\ref{alg:weakdec} satisfies the conditions of the theorem. Fix an error $c=(x,y)\in A_0\oplus A_1$ of weight $|c|\leq e$. Define $\Delta_1$, $\Delta_2$ as in Proposition~\ref{prop:ampcom}. By Definition $\tilde{y}_0=0$ is $(\Delta_1,\Delta_2,1)$-approximately compatible with $y$, as $0=y+y$ where $y$ is trivially $0$-compatible with $y$, and $|y|\leq e\leq\Delta_2$ by the definition of $\Delta_2$.

  Now by Proposition~\ref{prop:ampcom}, for each $\tau\in[\llog]$, conditioned on $\tilde{y}_\tau$ being $(\Delta_1,\Delta_2,2^{\tau})$-approximately compatible with $y$, then $\tilde{y}_{\tau+1}$ defined in line~\ref{li:tyt} of Algorithm~\ref{alg:weakdec} must be $(\Delta_1,\Delta_2,2^{\tau+1})$-approximately compatible with $y$ with probability $\geq 1-\epsilon$. Hence $\tilde{y}_{\llog-1}$ must be $(\Delta_1,\Delta_2,2^{\llog-1})$-approxmately compatible with $y$ with probability $\geq(1-\epsilon)^{\llog-1}$. Therefore because $\ell>12\gamma e/\epsilon$ by the definition of $e$, Proposition~\ref{prop:ampcom} implies that $\tilde{y}_\llog$ will be $\Delta_1$-compatible with $y$ with probability $\geq(1-\epsilon)^\llog$.

  Now we must show that conditioned on this event that $\tilde{y}_\llog$ is $\Delta_1$-compatible with $y$, then Algorithm~\ref{alg:weakdec} returns some $\tilde{c}=(\tilde{x},\tilde{y}_\llog)$ of weight $|\tilde{c}|\leq(2(w+2)\gamma/\epsilon+1)e$ such that $\partial_1^{\cC}(\tilde{x},\tilde{y}_\llog)=s$. For this purpose, by the definition of $\Delta_1$-compatibility, we have $\tilde{y}_\llog=y+(1+X)u$ for some $u\in A_1$ of weight $|u|\leq\Delta_1$. Thus defining $\tilde{x}'=x+\partial^{\cA}u$, then
  \begin{align*}
    \partial_c^{\cC}(\tilde{x}',\tilde{y}_\llog)
    &= (1+X)(x+\partial^{\cA}u)+\partial^{\cA}(y+(1+X)u) \\
    &= (1+X)x+\partial^{\cA}y \\
    &= s.
  \end{align*}
  Thus $\tilde{x}'$ is a valid choice of $\tilde{x}$ in line~\ref{li:wdtx} of Algorithm~\ref{alg:weakdec}, so the actual choice of $\tilde{x}$ in line~\ref{li:wdtx} will indeed satisfy $\partial_1^{\cC}(\tilde{x},\tilde{y}_\llog)=s$ with $|\tilde{x}|\leq|\tilde{x}'|$. It follows that Algorithm~\ref{alg:weakdec} will return $\tilde{c}=(\tilde{x},\tilde{y}_\llog)$ of weight
  \begin{align*}
    |\tilde{c}|
    &= |\tilde{x}|+|\tilde{y}_\llog| \\
    &\leq |\tilde{x}'|+|y+(1+X)u| \\
    &= |x+\partial^{\cA}u|+|y+(1+X)u| \\
    &\leq |x|+|y|+|\partial^{\cA}u|+|(1+X)u| \\
    &\leq e+(w+2)|u| \\
    &= e+(w+2)\Delta_1 \\
    &\leq (1+(w+2)\cdot 2\gamma/\epsilon)e,
  \end{align*}
  where the third inequality above holds by the assumption that $\cA$ has locality $w$, and the fourth inequality follows by the definition of $\Delta_1$ in Proposition~\ref{prop:ampcom}. We have shown the above inequality holds with probability $\geq(1-\epsilon)^\llog$, as desired in the statement of Theorem~\ref{thm:lpweakdec}.

  It only remains to show that Algorithm~\ref{alg:weakdec} runs in time $O(\ell(n^{\cA}\ell+T))$. By Lemma~\ref{lem:ampcomtime}, for $\tau\in[\llog]$, the call to \FnAmpCom{$s,\tilde{y}_\tau,2^{\tau+1}$} in line~\ref{li:tyt} of Algorithm~\ref{alg:weakdec} takes time $O(2^{\tau+1}(n^{\cA}\ell+T))$. Summing over $\tau\in[\llog]$ and recalling that $\llog=\lg\ell$, we see that lines~\ref{li:fort}--\ref{li:tyt} collectively take time $\sum_{\tau\in[\llog]}O(2^{\tau+1}(n^{\cA}\ell+T))=O(\ell(n^{\cA}\ell+T))$.

  Line~\ref{li:wdinit} and line~\ref{li:wdret} in Algorithm~\ref{alg:weakdec} each by definition take time $O(n^{\cA}\ell)$, so it only remains to analyze the running time of line~\ref{li:wdtx}. But line~\ref{li:wdtx} also takes time $O(n^{\cA}\ell)$ by Lemma~\ref{lem:inv1pX}. Specifically, the requirement that $\partial_1^{\cC}(\tilde{x},\tilde{y}_\llog)=s$ is equivalent to requiring that $(1+X)\tilde{x}=z$ for $z:=s-\partial^{\cA}\tilde{y}_\llog$. Letting $\{\beta_h\}_{h\in[n_0]}$ be the fixed $R_\ell$-basis for $A_0=\bigoplus_{h\in[n_0]}\beta_hR_\ell\cong R_\ell^{n_0}$ (with respect to which we compute Hamming weights), then we may decompose $z=\sum_{h\in[n_0]}z_h\beta_h$ for $z_h\in R_\ell$. Finally, we construct $\tilde{x}=\sum_{h\in[n_0]}\tilde{x}_h\beta_h$, where for each $h$, we let $\tilde{x}_h\in R_\ell$ be the minimum weight element satisfying $(1+X)\tilde{x}_h=z_h$, if such a $\tilde{x}_h$ exists. By Lemma~\ref{lem:inv1pX}, we may compute each $\tilde{x}_h$ or else determine it does not exist in time $O(\ell)$. By repeating for all $h\in[n_0]$, we can compute $\tilde{x}$ or determine it does not exist in time $O(n_0\ell)\leq O(n^{\cA}\ell)$.

  Summing over all lines in Algorithm~\ref{alg:weakdec}, we conclude that the algorithm runs in time $O(\ell(n^{\cA}\ell+T))+O(n^{\cA}\ell)=O(\ell(n^{\cA}\ell+T))$, as desired.
\end{proof}

\section{Acknowledgments}
We thank Anirudh Krishna for helpful suggestions that improved the exposition. We thank Isaac Kim for pointing us to relevant references.

\bibliographystyle{alpha}
\bibliography{library}

\appendix

\section{Classical Interpretation of the Hypergraph Product: Decoding Over Space and Time}
\label{sec:spacetime}
In this section, we provide a purely classical view of our decoding algorithm in Section~\ref{sec:hgpdecinf} for the hypergraph product of an expander-based classical LDPC code $\cA$ with a repetition code $\cB$. Specifically, we show how the repetition code can be viewed as a time axis, and then the decoding problem for the hypergraph product is equivalent to the problem of decoding a classical LDPC code in which syndrome measurements are repeated across different time steps, and both code bits and syndrome bits can be corrupted. A similar viewpoint is often used for decoding quantum codes, such as the surface code, in the presence of syndrome errors. However, we describe this interpretation here as it may be helpful to the reader more familiar with classical error correction, and it provides a useful framework for thinking about our decoding algorithms.

As in Theorem~\ref{thm:hgpdecinf}, let $\cA_*=(A_1\xrightarrow{\partial^{\cA}}A_0)$ with $N_i=\dim A_i$ be a length-$N_1$ classical LDPC code that is noisy-syndrome decodable, let $\cB_*=(B_1\xrightarrow{\partial^{\cB}}B_0)$ be a length-$\ell$ repetition code, and let $\cC_*=\cA\otimes_{\bF_2}\cB$ be the hypergraph product. Fix an error $c=(x,y)\in C_1=\bF_2^{N_0\times\ell}\oplus\bF_2^{N_1\times\ell}$ with syndrome $s=\partial_1^{\cC}c=(I\otimes\partial^{\cB})x+(\partial^{\cA}\otimes I)y$. Let $z_i$ denote the $i$th column of a matrix $z$. Assume for simplicity that $x_{-1}=x_{\ell-1}=0$;
% \vnote{Should it be $x_{\ell}=0$?}
% I think $x_{\ell-1}$ is right; we have $x$ be a length-$\ell$ vector indexed from $0$ to $\ell-1$
this assumption allows us to ``cut open'' the length-$\ell$ cycle of time steps so that time is linear and not circular.

Define matrices $a\in\bF_2^{N_1\times\ell}$, $q\in\bF_2^{N_0\times\ell}$, $r\in\bF_2^{N_0\times\ell}$ by the following prefix sums for $k\in[\ell]$ (note the slight difference in indexing compared to the proof of Theorem~\ref{thm:hgpdecinf} in Section~\ref{sec:hgpdecinf}):
\begin{align*}
  a_k &= \sum_{i=0}^ky_i \\
  q_k &= \sum_{i=0}^k(\partial^{\cB}x)_i = \sum_{i=0}^k(x_{i-1}+x_i) = x_k \\
  r_k &= \sum_{i=0}^ks_i = q_k+\partial^{\cA}a_k.
\end{align*}

Now we imagine maintaining a codeword of $\ker\partial^{\cA}\subseteq\bF_2^{N_1}$ over $\ell$ time steps $k=0,\dots,\ell-1$. We interpret $y_k\in\bF_2^{N_1}$ as the vector of errors that occured on code bits in time step $k$, so that $(y_k)_h=1$ if code bit $h$ was flipped at time $k$. Then $a_k$ gives the cumulative errors that have occured up to time $k$ since time $0$.

Meanwhile, we also assume the syndrome of $\cA$ is measured at each time step, but that the syndrome measurements are noisy. We interpret $q_k=x_k$ as the vector of errors that occured on syndrome bits in time step $k$, so that $(x_k)_h=1$ if syndrome bit $h$ was flipped at time $k$. The vector $r_k$ gives the noisy measured syndrome at time $k$; it equals the sum of the syndrome error $q_k=x_k$ at time $k$ and the true syndrome $\partial^{\cA}a_k$ of the cumulative code error $a_k$.

To decode this classical space-time system, observe that we cannot hope to precisely recover the matrices $y$ and $x$ representing the true errors that occured. Indeed, the noisy measured syndromes $r_k$ are identical in the following two scenarios:
\begin{enumerate}
\item Code bit $h$ is flipped at some time step $k$ and then flipped again at time $k+1$, while no syndrome errors occur.
\item No code bit errors occur, but a syndrome error of $\partial^{\cA}\1_h$ occurs at time step $k$.
\end{enumerate}

Therefore syndrome errors can simulate code bit errors and vice versa, so we can at best hope to recover $(\tilde{x},\tilde{y})$ that equal $(x,y)$ up to such indistinguishable errors. That is, we can at best hope to recover $(\tilde{x},\tilde{y})$ for which there exists some $z\in\bF_2^{N_1\times\ell}$ such that $\tilde{x}=x+(\partial^{\cA}\otimes I)z$ and $\tilde{y}=y+(I\otimes\partial^{\cB})z$. But this is precisely the problem of recovering $\tilde{c}=(\tilde{x},\tilde{y})$ that differs from $c=(x,y)$ by some $\tilde{c}-c\in\im\partial_2^{\cC}$, which is exactly the quantum decoding problem formalized in Definition~\ref{def:decoder}!

In this framework, our decoding algorithm described in the proof of Theorem~\ref{thm:hgpdecinf} is particularly natural. Specifically, the decoding algorithm simply applies the noisy-syndrome decoder $D^{\cA}$ for $\cA$ to each $r_k$ to compute $\tilde{a}_k:=D^{\cA}(r_k)$ (again, mind the change in indexing compared to Section~\ref{sec:hgpdecinf}). It then computes an estimate $\tilde{y}$ of $y\in\bF_2^{N_1\times\ell}$ by letting $\tilde{y}_k:=\tilde{a}_k-\tilde{a}_{k-1}$, and chooses $\tilde{x}\in\bF_2^{N_0\times\ell}$ to be the minimum weight element satisfying $\partial_1^{\cC}(\tilde{x},\tilde{y})=s$. That is, in our classical space-time framework, the decoder in Theorem~\ref{thm:hgpdecinf} simply applies the noisy-syndrome decoder $D^{\cA}$ to the cumulative code bit errors $a_k$ to estimate $y$, and then estimates $x$ as the (classical) syndrome errors of minimum possible weight that are consistent with this estimate of $y$.

\section{Proof of Classical Noisy-Syndrome Decodability}
\label{sec:noisysyn}
In this section, we prove Proposition~\ref{prop:classtan} from Section~\ref{sec:classtan}. Specifically, here we present a self-contained construction and analysis of linear-time noisy-syndrome decoders for both the chain and cochain complexes associated to classical Tanner codes. Our presentation applies similar techniques as found in works such as \cite{sipser_expander_1996,spielman_linear-time_1996,zemor_expander_2001,leverrier_decoding_2023}. However, as described in Section~\ref{sec:classtan}, we were unable to find the precise statements we need in such prior works, so we prove them here.

For some of the proofs in this section, we give explicit constants for clarity of presentation, but we do not attempt to optimize any of these constants.

\subsection{Expander Graph Preliminaries}
This section provides some necessary preliminaries on graph theory and expander graphs.

For $\Delta\in\bN$, consider a $\Delta$-regular graph $G=(V_G,E_G)$. We will restrict attention to undirected such graphs $G$ for which every edge has weight $1$, though there may be multiple distinct edges between the same pair of vertices. For vertices $i,j\in V_G$, we let $W_G(i,j)$ equal the number of edges from vertex $i$ to vertex $j$. We extend this notation to sets $S,T\subseteq V_G$ of vertices by letting $W_G(S,T)=\sum_{i\in S,j\in T}W_G(i,j)$. Then the condition that $G$ is $\Delta$-regular can be expressed as requiring $W_G(i,V_G)=\Delta$ for all $i\in V_G$.

Recall that $G$ has a \textbf{normalized adjacency matrix} $A_G\in\bR^{V\times V}$ whose $(i,j)$-entry is given by $(A_G)_{ij}=W_G(i,j)/\Delta$. All eigenvalues of $A_G$ always lie in $[-1,1]$, and the largest eigenvalue of $A_G$ equals $1$, with eigenvector given by the all-ones vector.

\begin{definition}
  The \textbf{spectral expansion $\lambda(G)$} of a graph $G=(V,E)$ equals the second largest absolute value an eigenvalue of the normalized adjacency matrix $A_G$. % TODO: only should need expander mixing lemma, and thus onesided expansion should suffice, though \cite{panteleev_quantum_2022} seem to use twosided expansion for some reason, so to cite their expansion results probably want to use twosided expansion
\end{definition}

The main property of spectral expanders that we will use is the following version of the well-known \textit{expander mixing lemma}:

\begin{lemma}[Expander Mixing Lemma; see for instance Lemma~4.15 of \cite{vadhan_pseudorandomness_2012}]
  \label{lem:expmix}
  Let $G=(V_G,E_G)$ be a $\Delta$-regular graph. For every subset of vertices $S\subseteq V_G$,
  \begin{equation*}
    W_G(S,S) \leq \left(\lambda(G)+\frac{|S|}{|V_G|}\right)\cdot\Delta|S|.
  \end{equation*}
\end{lemma}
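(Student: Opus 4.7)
The plan is to express $W_G(S,S)$ as a quadratic form in the normalized adjacency matrix and then peel off the contribution along the top eigenvector. First, I would write
\[
W_G(S,S) = \Delta \cdot \mathbf{1}_S^\top A_G \mathbf{1}_S,
\]
where $\mathbf{1}_S \in \{0,1\}^{V_G}$ is the indicator vector of $S$; this is immediate from the definition of $A_G$. Note $A_G$ is symmetric since $G$ is undirected.

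Second, I would decompose $\mathbf{1}_S = u + w$, where $u = (|S|/|V_G|)\,\mathbf{1}_{V_G}$ is the orthogonal projection of $\mathbf{1}_S$ onto the all-ones direction and $w = \mathbf{1}_S - u$ satisfies $w \perp \mathbf{1}_{V_G}$. Since $\mathbf{1}_{V_G}$ is the top eigenvector of $A_G$ with eigenvalue $1$, and cross terms vanish by orthogonality combined with symmetry of $A_G$, we get
\[
\mathbf{1}_S^\top A_G \mathbf{1}_S \;=\; u^\top A_G u + w^\top A_G w \;=\; \frac{|S|^2}{|V_G|} + w^\top A_G w.
\]

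Third, since $w$ is orthogonal to the top eigenvector, the variational characterization of $\lambda(G)$ yields $|w^\top A_G w| \leq \lambda(G)\,\|w\|^2$. A Pythagorean calculation gives $\|w\|^2 = \|\mathbf{1}_S\|^2 - \|u\|^2 = |S| - |S|^2/|V_G| \leq |S|$. Assembling these bounds,
\[
W_G(S,S) \;\leq\; \Delta\left(\frac{|S|^2}{|V_G|} + \lambda(G)\,|S|\right) \;=\; \left(\lambda(G) + \frac{|S|}{|V_G|}\right)\Delta|S|,
\]
which is exactly the claimed bound.

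There is no real obstacle to overcome here: the argument is entirely standard linear algebra and is included in the cited reference. The only care needed is in the decomposition step, to verify both that $A_G$ preserves the span of $\mathbf{1}_{V_G}$ (so that the cross terms in $\mathbf{1}_S^\top A_G \mathbf{1}_S$ vanish) and that the restriction of $A_G$ to the orthogonal complement has operator norm at most $\lambda(G)$, which is the definition of spectral expansion together with symmetry of $A_G$.
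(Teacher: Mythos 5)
Your proof is correct and is exactly the standard spectral argument behind this lemma; the paper itself gives no proof and simply cites Lemma~4.15 of \cite{vadhan_pseudorandomness_2012}, whose argument proceeds by the same decomposition of $\mathbf{1}_S$ into its component along the all-ones vector plus an orthogonal part. (The only degenerate case, eigenvalue $1$ with multiplicity greater than one, gives $\lambda(G)=1$ and the bound $|w^\top A_G w|\leq \lambda(G)\|w\|^2$ still holds since $A_G$ has operator norm at most $1$, so nothing breaks.)
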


We will construct classical Tanner codes based on the explicit spectrally expanding abelian lifts of \cite{jeronimo_explicit_2022}, described in Proposition~\ref{prop:elift} below. However, we first need to define abelian lifts of graphs. In this paper we restrict attention to graph lifts by the cyclic group of order $\ell$, which we refer to as $\ell$-lifts.

\begin{definition}
  Fix some $\ell\in\bN$ and some base graph $G_0=(V_0,E_0)$ with an oriented edge labeling $L:V_0\times V_0\rightarrow\bZ/\ell\bZ$, meaning that for every edge $\{u,v\}\in E_0$, then $L(u,v)=-L(v,u)$. The \textbf{$\ell$-lift of $G_0$ according to $L$} is the graph $G=(V,E)$ with vertex set $V=V_0\times(\bZ/\ell\bZ)$, and edge set
  \begin{equation*}
    E = \{\{(u,i),(v,i+L(u,v))\}:\{u,v\}\in E_0,i\in\bZ/\ell\bZ\}.
  \end{equation*}
\end{definition}

% Any graph $\tilde{G}=(\tilde{V},\tilde{E})$ that is an $\ell$-lift (according to some labeling) respects a group action by $\bZ/\ell\bZ$, in the sense that given $a\in\bZ/\ell\bZ$, then vertices 

\begin{proposition}[\cite{jeronimo_explicit_2022}]
  \label{prop:elift}
  For every constant $\lambda>0$, there exists a sufficiently large constant $\Delta_0=\Delta_0(\lambda)$ such that for every constant $\Delta\geq\Delta_0$, there is an explicit infinite family of $\Delta$-regular graphs $G^{(\ell)}$ for $\ell\in\bN$, such that each $G^{(\ell)}$ has spectral expansion $\lambda(G^{(\ell)})\leq\lambda$, and is the $\ell$-lift of some $\Delta$-regular base graph $G_0^{(\ell)}=(V_0,E_0)$ with $|V_0|=\Theta(\log\ell)$.
\end{proposition}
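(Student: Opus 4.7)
The plan is to directly invoke the construction of explicit near-Ramanujan abelian lifts due to \cite{jeronimo_explicit_2022}, which derandomizes the random-lift construction of \cite{agarwal_expansion_2019}. Their main theorem provides, in polynomial time, an edge labeling $L: V_0 \times V_0 \to \bZ/\ell\bZ$ of any $\Delta$-regular base graph $G_0$ such that, provided $\Delta$ exceeds a threshold $\Delta_0(\lambda)$ depending only on $\lambda$, the induced $\ell$-lift has spectral expansion at most $\lambda$. This threshold $\Delta_0(\lambda)$ is exactly the $\Delta_0$ appearing in the proposition statement.

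Given $\lambda > 0$ and $\Delta \geq \Delta_0(\lambda)$, for each $\ell$ I would first exhibit an explicit $\Delta$-regular base graph $G_0^{(\ell)}$ with $|V_0| = \Theta(\log \ell)$. The specific choice is essentially irrelevant --- for concreteness one may take a cycle of length $\Theta(\log\ell)$ augmented with appropriate self-loops and parallel edges to reach degree $\Delta$ --- since the spectral expansion of the lift is governed entirely by the labeling $L^{(\ell)}$ produced by \cite{jeronimo_explicit_2022}, and does not depend on any expansion property of $G_0^{(\ell)}$ itself. I would then set $G^{(\ell)}$ to be the $\ell$-lift of $G_0^{(\ell)}$ under that labeling. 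Regularity is preserved under lifts, the spectral bound $\lambda(G^{(\ell)}) \leq \lambda$ follows directly from the guarantee of the cited theorem, and explicitness follows because both $G_0^{(\ell)}$ and $L^{(\ell)}$ are computable in time $\poly(|V_0|, \ell) = \poly(\ell)$.

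The main --- and essentially only --- obstacle is to match the precise parameter regime of the proposition to the form in which \cite{jeronimo_explicit_2022} state their theorem, namely cyclic lifts by $\bZ/\ell\bZ$ for arbitrary $\ell \in \bN$, a base graph whose size $\Theta(\log\ell)$ grows (mildly) with $\ell$, and a degree $\Delta$ depending only on $\lambda$. If their statement is phrased for a base graph of size independent of $\ell$, one must track parameters through their proof to confirm that the construction time and the expansion guarantee depend only polynomially on $|V_0|$, which is harmless at $|V_0| = \Theta(\log \ell)$. Granting this --- as is standard in the literature and already used in the same form by \cite{panteleev_quantum_2022} in establishing Theorem~\ref{thm:pkdisinf} --- the proposition follows.
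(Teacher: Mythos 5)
The paper does not prove this proposition at all---it is imported wholesale from \cite{jeronimo_explicit_2022}---so citing that result is indeed the intended ``proof.'' However, your proposal adds a construction detail that is genuinely wrong: the claim that the base graph is irrelevant because the expansion of the lift ``is governed entirely by the labeling'' and ``does not depend on any expansion property of $G_0^{(\ell)}$.'' For any $\ell$-lift (abelian or not), functions on $V_0\times(\bZ/\ell\bZ)$ that are constant on fibers are preserved by the adjacency operator, and on this subspace it acts exactly as the adjacency operator of $G_0$. Hence every eigenvalue of the base graph is inherited by the lift, and $\lambda(G^{(\ell)})\geq\lambda(G_0^{(\ell)})$ no matter which labeling is chosen; the labeling can only control the ``new'' eigenvalues coming from nontrivial characters of $\bZ/\ell\bZ$. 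With your concrete suggestion of a cycle of length $\Theta(\log\ell)$ padded with self-loops and parallel edges, the normalized second eigenvalue of the base is $1-\Theta(1/\log^2\ell)\to 1$, so every lift of it has spectral expansion tending to $1$ and the conclusion $\lambda(G^{(\ell)})\leq\lambda$ is impossible for small constant $\lambda$.

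The fix is to use the theorem of \cite{jeronimo_explicit_2022} in the form it is actually proved: their result takes as input an \emph{explicit expanding} $\Delta$-regular base graph and an abelian group whose order may be as large as exponential in the number of base vertices, and outputs an explicit labeling whose lift is again a near-optimal spectral expander. Instantiating it with an explicit $\Delta$-regular expander on $\Theta(\log\ell)$ vertices (such base graphs exist explicitly for all large constant $\Delta$) and the group $\bZ/\ell\bZ$ gives precisely the family $G^{(\ell)}$ in the proposition; the parameter-matching concern you raise in your last paragraph (group size exponential in base size) is exactly the regime their theorem is designed for, so no re-derivation is needed. The rest of your outline---regularity preserved under lifts, explicitness from the $\poly(\ell)$-time computability of the base graph and the labeling---is fine once the base graph is taken to be an expander.
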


\subsection{Construction of the 2-Term Chain Complexes $\cA^{(\ell)}$}
\label{sec:defA}
In this section, we describe the desired construction of the 2-term chain complexes $\cA^{(\ell)}$ for $\ell\in\bN$ in Proposition~\ref{prop:classtan}. Specifically, we construct $\cA^{(\ell)}=(A_1^{(\ell)}\xrightarrow{\partial^{(\ell)}}A_0^{(\ell)})$ so that the classical code associated to $\cA^{(\ell)}$ is a classical Tanner code on the spectral expander graph $G^{(\ell)}$ given by Proposition~\ref{prop:elift}. This construction was essentially given in \cite{panteleev_quantum_2022}, though as mentioned above we use a different underlying graph $G$.  We will subsequently show that this construction satisfies the criteria of Proposition~\ref{prop:classtan}.

Formally, given constants $\lambda>0$ (to be specified later, in the proof of Proposition~\ref{prop:classtan} in Section~\ref{sec:nsdecproof}), $R=1/4$ (any $0<R<1/2$ should suffice), and $\delta=1/100$ (any $\delta>0$ for which random linear codes of rate $1-R$ and relative distance $\delta$ are achievable by the Gibert-Varshamov bound should suffice), we construct $\cA^{(\ell)}=\cA^{(\ell)}(\lambda,R,\delta)$ as follows.

First, define $\Delta_0(\lambda)$ as in Proposition~\ref{prop:elift}, and choose some sufficiently large constant $\Delta\geq\Delta_0(\lambda)$ such that there exists a full-rank matrix $Z\in\bF_2^{\Gamma\times\Delta}$ with $\Gamma:=\lfloor R\Delta\rfloor$ such that both $\ker Z$ and $\im Z^\top$ are linear codes of length $\Delta$ and distance $\geq\delta\Delta$. Such $\Delta$ and $Z$ must exist by the Gilbert-Varshamov bound (see e.g.~Section~4.2 of \cite{guruswami_essential_2022}), and therefore $Z$ can be found in constant time (and hence explicitly) by a brute force search because $\Delta$ is a constant as $\ell\in\bN$ grows.

Now we define $\cA^{(\ell)}$ to be the chain complex associated to the classical Tanner code (see e.g.~\cite{sipser_expander_1996}) on the $\Delta$-regular graph $G^{(\ell)}$ in Proposition~\ref{prop:elift} with inner code $\ker Z$. Formally, for simplicity in notation fix $\ell\in\bN$, let $\cA=\cA^{(\ell)}$, and let $G=(V,E)=G^{(\ell)}$ be the $\ell$-lift of $G_0=(V_0,E_0)=G_0^{(\ell)}$, so that both $G$ and $G_0$ are $\Delta$-regular.

At a high level, $\cA$ will be defined so that its associated classical code $\ker\partial^{\cA}$ has code bits labeled by $E$, parity check bits labeled by $V\times[\Gamma]$, and parity checks defined by requiring the $\Delta$ code bits on the edges incident to each vertex form a codeword of the ``inner code'' $\ker Z\subseteq\bF_2^\Delta$. Note that for each vertex, we have freedom to permute the bits of this inner code when assigning them to the edges incident to that vertex. We require the choice of this permutation to respect the action of $\bZ/\ell\bZ$ on $G$, which ensures the resulting chain complex $\cA$ respects the action of $\bZ/\ell\bZ$, that is, $\cA$ is a chain complex over $R_\ell$.

Formally, we first choose some labeling of the edges incident to each vertex of $G$ that respects that action of $\bZ/\ell\bZ$. For this purpose, we first choose arbitrary labels for the base graph $G_0$. Specifically, for each edge $e_0=\{u_0,v_0\}\in E_0$ of the base graph $G_0$, we choose some values $r_{e_0,u_0},r_{e_0,v_0}\in[\Delta]$ such that for all edges $(e_0)_i=\{u_0,(v_0)_i\},i\in[\Delta]$ incident to a given $u_0\in V$, then the values $r_{(e_0)_i,u_0}$ for $i\in[\Delta]$ are all distinct. We then lift this labeling to the $\ell$-lift graph $\tilde{G}$ as follows: for every edge $e=\{(u_0,i),(v_0,i+L(u_0,v_0))\}\in\tilde{E}$, we let $r_{e,(u_0,i)}:=r_{e_0,u_0}$ for $e_0=\{u_0,v_0\}$.

We embed these labels $r_{e,u}$ in a $\bF_2$-linear map $M:\bF_2^E\rightarrow\bF_2^V\otimes\bF_2^\Delta=\bF_2^{V\times[\Delta]}$ as follows: for every edge $e=\{u,v\}\in E$, we let the basis vector $\1_e\in\bF_2^E$ be mapped to $M\1_e=\1_{(u,r_{e,u})}+\1_{(v,r_{e,v})}$.

We are now ready to define $\cA$. Specifically, we let $A_0=\bF_2^{V\times[\Gamma]}\cong\bF_2^V\otimes\bF_2^\Gamma$, $A_1=\bF_2^E$, and we define $\partial^{\cA}:A_1\rightarrow A_0$ by
\begin{equation*}
  \partial^{\cA} = (I\otimes Z)\cdot M.
\end{equation*}
In words, $M$ on the right hand side above maps code bits (associated to edges in $E$) to the incident vertices, and then $I\otimes Z$ applies the parity-check matrix $Z$ to all of these code bits incident to a given vertex. Thus $\partial^{\cA}$ is indeed the parity-check matrix of the Tanner code on the graph $G$ with inner code $\ker Z$, as described above.

We can interpret the cochain complex $\cA^*$ as having associated code $\ker\delta^{\cA}=\ker{\partial^{(\cA)}}^\top=M^\top\cdot(I\otimes Z^\top)$ with code bits labeled by elements of $V\times[\Gamma]$, and parity-check bits labeled by edges in $E$. The parity-check matrix $\delta^{\cA}$ first applies $Z^\top:\bF_2^\Gamma\rightarrow\bF_2^\Delta$ to the $\Gamma$ code bits associated to each vertex, thereby encoding these $\Gamma$ code bits into the ``dual inner code'' $\im Z^\top\subseteq\bF_2^\Delta$. The resulting $\Delta$ inner code bits at a vertex are then distributed among the $\Delta$ edges incident to that vertex. Therefore each edge in $E$ receives a bit from each of its two vertices; the parity check at that edge simply requires these two received bits to be equal.

% \vnote{Don't we need $\Gamma > \Delta/2$ for $\ker \delta^A$ to be non-trivial, and $\Gamma < \Delta/2$ for $
% \cA_*$ to be non-tirival. I guess this doesn't matter, and $\cA^*$ is really though of as an LDGM code}
% Yeah exactly, we just need one or the other; at the top I wrote $0<R<1/2$ so in this case $\Gamma<\Delta/2$; indeed $\cA^*$ will be a trivial code, but we can still talk about the expansion and noisy-syndrome decodability of its parity-check matrix

\cite{panteleev_quantum_2022} showed the following properties of this construction $\cA=\cA^{(\ell)}$.\footnote{As mentioned previously, \cite{panteleev_quantum_2022} actually instantiated the construction slightly differently, for instance by using the $\ell$-lifts of \cite{agarwal_expansion_2019} instead of \cite{jeronimo_explicit_2022}, but their proofs carry over to our setting.}

\begin{lemma}[\cite{panteleev_quantum_2022}]
  \label{lem:ctexp}
  For $\ell\in\bN$, the chain complex $\cA^{(\ell)}$ defined above is a well-defined chain complex over $R_\ell$, which has locality $w\leq\Delta$. Furthermore, if $\lambda>0$ is a sufficiently small constant, then there exist constants $\alpha,\beta>0$ such that $\cA^{(\ell)}_*$ and ${\cA^{(\ell)}}^*$ are both $(\alpha,\beta)$-expanding.
\end{lemma}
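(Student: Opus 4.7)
The plan is to verify the three conclusions of Lemma~\ref{lem:ctexp} in turn: that $\cA^{(\ell)}$ is a well-defined chain complex over $R_\ell$, that its locality is at most $\Delta$, and that both $\cA^{(\ell)}_*$ and ${\cA^{(\ell)}}^*$ are $(\alpha,\beta)$-expanding for some constants $\alpha,\beta>0$ once $\lambda$ is chosen small enough. The first two parts amount to unpacking the construction, while the expansion claims carry the real content of the lemma.

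For the $R_\ell$-structure, the cyclic group $\bZ/\ell\bZ$ acts on $V=V_0\times\bZ/\ell\bZ$ by translation of the second coordinate; this lifts to $E$ because the adjacency in $G^{(\ell)}$ is defined through labels $L(u_0,v_0)$ depending only on the base graph. The induced actions on $A_1=\bF_2^E$ and $A_0=\bF_2^V\otimes\bF_2^\Gamma$ exhibit each as a free $R_\ell$-module (of rank $|E_0|$ and $|V_0|\Gamma$). Because the slot labels $r_{e,u}$ depend only on the projection of $(e,u)$ to $E_0\times V_0$, the edge-to-vertex assembly map $M$ is $\bZ/\ell\bZ$-equivariant, and the block-diagonal operator $I\otimes Z$ is automatically so; hence $\partial^{\cA}=(I\otimes Z)M$ is an $R_\ell$-module homomorphism. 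For locality, the column of $\partial^{\cA}$ at an edge $e=\{u,v\}$ has weight equal to the sum of the Hamming weights of columns $r_{e,u}$ and $r_{e,v}$ of $Z$, which is at most $2\Gamma\le\Delta/2$; the row at $(v,k)$ has weight equal to the Hamming weight of row $k$ of $Z$, which is at most $\Delta$. Thus $w\le\Delta$.

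For expansion of $\cA^{(\ell)}_*$, I would run the standard expander-code argument. Given a nonzero $c\in A_1$ with $|c|\le\alpha|E|$, for each $v\in V$ let $c_v\in\bF_2^\Delta$ denote the restriction of $c$ to the edges incident to $v$ (indexed by the labels $r_{\cdot,v}$); set $S=\{v:c_v\ne 0\}$ and $T=\{v\in S:|c_v|\ge\delta\Delta\}$. Every edge in $\supp(c)$ has both endpoints in $S$, so $|c|\le W_G(S,S)/2$; combined with the crude bound $|S|/|V|\le 2|c|/(\Delta|V|)\le\alpha\Delta$, the expander mixing lemma (Lemma~\ref{lem:expmix}) yields $|S|\ge 2|c|/[(\lambda+\alpha\Delta)\Delta]$. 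Meanwhile $\delta\Delta\,|T|\le\sum_{v\in T}|c_v|\le 2|c|$, so $|T|\le 2|c|/(\delta\Delta)$. Choosing $\lambda,\alpha$ small enough that $\lambda+\alpha\Delta\le\delta/4$ forces $|T|\le|S|/4$, hence $|S\setminus T|\ge 3|S|/4$. Since $\ker Z$ has distance $\ge\delta\Delta$, every $v\in S\setminus T$ satisfies $c_v\notin\ker Z$, so $Zc_v\ne 0$ and $|\partial^{\cA}c|\ge|S\setminus T|\ge\beta|c|$ for a positive constant $\beta$.

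For expansion of ${\cA^{(\ell)}}^*$, I would run the analogous argument using instead the distance of $\im Z^\top$. Given a nonzero $c\in A_0$ with $|c|\le\alpha\dim A_0$, set $\tilde c_v:=Z^\top c_v\in\bF_2^\Delta$ and $S:=\{v:c_v\ne 0\}$; since $Z^\top$ is injective and $\im Z^\top$ has distance $\ge\delta\Delta$, every $v\in S$ satisfies $|\tilde c_v|\ge\delta\Delta$. The coboundary value on edge $e=\{u,v\}$ is $(\tilde c_u)_{r_{e,u}}+(\tilde c_v)_{r_{e,v}}$, and a direct count gives $|\delta^{\cA}c|=\sum_v|\tilde c_v|-2n_2$, where $n_2$ is the number of edges whose two sides are both $1$. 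Edges contributing to $n_2$ lie inside $S$, so $n_2\le W_G(S,S)/2\le(\lambda+|S|/|V|)\Delta|S|/2$; combining with $|S|/|V|\le|c|/|V|\le\alpha\Gamma$ and taking $\lambda+\alpha\Gamma\le\delta/2$ yields $|\delta^{\cA}c|\ge(\delta/2)\Delta|S|\ge(\delta\Delta/(2\Gamma))|c|=\Omega(|c|)$. The main technical point to watch will be selecting one pair $(\alpha,\beta)$ compatible with both directions, but since every threshold is explicit in $\lambda,\delta,\Delta,\Gamma$, fixing $\delta,R,\Delta$ first and then taking $\lambda$ sufficiently small closes the argument in both cases.
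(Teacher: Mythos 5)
Your proof is essentially correct, but note that the paper itself does not prove Lemma~\ref{lem:ctexp} at all: it is imported verbatim from \cite{panteleev_quantum_2022} (with a footnote that the change of expander family from \cite{agarwal_expansion_2019} to \cite{jeronimo_explicit_2022} does not affect the proofs). So your writeup supplies a self-contained argument where the paper gives a citation. Your route is the standard one and is structurally the same as what the paper does prove in Appendix~\ref{sec:noisysyn} for noisy-syndrome decodability: for $\cA_*$ you split the support vertices into ``heavy'' and ``light'' local views, use the distance $\geq\delta\Delta$ of $\ker Z$ to force a nonzero local syndrome at each light vertex, and use the expander mixing lemma (Lemma~\ref{lem:expmix}) to show light vertices dominate; for $\cA^*$ you use injectivity and distance of $\im Z^\top$ plus the identity $|\delta^{\cA}c|=\sum_v|\tilde c_v|-2n_2$ with mixing controlling $n_2$. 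This mirrors the $S$, $S'$ decompositions in Claims~\ref{claim:redmis} and~\ref{claim:redsyn}, so your argument buys a uniform, self-contained treatment at no extra machinery, while the paper's choice to cite \cite{panteleev_quantum_2022} avoids duplicating an argument it needs only as a black box for Theorem~\ref{thm:pkdis}.

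Two small points to tidy. First, in the $\cA_*$ argument the intermediate inequality $|S|/|V|\leq 2|c|/(\Delta|V|)$ is off by a factor of $\Delta$: from ``every support edge has both endpoints in $S$'' you only get $|S|\leq 2|c|$, hence $|S|/|V|\leq 2|c|/|V|\leq\alpha\Delta$; the bound $\alpha\Delta$ you actually use downstream is the correct one, so nothing breaks, but the displayed intermediate step should be fixed. Second, the assertion that $A_1=\bF_2^E$ is a \emph{free} $R_\ell$-module of rank $|E_0|$ tacitly uses that the $\bZ/\ell\bZ$-action on the lifted edge set is free; this can fail only in degenerate cases (a base self-loop whose label $L$ satisfies $2L\equiv 0\pmod{\ell}$), so it deserves one sentence ruling such cases out (or noting the base graphs of Proposition~\ref{prop:elift} avoid them), exactly the kind of non-degeneracy \cite{panteleev_quantum_2022} handles in their treatment.
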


Thus to complete the proof of Proposition~\ref{prop:classtan}, it only remains to show that $\cA^{(\ell)}_*$ and ${\cA^{(\ell)}}^*$ are both noisy-syndrome decodable. We show this property below.

\subsection{Proof of Noisy-Syndrome Decodability}
\label{sec:nsdecproof}
In this section, we fix $\cA=\cA^{(\ell)}$ for $\ell\in\bN$ as defined in Section~\ref{sec:defA}, and we show that $\cA_*$ and $\cA^*$ are both noisy-syndrome decodable with running time growing linearly in $N^{\cA}=\dim_{\bF_2}A_0+\dim_{\bF_2}A_1$. Specifically, Algorithm~\ref{alg:nsdecp} and Algorithm~\ref{alg:nsdecd} provide the desired noisy-syndrome decoders for $\cA_*$ and $\cA^*$ respectively. Below, we prove the desired properties of these decoders in Lemma~\ref{lem:nsdecp} and Lemma~\ref{lem:nsdecd} respectively; we remark that these results do not rely on the $R_\ell$-structure (that is, the $\bZ/\ell\bZ$-symmetry) of the chain complexes. We then apply these results to complete the proof of Proposition~\ref{prop:classtan}.

\begin{algorithm}[h]
  \caption{Noisy-syndrome decoder for 2-term chain complex $\cA^{(\ell)}_*=\cA_*=(A_1\xrightarrow{\partial^{\cA}}A_0)$ defined in Section~\ref{sec:defA}. This algorithm uses the following notation: given $x\in(\bF_2^\Delta)^V$, we define the \textit{mismatch vector} $m(x)\in A_1=\bF_2^E$ so that for $e=\{u,v\}\in E$, then $m(x)_e=(x_u)_{r_{e,u}}+(x_v)_{r_{e,v}}$ denotes the mismatch between the assignments to edge $e$ of the local views $x_u$ and $x_v$, where $r_{e,u},r_{e,v}\in[\Delta]$ are as defined in Section~\ref{sec:defA}.}
  \label{alg:nsdecp}
  \SetKwInOut{Input}{Input}
  \SetKwInOut{Output}{Output}

  \SetKwFunction{FnNSDecP}{NSDec$_*$}
  \SetKwProg{Fn}{Function}{:}{}

  \Input{Noisy syndrome $s=a_0+\partial^{\cA}a_1\in A_0=(\bF_2^\Gamma)^V$ arising from some code error $a_1\in A_1$ and syndrome error $a_0\in A_0$}
  \Output{Estimate $\tilde{a}_1\in A_1$ of the code error}

  \Fn{\FnNSDecP{$s$}}{
    Initialize $x\in(\bF_2^\Delta)^V$ so that for every $v\in V$, $x_v\in\bF_2^\Delta$ has minimum possible weight subject to $Zx_v=s_v$ \\
    \While{$\exists v\in V,\;y\in(\bF_2^\Delta)^V$ satisfying $y_v\in\ker Z$, $y_u=0\;\forall u\neq v$, and $|m(x+y)|<|m(x)|$}{
      Update $x_v\gets x_v+y_v$
    }
    \KwRet{any $\tilde{a}_1\in A_1$ such that for all $e=\{u,v\}\in E$ with $m(x)_e=0$, then $(\tilde{a}_1)_e=(x_u)_{r_{e,u}}=(x_v)_{r_{e,v}}$} \\
  }
\end{algorithm}

\begin{lemma}
  \label{lem:nsdecp}
  If $\lambda<\delta/16$, then there exist constants $\epsilon_0,\epsilon_1,\gamma>0$ such that for all $\ell\in\bN$, letting $\cA=\cA^{(\ell)}$ and $N_i=\dim A_i$, then Algorithm~\ref{alg:nsdecp} provides a $(\epsilon_0N_0,\epsilon_1N_1,\gamma)$-noisy-syndrome decoder for $\cA_*$ with running time $O(N_0+N_1)$.
\end{lemma}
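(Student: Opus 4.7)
The plan is to adapt the Sipser--Spielman expander-decoder analysis to the presence of the syndrome error $a_0$. For an error pair $(a_1,a_0)$ it will be convenient to write $a_{1,v}\in\bF_2^\Delta$ for the restriction of $a_1$ to the $\Delta$ edges incident to $v$ (under the labeling $r_{e,v}$) and $a_{0,v}\in\bF_2^\Gamma$ for the syndrome error at $v$, so that the input syndrome is $s_v=Za_{1,v}+a_{0,v}$. I will track the algorithm through the vertex deviations $\rho_v:=x_v+a_{1,v}$ (which always satisfy $Z\rho_v=a_{0,v}$) and the bad set $V^*:=\{v:\rho_v\neq 0\}$. For any edge $e=\{u,v\}$ with both endpoints outside $V^*$ one has $\rho_u=\rho_v=0$, hence $m(x)_e=0$ and $(\tilde a_1)_e=(x_v)_{r_{e,v}}=(a_1)_e$; so it suffices to prove the structural bound $|V^*|\leq O(|a_0|)$ at termination, which immediately gives $|\tilde a_1-a_1|\leq\Delta|V^*|\leq\gamma|a_0|$.

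First I would bound the initial configuration. Setting $V_a:=\{v:a_{0,v}\neq 0\text{ or }|a_{1,v}|\geq\delta\Delta/2\}$, the distance property of $\ker Z$ implies that every $v\notin V_a$ has the unique minimum-weight coset representative $x^{(0)}_v=a_{1,v}$. Consequently $V^*(0)\subseteq V_a$, $|m(x^{(0)})|\leq\Delta|V_a|$, and $|V_a|\leq|a_0|+4|a_1|/(\delta\Delta)$. Since each iteration strictly reduces $|m(x)|$ by at least one and only alters a single $\rho_v$, the total number of iterations is at most $\Delta|V_a|$ and $|V^*(t)|\leq(\Delta+1)|V_a|$ uniformly in $t$. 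Choosing $\epsilon_0,\epsilon_1$ small enough that $(\Delta+1)|V_a|\leq\alpha|V|$ for a fixed threshold $\alpha:=\delta/16$ is a single linear constraint on $\epsilon_0,\epsilon_1$ and is what closes the bootstrap.

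Next I would exploit the termination condition. At termination, for every $v$ the vector $w_v:=\rho_v+b_v$, where $(b_v)_i:=(\rho_u)_{r_{e,u}}$ across the $i$-th incident edge $e=\{v,u\}$, is a minimum-weight element of its $\ker Z$-coset. Partitioning $V^*=V_1^*\cup V_2^*$ by whether $a_{0,v}=0$ gives $|V_2^*|\leq|a_0|$ trivially. For $v\in V_1^*$, $\rho_v\in\ker Z\setminus\{0\}$ has weight $\geq\delta\Delta$; minimality then gives $|w_v|\leq|w_v+\rho_v|=|b_v|$, and combined with $|\rho_v|\leq|w_v|+|b_v|\leq 2|b_v|$ this forces $|b_v|\geq\delta\Delta/2$. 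Since $(b_v)_i\neq 0$ requires the neighbor along the $i$-th edge to lie in $V^*$, $|b_v|\leq W(v,V^*)$, and summing together with the expander mixing lemma (Lemma~\ref{lem:expmix}) under $|V^*|/|V|\leq\alpha$ yields
\begin{equation*}
\tfrac{\delta\Delta}{2}|V_1^*|\leq\sum_{v\in V_1^*}W(v,V^*)\leq W(V^*,V^*)\leq(\lambda+\alpha)\Delta|V^*|.
\end{equation*}
With $\lambda<\delta/16$ this gives $|V_1^*|\leq\tfrac14|V^*|$, whence $|V^*|\leq\tfrac43|V_2^*|\leq\tfrac43|a_0|$, and one may take $\gamma:=4\Delta/3$.

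The main obstacle is maintaining the hypothesis $|V^*|\leq\alpha|V|$ throughout the run, because outside that regime the expander mixing lemma is too weak to be useful. The resolution is the elementary observation that $|V^*|$ moves by at most one per iteration, combined with the iteration-count bound $\Delta|V_a|$; together these control $|V^*(t)|$ uniformly in terms of $|V_a|$ and hence in terms of $|a_0|,|a_1|$. The running time then follows by routine amortization: initialization solves a constant-size linear system per vertex in $O(1)$; maintaining a queue of vertices currently admitting a weight-reducing $y_v\in\ker Z$ (checkable in $O(2^{\Delta-\Gamma})=O(1)$ time per vertex) lets each iteration run in $O(1)$ amortized, and summing over $O(\Delta|V_a|)=O(N_0+N_1)$ iterations plus the $O(N_1)$ output phase gives the claimed $O(N_0+N_1)$ total.
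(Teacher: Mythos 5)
Your proposal is correct, and it establishes the same statement with the same three ingredients the paper uses: the distance $\geq\delta\Delta$ of the inner code $\ker Z$, the expander mixing lemma, and a bootstrap that keeps the bad vertex set below the mixing threshold throughout the run because each iteration both strictly decreases $|m(x)|$ and moves the bad set by at most one vertex. The difference is in how the terminal state is analyzed. The paper proves an explicit ``improving move exists'' claim (Claim~\ref{claim:redmis}): whenever $2|a_0|\leq|S(x)|\leq\lambda|V|$, some error-free bad vertex has few bad neighbors and flipping it to its true local view $\bar{x}_v$ reduces $|m(x)|$; applied contrapositively at termination this gives $|S(x)|<2|a_0|$ and $\gamma=2\Delta$. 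You instead exploit the full local-optimality at termination: each local mismatch vector $w_v=\rho_v+b_v$ is minimum-weight in its $\ker Z$-coset, and the coset-minimality inequality $|w_v|\leq|b_v|$ combined with $|\rho_v|\geq\delta\Delta$ forces $|b_v|\geq\delta\Delta/2$ for every bad vertex without a syndrome error, after which one application of the mixing lemma to the split $V^*=V_1^*\cup V_2^*$ yields $|V^*|\leq\tfrac43|a_0|$ and the slightly better constant $\gamma=4\Delta/3$. Your initialization bound is also argued differently (unique decoding of low-weight local views, giving $V^*(0)\subseteq V_a$ with $|V_a|\leq|a_0|+4|a_1|/(\delta\Delta)$, versus the paper's count of vertices touching the supports of $a_0,a_1$), and you use the threshold $\alpha=\delta/16$ rather than $\lambda$; both choices are compatible with $\lambda<\delta/16$. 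What your organization buys is that the greedy step need not be compared against the specific flip to $\bar{x}_v$ — local minimality alone suffices — and the constants come out marginally cleaner; what the paper's organization buys is that the same middle-range claim simultaneously certifies that the loop cannot stall anywhere above the $2|a_0|$ floor, so the terminal bound is read off directly. The running-time argument (constant-size coset search per vertex, a queue of improvable vertices updated only in the neighborhood of the flipped vertex, linearly many iterations) matches the paper's.
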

\begin{proof}
  Let
  \begin{equation*}
    \gamma = 2\Delta,
  \end{equation*}
  and fix any constants $\epsilon_0,\epsilon_1>0$ such that for all $\ell\in\bN$,
  \begin{align*}
    \epsilon_0 &\leq \frac{\lambda|V|}{2(\Delta+1)N_0} \\
    \epsilon_1 &\leq \frac{\lambda|V|}{4(\Delta+1)N_1}.
  \end{align*}
  By construction $\lambda,\Delta$ are constants and $|V|=\Theta(N_0)=\Theta(N_1)$, so the right hand sides above are indeed bounded below by some constants $\epsilon_0,\epsilon_1>0$.
  
  Fix $a_0\in A_0$ of weight $|a_0|\leq\epsilon_0 N_0$ and $a_1\in A_1$ of weight $|a_1|\leq\epsilon_1 N_1$, and let $s=a_0+\partial^{\cA}a_1$. We will first show that the output $\tilde{a}_1$ of \FnNSDecP{$s$} in Algorithm~\ref{alg:nsdecp} satisfies $|\tilde{a}_1-a_1|\leq\gamma|a_0|$. We will then analyze the running time.

  Below, we use the definition of a mismatch vector $m(x)$ described in Algorithm~\ref{alg:nsdecp}.

  Define $\bar{x}\in(\bF_2^\Delta)^V$ to contain the local views at every vertex of the edge labeling $a_1\in\bF_2^E$, so that for every edge $e=\{u,v\}\in E$, we let $\bar{x}_{v,r_{e,v}}=(a_1)_e$. Note that by definition $m(\bar{x})=0$. Also for $x\in(\bF_2^\Delta)^V$, define
  \begin{equation*}
    S(x) = \{v\in V:x_v\neq\bar{x}_v\}.
  \end{equation*}
  The following claim shows that if an appropriate $x\in(\bF_2^\Delta)^V$ has sufficient disagreement with $\bar{x}$, then some local view $x_v$ can be modified to reduce the mismatch $m(x)$.

  \begin{claim}
    \label{claim:redmis}
    Consider any $x\in(\bF_2^\Delta)^V$ such that $Zx_v=s_v$ for all $v\in V$, and such that
    \begin{equation}
      \label{eq:redmis}
      2|a_0| \leq |S(x)| \leq \lambda|V|.
    \end{equation}
    Then there exists some $v\in V$ and $y\in(\bF_2^\Delta)^V$ with $y_v\in\ker Z$, $y_u=0$ for all $u\neq v$, and $|m(x+y)|<|m(x)|$.
  \end{claim}
  \begin{proof}
    Let $S=S(x)$ for our fixed choice of $x$, and let $T=\{v\in V:(a_0)_v\neq 0\}$, so that $|T|\leq|a_0|$. Define $S'=\{v\in S\setminus T:W_G(v,S)<\delta\Delta/4\}$. If $|S'|=0$, then as $|S|\geq 2|a_0|\geq 2|T|$ by assumption, it holds for $\geq|S|/2$ elements $v\in S$ that $W_G(v,S)\geq\delta\Delta/4$, which implies that $W_G(S,S)\geq(|S|/2)\cdot\delta\Delta/4=(\delta/8)\Delta|S|$. But because $|S|\leq\lambda|V|$ by assumption, Lemma~\ref{lem:expmix} implies that $W_G(S,S)\leq 2\lambda\Delta|S|$, which gives a contradiction as $\delta/8>2\lambda$ by the assumption that $\lambda<\delta/16$. Thus the assumption that $|S'|=0$ was false, so there exists some $v\in S'$.

    Because $v\in S\setminus T$, we have $x_v\neq\bar{x}_v$ but $Zx_v=Z(\bar{x}_v+(a_0)_v)=Z\bar{x}_v$. Therefore $x_v-\bar{x}_v$ is a nonzero element of $\ker Z$, so $|x_v-\bar{x}_v|\geq\delta\Delta$ because $\ker Z$ is by definition a code of distance $\geq\delta\Delta$. Thus define $y\in(\bF_2^\Delta)^V$ such that $y_v=\bar{x}_v-x_v$ and $y_u=0$ for all $u\neq v$. Then the mismatches $m(x+y)$ and $m(x)$ can only disagree on edges incident to vertex $v$. Furthermore, by definition $(x+y)_v=\bar{x}_v$, and the definition of $S'$ ensures that out of the $\Delta$ neighbors $u$ of $v$, at least $(1-\delta/4)\Delta$ of these neighbors must have $(x+y)_u=x_u=\bar{x}_u$ and thus $m(x+y)_{\{u,v\}}=m(\bar{x})_{\{u,v\}}=0$, where the final equality applies the fact that $m(\bar{x})=0$.

    Thus $m(x+y)$ has weight $<\delta\Delta/4$ on the edges incident to vertex $v$. Meanwhile, $m(x)$ agrees with $m(x+y)$ on all edges except those incident to $v$, and because $m(x)=m(x+y-y)=m(x+y)-m(y)=m(x+y)-y$, it follows that $m(x)$ disagrees with $m(x+y)$ on $\geq|y|=|\bar{x}_v-x_v|\geq\delta\Delta$ components. Thus $m(x)$ has weight $\geq 3\delta\Delta/4$ on the edges incident to vertex $v$, so
    \begin{equation*}
      |m(x)| \geq |m(x+y)|+3\delta\Delta/4-\delta\Delta/4 = |m(x+y)|+\delta\Delta/2 > |m(x+y)|,
    \end{equation*}
    as desired.
  \end{proof}

  We now apply Claim~\ref{claim:redmis} to show that the output of Algorithm~\ref{alg:nsdecp} satisfies $|\tilde{a}_1-a_1|\leq\gamma|a_0|$. By definition, the while loop in Algorithm~\ref{alg:nsdecp} reduces $|m(x)|$ by at least $1$ in each step, and only changes $x_v$ for a single vertex $v$ in each step. The initial value of $x$ satisfies $|S(x)|\leq|a_0|+2|a_1|$, as for every vertex $v$ that is not incident to any vertex in the support of $a_0$ nor to any edge in the support of $a_1$, then $x_v=\bar{x}_v=0$. Therefore the initial value of $x$ satisfies $|m(x)|\leq\Delta|S(x)|\leq\Delta(|a_0|+2|a_1|)$, so the value of $x$ at every step in the algorithm must satisfy $|S(x)|\leq(\Delta+1)(|a_0|+2|a_1|)\leq\lambda|V|$, where this final inequality applies the assumption that $|a_i|\leq\epsilon_iN_i$ along with the definitions of $\epsilon_0,\epsilon_1$ above.

  Thus the upper bound in~(\ref{eq:redmis}) holds for $x$ at every step in Algorithm~\ref{alg:nsdecp}, so by Claim~\ref{claim:redmis}, the while loop in Algorithm~\ref{alg:nsdecp} will only terminate when the lower bound in~(\ref{eq:redmis}) is violated. That is, the final value of $x$ in Algorithm~\ref{alg:nsdecp} satisfies $|S(x)|<2|a_0|$. Therefore there are at most $\Delta\cdot 2|a_0|$ edges $e=\{u,v\}\in E$ for which at least one of the equalities $(x_u)_{r_{e,u}}=(\bar{x}_u)_{r_{e,u}}=(a_1)_e$ or $(x_v)_{r_{e,v}}=(\bar{x}_v)_{r_{e,v}}=(a_1)_e$ fails to hold; if both of these equations hold, then the output $\tilde{a}_1$ of Algorithm~\ref{alg:nsdecp} must satisfy $(\tilde{a}_1)_e=(a_1)_e$. Thus $|\tilde{a}_1-a_1|\leq 2\Delta|a_0|=\gamma|a_0|$, as desired.

  It only remains to analyze the running time of Algorithm~\ref{alg:nsdecp}. Let $N^{\cA}=N_0+N_1$, and observe that as $\Delta=O(1)$ and $N^{\cA}=\Theta(N_0)=\Theta(N_1)=\Theta(|V|)=\Theta(|E|)$, all vectors that the algorithm computes have length $O(N^{\cA})$. The initialization of $x$ in Algorithm~\ref{alg:nsdecp} by definition performs $O(N^{\cA})$ constant-time operations, and thus takes $O(N^{\cA})$ total time. We can implement the while loop in Algorithm~\ref{alg:nsdecp} by maintaining a list of all vertices $v\in V$ for which there exists some $y\in(\bF_2^\Delta)^V$ with $y_v\in\ker Z$, $y_u=0$ for all $u\neq v$, and $|m(x+y)|<|m(x)|$. It takes $O(N^{\cA})$ time to initialize this list. In each iteration of the while loop that updates $x_v$ for a vertex $v$, we only need to update any list entries corresponding to the $w=O(1)$ neighbors $u$ of $v$. Thus each iteration of the while loop takes constant time, and there are at most $|E|=O(N^{\cA})$ iterations, so the loop takes a total of $O(N^{\cA})$. Finally, the computation of $\tilde{a}_1$ from the final value of $x$ by definition takes $O(N^{\cA})$ time. Thus Algorithm~\ref{alg:nsdecp} as a whole takes $O(N^{\cA})$ time, as desired.
\end{proof}

\begin{algorithm}[h]
  \caption{Noisy-syndrome decoder for 2-term cochain complex ${\cA^{(\ell)}}^*=\cA^*=(A^1\xrightarrow{\delta^{\cA}}A^0)$ defined in Section~\ref{sec:defA}.}
  \label{alg:nsdecd}
  \SetKwInOut{Input}{Input}
  \SetKwInOut{Output}{Output}

  \SetKwFunction{FnNSDecD}{NSDec$^*$}
  \SetKwProg{Fn}{Function}{:}{}

  \Input{Noisy syndrome $s=a^1+\delta^{\cA}a^0\in A^1=\bF_2^E$ arising from some code error $a^0\in A^0$ and syndrome error $a^1\in A^1$}
  \Output{Estimate $\tilde{a}^0\in A_0$ of the code error}

  \Fn{\FnNSDecD{$s$}}{
    Initialize $x=0\in A^0=(\bF_2^\Gamma)^V$ \\
    \While{$\exists v\in V,\;y\in(\bF_2^\Gamma)^V$ such that $y_u=0\;\forall u\neq v$, and $|s-\delta^{\cA}(x+y)|<|s-\delta^{\cA}(x)|$}{
      Update $x_v\gets x_v+y_v$
    }
    \KwRet{$\tilde{a}^0:=x$} \\
  }
\end{algorithm}

\begin{lemma}
  \label{lem:nsdecd}
  If $\lambda<\delta/16$, then there exist constants $\epsilon^1,\epsilon^0,\gamma>0$ such that for all $\ell\in\bN$, letting $\cA=\cA^{(\ell)}$ and $N_i=\dim A^i$, then Algorithm~\ref{alg:nsdecd} provides a $(\epsilon^1N_1,\epsilon^0N_0,\gamma)$-noisy-syndrome decoder for $\cA^*$ with running time $O(N_0+N_1)$.
\end{lemma}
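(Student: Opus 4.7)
The plan is to mirror the greedy ``flip'' analysis of Lemma~\ref{lem:nsdecp}, adapted to the cochain structure. Given any intermediate state $x \in A^0$ produced by Algorithm~\ref{alg:nsdecd}, I will write $e^0 := a^0 - x$ for the current residual code error, so that the residual syndrome is $r := s - \delta^{\cA}(x) = a^1 + \delta^{\cA}(e^0)$, and I will track the bad set $S := \{v \in V : e^0_v \neq 0\}$. The crucial observation is that for every $v \in S$, the vector $Z^\top e^0_v$ is a nonzero element of $\im Z^\top$, which by construction has distance $\geq \delta\Delta$; hence $v$ is incident to at least $\delta\Delta$ ``corrupted'' edges $e$ with $(Z^\top e^0_v)_{r_{e,v}} = 1$, namely exactly those edges whose syndrome bit would toggle under the flip $y_v = e^0_v$.

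The core claim I aim to establish is that whenever $|S|$ lies between a suitable constant multiple of $|a^1|$ and $\lambda |V|$, some single-vertex flip strictly reduces $|r|$. I will call $v \in S$ \emph{clean} if $W_G(v, S \setminus \{v\}) + |\{e \ni v : e \in \supp(a^1)\}| < \delta\Delta / 2$. For a corrupted edge $\{u,v\}$ incident to such a clean $v$, the flip $y_v = e^0_v$ would increase the syndrome bit only when $(a^1)_e + (Z^\top e^0_u)_{r_{e,u}} = 1$, which forces $u \in S$ or $e \in \supp(a^1)$; cleanness therefore bounds the number of such ``problematic'' corrupted edges at $v$ by strictly less than $\delta\Delta/2$, leaving strictly more than $\delta\Delta/2$ corrupted edges already carrying a syndrome bit of $1$. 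Hence flipping $y_v = e^0_v$ strictly decreases $|r|$. A Markov-style union bound using the expander mixing lemma (Lemma~\ref{lem:expmix}) shows that the number of non-clean $v \in S$ is at most $W_G(S,S)/(\delta\Delta/4) + 2|a^1|/(\delta\Delta/4) \leq 8\lambda |S|/\delta + 8|a^1|/(\delta\Delta)$, which is strictly less than $|S|$ as soon as $\lambda < \delta/16$ and $|S| \geq 32|a^1|/(\delta\Delta)$, so a clean $v$ exists.

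To close the loop I will maintain the invariant $|S| \leq \lambda |V|$ throughout the run. Starting from $x = 0$ gives $|S| \leq |a^0|$ initially, and each iteration changes $|S|$ by at most one since only a single $x_v$ is modified; because each iteration strictly decreases $|r|$ and the initial residual syndrome satisfies $|r| \leq |a^1| + |\delta^{\cA}(a^0)| \leq |a^1| + \Delta |a^0|$, the total number of iterations is at most $|a^1| + \Delta |a^0|$. Hence $|S| \leq (\Delta+1)|a^0| + |a^1|$ at all times, which can be forced below $\lambda |V|$ by choosing $\epsilon^0, \epsilon^1$ small enough in terms of $\lambda$, $\Delta$, and the constant ratios $|V|/N_0$, $|V|/N_1$. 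At termination no beneficial flip exists, so by the contrapositive of the key claim $|S| < 32|a^1|/(\delta\Delta)$, yielding $|\tilde{a}^0 - a^0| = |e^0| \leq \Gamma|S| \leq \gamma |a^1|$ for $\gamma := 32\Gamma/(\delta\Delta)$.

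For the running time, each vertex admits only $2^\Gamma = O(1)$ candidate flips, each evaluable in $O(\Delta) = O(1)$ time by examining the syndrome bits on the $\Delta$ incident edges; a flip at $v$ only affects flippability of $v$ and its $O(\Delta)$ neighbors, so maintaining an auxiliary list of currently flippable vertices gives $O(1)$ amortized time per iteration, and the $O(|V|)$ iterations sum to $O(N_0 + N_1)$ total. The part I expect to need the most care is calibrating the clean-vertex threshold so that the single spectral-gap condition $\lambda < \delta/16$ simultaneously handles $S$-self-edges and $a^1$-incidence edges while still leaving a lower bound on $|S|$ proportional to $|a^1|$; splitting the budget evenly at $\delta\Delta/4$ each is the natural choice that makes everything fit.
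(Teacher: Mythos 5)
Your proposal is correct and follows essentially the same route as the paper's proof: a greedy single-vertex flip analysis tracking the disagreement set $S$, using injectivity and distance $\geq\delta\Delta$ of $\im Z^\top$ to guarantee many corrupted edges at each bad vertex, the expander mixing lemma (with the invariant $|S|\leq\lambda|V|$ maintained via the iteration-count bound $|a^1|+\Delta|a^0|$) to find a vertex whose flip strictly decreases the residual syndrome weight, and the contrapositive at termination to bound $|S|=O(|a^1|)$. The only difference is bookkeeping: you locate a good vertex by a combined Markov/union-bound count against the threshold $\delta\Delta/2$ (yielding the cutoff $32|a^1|/(\delta\Delta)$ and $\gamma=32\Gamma/(\delta\Delta)$), whereas the paper excludes the $a^1$-touched vertices $T$ outright and argues by contradiction with cutoff $4|a^1|$ and $\gamma=4\Delta$ — an immaterial difference since constants are not optimized.
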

\begin{proof}
  Let
  \begin{equation*}
    \gamma = 4\Delta,
  \end{equation*}
  and fix any constants $\epsilon_0,\epsilon_1>0$ such that for all $\ell\in\bN$,
  \begin{align*}
    \epsilon^1 &\leq \frac{\lambda|V|}{2N_1} \\
    \epsilon^0 &\leq \frac{\lambda|V|}{2(\Delta+1)N_0}.
  \end{align*}
  By construction $\lambda,\Delta$ are constants and $|V|=\Theta(N_0)=\Theta(N_1)$, so the right hand sides above are indeed bounded below by some constants $\epsilon^0,\epsilon^1>0$.

  Fix $a^1\in A^1$ of weight $|a^1|\leq\epsilon^1N_1$ and $a^0\in A^0$ of weight $|a^0|\leq\epsilon^0N_0$, and let $s=a^1+\delta^{\cA}a^0$. We will first show that the output $\tilde{a}^0$ of \FnNSDecD{$s$} in Algorithm~\ref{alg:nsdecd} satisfies $|\tilde{a}^0-a^0|\leq\gamma|a^1|$. We will then analyze the running time.

  For $x\in(\bF_2^\Gamma)^V$, let
  \begin{equation*}
    S(x) = \{v\in V:x_v\neq a^0_v\}
  \end{equation*}
  denote the set of vertices at which $x$ disagrees with $a^0$. Also, for a syndrome $s\in A^1=\bF_2^E$ and a vertex $v\in V$, let $s_v\in\bF_2^\Delta$ denote the restriction of $s$ to the edges incident to vertex $v$.

  The following claim shows that if a given $x\in(\bF_2^\Gamma)^V$ has sufficient disagreement with $a^0$, then some local view $x_v$ can be modified to reduce the syndrome weight $|s-\delta^{\cA}(x)|$.

  \begin{claim}
    \label{claim:redsyn}
    Consider any $x\in(\bF_2^\Gamma)^V$ such that
    \begin{equation}
      \label{eq:redsyn}
      4|a^1| \leq |S(x)| \leq \lambda|V|.
    \end{equation}
    Then there exists some $v\in V$ and $y\in(\bF_2^\Gamma)^V$ such that $y_u=0$ for all $u\neq 0$, and $|s-\delta^{\cA}(x+y)|<|s-\delta^{\cA}(x)|$.
  \end{claim}
  \begin{proof}
    Let $S=S(x)$ for our fixed choice of $x$, and let $T\subseteq V$ be the set of all vertices that are incident to some edge in the support of $a^1\in\bF_2^E$, so that $|T|\leq 2|a^1|$. Define $S'=\{v\in S\setminus T:W_G(v,S)<\delta\Delta/4\}$. If $|S'|=0$, then as $|S|\geq 4|a^1|\geq 2|T|$ by assumption, it holds for $\geq|S|/2$ elements $v\in S$ that $W_G(v,S)\geq\delta\Delta/4$, which implies that $W_G(S,S)\geq(|S|/2)\cdot\delta\Delta/4=(\delta/8)\Delta|S|$. But because $|S|\leq\lambda|V|$ by assumption, Lemma~\ref{lem:expmix} implies that $W_G(S,S)\leq 2\lambda\Delta|S|$, which gives a contradiction as $\delta/8>2\lambda$ by the assumption that $\lambda<\delta/16$. Thus the assumption that $|S'|=0$ was false, so there exists some $v\in S'$.

    Because $v\in S$, we have $x_v\neq a^0_v$, so $Z^\top x_v$ and $Z^\top a^0_v$ are distinct codewords of $\im Z^\top$, and thus $|Z^\top x_v-Z^\top a^0_v|\geq\delta\Delta$ because $\im Z^\top$ is by definition a code of distance $\geq\delta\Delta$. Define $y\in(\bF_2^\Gamma)^V$ such that $y_v=a^0_v-x_v$ and $y_u=0$ for all $u\neq v$. Then by definition the noisy syndromes $s-\delta^{\cA}(x+y)$ and $s=\delta^{\cA}(x)$ can only disagree on edges incident to vertex $v$. Furthermore, by definition $(x+y)_v=a^0_v$, and because $v\in S'$ so that $v\notin T$, we have $a^1_v=0$, so
    \begin{align*}
      (s-\delta^{\cA}(x+y))_v
      &= a^1_v+(\delta^{\cA}a^0)_v-(\delta^{\cA}(x+y))_v \\
      &= \delta^{\cA}(a^0-(x+y))_v \\
      &= Z^\top(a^0_v-(x+y)_v)+\sum_{u\in V:\{u,v\}\in E}(Z^\top(a^0_u-(x+y)_u))_{r_{\{u,v\},u}}\cdot\1_{r_{\{u,v\},v}} \\
      &= \sum_{u\in V:\{u,v\}\in E}(Z^\top(a^0_u-x_u))_{r_{\{u,v\},u}}\cdot\1_{r_{\{u,v\},v}}
    \end{align*}
    By the definition of $S'$, fewer than $\delta\Delta/4$ of the terms in the sum on the right hand side above can be nonzero.

    Thus we have shown that $|(s-\delta^{\cA}(x+y))_v|<\delta\Delta/4$, that is, $s-\delta^{\cA}(x+y)$ has weight $<\delta\Delta/4$ on the edges incident to vertex $v$. Meanwhile, $s-\delta^{\cA}(x)$ agrees with $s-\delta^{\cA}(x+y)$ on all edges except those incident to $v$, and because $s-\delta^{\cA}(x)=s-\delta^{\cA}(x+y)+\delta^{\cA}(y)$, it follows that $s-\delta^{\cA}(x)$ disagrees with $s-\delta^{\cA}(x+y)$ on $\geq|\delta^{\cA}(y)|=|Z^\top y_v|\geq\delta\Delta$ components. Thus $s-\delta^{\cA}(x)$ has weight $\geq 3\delta\Delta/4$ on the edges incident to vertex $v$, so
    \begin{align*}
      |s-\delta^{\cA}(x)|
      &\geq |s-\delta^{\cA}(x+y)|+3\delta\Delta/4-\delta\Delta/4 \\
      &= |s-\delta^{\cA}(x+y)|+\delta\Delta/2 \\
      &> |s-\delta^{\cA}(x+y)|,
    \end{align*}
    as desired.
  \end{proof}

  We now apply Claim~\ref{claim:redsyn} to show that the output of Algorithm~\ref{alg:nsdecd} satisfies $|\tilde{a}^0-a^0|\leq\gamma|a^1|$. By definition, the while loop in Algorithm~\ref{alg:nsdecd} reduces $|s-\delta^{\cA}(x)|$ by at least $1$ in each step, and only changes $x_v$ for a single vertex $v$ in each step. As the algorithm initializes $x=0$, the initial value of $s-\delta^{\cA}(x)$ is simply $s$, so the while loop runs for a total of at most $|s|\leq|a^1|+\Delta|a^0|$ steps. Therefore as $|S(x)|$ is initially $|S(0)|\leq|a^0|$, and $|S(x)|$ increases by at most $1$ for a total of at most $|a^1|+\Delta|a^0|$ of the while loop, it always holds that $|S(x)|\leq|a^1|+(\Delta+1)|a^0|\leq\lambda|V|$, where this final inequality holds by the assumption that $|a^i|\leq\epsilon^iN_i$ for $\epsilon^0,\epsilon^1$ defined above.

  Thus the upper bound in~(\ref{eq:redsyn}) holds for $x$ at every step in Algorithm~\ref{alg:nsdecd}, so by Claim~\ref{claim:redsyn}, the while loop in Algorithm~\ref{alg:nsdecd} will only terminate when the lower bound in~(\ref{eq:redsyn}) is violated. That is, the final value of $x$ in Algorithm~\ref{alg:nsdecd} satisfies $|S(x)|<4|a^1|$. Thus Algorithm~\ref{alg:nsdecd} returns $\tilde{a}^0$ satisfying $|S(\tilde{a}^0)|<4|a^1|$, which by definition implies that $|\tilde{a}^0-a^0|\leq\Delta\cdot|S(\tilde{a}^0)|\leq 4\Delta|a^1|=\gamma|a^1|$, as desired.

  It only remains the analyze the running time of Algorithm~\ref{alg:nsdecd}. This analysis is similar to that of Algorithm~\ref{alg:nsdecp} in the proof of Lemma~\ref{lem:nsdecp}. Specifically, again let $N^{\cA}=N_0+N_1$, so that $\Delta=O(1)$ and $N^{\cA}=\Theta(N_0)=\Theta(N_1)=\Theta(|V|)=\Theta(|E|)$, which implies that all vectors that the algorithm computes have length $O(N^{\cA})$. We only need to show that the while loop in Algorithm~\ref{alg:nsdecp} can be implemented in time $O(N^{\cA})$. But similarly as the proof of Lemma~\ref{lem:nsdecp}, we can maintain a list of all vertices $v\in V$ for which there exists some $y\in(\bF_2^\Gamma)^V$ with $y_u=0$ for all $u\neq v$ such that $|s-\delta^{\cA}(x+y)|<|s-\delta^{\cA}(x)|$. Initializing this list takes time $O(N^{\cA})$, and then for each of the $O(N^{\cA})$ steps in the while loop, we just need to spend constant time updating entries of the list associated to neighbors of whichever vertex $v$ is updated in that step of the loop. Thus Algorithm~\ref{alg:nsdecp} has overall running time $O(N^{\cA})$, as desired.
\end{proof}

We have now shown all the necessary results to prove Proposition~\ref{prop:classtan}.

\begin{proof}[Proof of Proposition~\ref{prop:classtan}]
  The desired chain complexes $\cA^{(\ell)}$ are given in Section~\ref{sec:defA}, where we choose $0<\lambda<\delta/16$ to be a sufficiently small constant such that Lemma~\ref{lem:ctexp} implies that both $\cA^{(\ell)}_*$ and ${\cA^{(\ell)}}^*$ are $(\alpha,\beta)$-expanding for some constants $\alpha,\beta>0$. Lemma~\ref{lem:ctexp} also implies that each $\cA=\cA^{(\ell)}$ is a well-defined chain complex over $R_\ell$ with locality $w\leq\Delta=\Theta(1)$.

  The construction in Section~\ref{sec:defA} ensures that $n_0=\Gamma|V_0|$ and $n_1=|E_0|$, where $G_0=(V_0,E_0)$ is the $\Delta$-regular base graph described in Proposition~\ref{prop:elift}. Thus by Proposition~\ref{prop:elift} we have $n_0,n_1=\Theta(|V_0|)=\Theta(|E_0|)=\Theta(\log\ell)$, while by construction $n_1-n_0=\Delta|V_0|/2-\Gamma|V_0|=(\Delta/2-\lfloor\Delta/4\rfloor)|V_0|=\Theta(|V_0|)=\Theta(\log\ell)$.

  Finally, Lemma~\ref{lem:nsdecp} and Lemma~\ref{lem:nsdecd} imply that $\cA_*$ and $\cA^*$ respectively are $(e_0,e_1,\gamma)$-noisy-syndrome decodable for $e_i=\Theta(n_i\ell)=\Theta(\ell\log\ell)$ and $\gamma=O(1)$.

  Thus we have shown that $\cA=\cA^{(\ell)}$ satisfies all desired properties in the proposition.
\end{proof}

\end{document}